\numberwithin{equation}{section}
\newtheorem{theorem}{Theorem}[section]   
\newtheorem{definition}[theorem]{Definition}
\newtheorem{proposition}[theorem]{Proposition}
\newtheorem{lemma}[theorem]{Lemma}
\newtheorem{corollary}[theorem]{Corollary}
\newtheorem{example-notation}[theorem]{Example-Notation}
\newtheorem{remark}[theorem]{Remark}
\def\d{\partial}
\def\f{\frac}
\def\inw{\in\{1,\dots,n\}}
\def\bs{\boldsymbol}
\newcommand{\eqa}{\begin{eqnarray}}
\newcommand{\eeqa}{\end{eqnarray}}
\newcommand{\beq}{\begin{equation}}
\newcommand{\eeq}{\end{equation}}
\begin{document}
\title[Hamiltonian formalism for non-diagonalisable systems]{Hamiltonian formalism for non-diagonalisable systems of hydrodynamic type}
\author{Paolo Lorenzoni}
\address{P.~Lorenzoni:\newline Dipartimento di Matematica e Applicazioni, Universit\`a di Milano-Bicocca, \newline
Via Roberto Cozzi 55, I-20125 Milano, Italy and INFN sezione di Milano-Bicocca}
\email{paolo.lorenzoni@unimib.it}
\author{Sara Perletti}
\address{S.~Perletti:\newline Dipartimento di Matematica e Applicazioni, Universit\`a di Milano-Bicocca, \newline
Via Roberto Cozzi 55, I-20125 Milano, Italy and INFN sezione di Milano-Bicocca}
\email{sara.perletti1@unimib.it}
\author{Karoline van Gemst}
\address{K.~van Gemst:\newline Dipartimento di Matematica e Applicazioni, Universit\`a di Milano-Bicocca, \newline
Via Roberto Cozzi 55, I-20125 Milano, Italy and INFN sezione di Milano-Bicocca}
\email{karoline.vangemst@unimib.it}

\begin{abstract}
We study the system of first  order PDEs for pseudo-Riemannian metrics governing the Hamiltonian formalism  for systems of hydrodynamic type. In the diagonal setting the integrability conditions ensure the compatibility of this system and, thanks to a classical theorem of Darboux, the existence of a family of solutions depending  on functional parameters. In this paper we study the generalisation of this result to a class of non-diagonalisable systems  of hydrodynamic type that naturally generalises Tsarev's integrable diagonal systems. 
\end{abstract}    

\maketitle

\tableofcontents

\section*{Introduction}
\hspace{-1em}This  paper is  the third of a series of papers (the first two being \cite{LPVG} and \cite{LPVGhodograph}) devoted to the generalisation of Tsarev's theory of integrability  
to non-diagonalisable systems of hydrodynamic type 
\beq\label{SHT-intro}
u^i_t=V^i_j(u)u^j_x,\qquad i\inw,
\eeq
defined by a block-diagonal matrix $V=\text{diag}(V_{(1)} , \dots, V_{(r)})$, $r\leq n$, whose generic $\alpha^{\text{th}}$ block, of size $m_\alpha$, is of the lower-triangular Toeplitz form
\begin{equation}\label{toeplitz}
	V_{(\alpha)}=
	\begin{bmatrix}
		v^{1(\alpha)} & 0 & \dots & 0\cr
		v^{2(\alpha)} & v^{1(\alpha)} & \dots & 0\cr
		\vdots & \ddots & \ddots & \vdots\cr
		v^{m_\alpha(\alpha)} & \dots & v^{2(\alpha)} & v^{1(\alpha)}
	\end{bmatrix}.
\end{equation}
Non-diagonalisable systems of hydrodynamic type \eqref{SHT-intro} arise in numerous areas. Among them we mention generalised integrable  systems  of hydrodynamic type associated with the Frölicher–Nijenhuis bicomplex  \cite{LM}, non-semisimple (bi-)flat F-manifolds and Dubrovin-Frobenius manifolds (see, for instance, \cite{LPR,LP,LP22,LP23} and references therein) and Hamiltonian and bihamiltonian structures of hydrodynamic type and their deformations (see \cite{DVLS}). It is also worth mentioning \cite{FP,KK,KO,VF, XF}. 
\newline
\newline
Block-diagonal systems of the form \eqref{SHT-intro} naturally appear in the theory of regular F-manifolds. In this setting they can be written as
\beq\label{Fsys}
u^i_t=c^i_{jk}X^ju^k_x,\qquad i\inw,
\eeq
where  $X^j$ are the components of a vector field and $c^i_{jk}$ are  the structure functions of a commutative associative product satisfying Hertling-Manin conditions \cite{HM}. In the regular case there exists a distinguished set of coordinates, introduced in \cite{DH}, such that
\begin{equation}\label{strconsts_reg}
	c^{i(\alpha)}_{j(\beta)k(\gamma)}=c^{m_1+\dots+m_{\alpha-1}+i}_{m_1+\dots+m_{\beta-1}+j,m_1+\dots+m_{\gamma-1}+k}=\delta^\alpha_\beta\delta^\alpha_\gamma\delta^i_{j+k-1},
\end{equation}
for all $\alpha,\beta,\gamma\in\{1,\dots,r\}$ and  $i\in\{1,\dots,m_\alpha\}$, $j\in\{1,\dots,m_\beta\}$, ${k\in\{1,\dots,m_\gamma\}}$. In these coordinates, the system \eqref{Fsys} takes a block diagonal form with each block being of lower triangular Toeplitz type, \eqref{toeplitz}.
\newline
\newline
In \cite{LPVG} it was proved that, in the regular setting, the existence of  $n$ commuting flows of the form \eqref{Fsys} allows one to  introduce a connection $\nabla$ satisfying
the two equivalent conditions 
\beq\label{shc-intro}
R^s_{lmi}c^j_{ks}+R^s_{lik}c^j_{ms}+R^s_{lkm}c^j_{is}=0,\qquad R^j_{skl}c^s_{mi}+R^j_{smk}c^s_{li}+R^j_{slm}c^s_{ki}=0.
\eeq
This leads to the study of a class of F-manifolds introduced in \cite{LPR} (see also \cite{LP}) which goes by the name of \emph{F-manifolds with compatible connection (and flat unit)}. These are  F-manifolds equipped with a torsionless  connection $\nabla$ which is  compatible with the product and satisfying the condition \eqref{shc-intro}. In other words, in the regular case,  an integrable system of hydrodynamic type locally defines the structure of an F-manifold with compatible connection (and flat unit).
\newline
\newline
By exploiting this relation  we generalised, in \cite{LPVGhodograph},  Tsarev's  integration method to regular non-diagonalisable systems showing that  solutions  of  the  
algebraic system
\[x\,\delta^{i}_1+t\,v^{i(\alpha)}=w^{i(\alpha)},\qquad\alpha\in\{1 , \dots, r\},\,i\in\{1 , \dots, m_{\alpha}\},\]
where $w^{i(\alpha)}$ are the entries  of  the block-diagonal matrix $W=\text{diag}(W_{(1)} , \dots, W_{(r)})$ defining a symmetry of the system \eqref{SHT-intro}, define  solutions to the system \eqref{SHT-intro}.
\newline
\newline
In  the same paper we observed that, in  order to fully extend Tsarev's theory,  one  is forced 
 to  restrict to  the subclass of  systems  defined by functions,   $v^{i(\alpha)}$, depending  on a \textit{subset} of the variables $(u^1 , \dots, u^n)$. More precisely, 
\[v^{i(\alpha)}=v^{i(\alpha)}(u^{1(1)} , \dots, u^{k_1(1)} , \dots, u^{1(r)} , \dots, u^{k_r(r)}),\]
where $k_s=m_s$ if $i>m_s$ and  $k_s=i$ otherwise.
Under this additional assumption, the  linear system for the symmetries of \eqref{SHT-intro} can be reduced to a sequence of subsystems satisfying the hypothesis of a classical theorem of  Darboux. It turns out that the symmetries depends on  $n$  arbitrary functions  of  a single variable. Moreover, like in the diagonal case, locally any solution of  \eqref{SHT-intro} can be defined in this way.
\newline
\newline
This  paper  is devoted to the study of the Hamiltonian formalism for systems of the form
\eqref{SHT-intro}. According to Dubrovin and Novikov, local Hamiltonian operators for systems of hydrodynamic type are defined by a flat pseudo-Riemannian metric and by the associated  Christoffel symbols. The metrics defining the local Hamiiltonian operators for the system \eqref{SHT-intro} are  solutions of the  system
\begin{eqnarray}
\label{DN1intro}
g_{ik}V^k_j&=&g_{jk}V^k_i,\\
\label{DN2intro}
\tilde\nabla_iV^k_j&=&\tilde\nabla_jV^k_i,
\end{eqnarray}
where $\tilde\nabla$ is  the Levi-Civita connection of the metric $g$. Non-flat solutions of
 the  above system define  non-local Hamiltonian operators  provided that  $g$ satisfies Gauss-Peterson-Mainardi-Codazzi type conditions. In the diagonal case,  assuming the characteristic velocities to be  pairwise distinct, the above system implies that the metric $g$ is diagonal with it's components satisfying
\begin{equation}\label{metrics_intro}
\partial_j g_{ii}=2\f{\d_jv^i}{v^j-v^i}g_{ii},\qquad i\ne j.
\end{equation}
Remarkably,  compatibility of the system \eqref{metrics_intro} coincides with Tsarev's integrability conditions, also called semi-Hamiltonian conditions.
\newline
\newline
In the non-diagonalisable regular case, the Dubrovin-Novikov-Tsarev system (\ref{DN1intro}, \ref{DN2intro})  does not admit solutions in general. However, we show the following. 
\newline
\begin{itemize}
\item
For the $n$-component systems \eqref{Fsys}  associated with F-manifolds with compatible connection and flat unity, the Dubrovin-Novikov-Tsarev system is equivalent  to the system
\begin{equation}\label{eq_for_g_intro}
c^r_{hj}\nabla_i\theta_r+c^r_{hi}\nabla_j\theta_r-c^r_{ij}\nabla_h\theta_r=c^r_{ij}c^s_{hr}\nabla_e\theta_s-\theta_r\nabla_hc^r_{ij}.
\end{equation}
\newline
\item Under the additional Darboux assumptions ensuring the completeness of the symmetries,
 in the case of  $r$ blocks  of size $m_1 , \dots, m_r$, the metrics obtained from the Dubrovin-Novikov-Tsarev system depend on $r$ arbitrary functions of a single variable
\[f_{1(1)}(u^{1(1)}) , \dots, f_{1(r)}(u^{1(r)})\]
and $n-r$ functions of two variables 
\[\quad\quad\quad f_{2(1)}(u^{1(1)},u^{2(1)}) ,..., f_{m_1(1)}(u^{1(1)},u^{2(1)}) , ..., f_{2(r)}(u^{1(r)},u^{2(r)}),..., f_{m_r(r)}(u^{1(r)},u^{2(r)}),\]
for which a block only contributes if its size is greater than 1.
\end{itemize}
The paper is organised as follows. In Section 1, we introduce F-manifolds with compatible connection and the associated integrable hierarchies. This section is mainly based on the results of \cite{LPVG, LPVGhodograph} (in the appendix we prove that one of the result of \cite{LPVG} can be improved removing a technical assumption). In Section 2, we study the associated system for densities of conservation laws $h$
\begin{equation}\label{dcl_intro}
\partial_j\partial_ih=\Gamma_{ji}^s\partial_sh+c^s_{ij}e^t\nabla_s\partial_th,
	 \end{equation} 
and we show that the right-hand side of the system satisfies the compatibility condition
\[\partial_k(\Gamma_{ji}^s\partial_sh+c^s_{ij}e^t\nabla_s\partial_th)=\partial_j(\Gamma_{ki}^s\partial_sh+c^s_{ik}e^t\nabla_s\partial_th)\]
on  the  solutions of the system. In Section  3, following \cite{LPVGhodograph}, we introduce a class of systems, which we call of \emph{Darboux-Tsarev} type, satisfying  Darboux's  conditions  ensuring the completeness of the symmetries. Under these  additional assumptions, we show that  the general solution of the system \eqref{dcl_intro} depends on $n$  arbitrary functions of a single variable.  Section 4 is devoted to the Dubrovin-Novikov-Tsarev system. We first recall its form in the diagonal case, then we discuss  the general regular case and we prove its compatibility under Darboux's  conditions. The proof is given in canonical coordinates and requires the use of the additional properties of the Christoffel symbols coming from Darboux's  conditions. Section 4 also contains examples illustrating the theory.
\newline
\newline
\noindent{\bf Acknowledgements}. The authors are supported by funds of INFN (Istituto  Nazionale di Fisica Nucleare) by IS-CSN4 Mathematical Methods of Nonlinear  Physics. Authors are also thankful to GNFM (Gruppo Nazionale di Fisica Matematca) for supporting activities that contributed to the research reported in this paper. This research has received funding by the Italian PRIN 2022 (2022TEB52W)  \emph{The charm of integrability: from nonlinear waves to random matrices}.
 
\section{F-manifolds with compatible connection and  associated integrable hierarchies}\label{SectionFmnfs}
\subsection{F-manifolds}
F-manifolds were introduced by Hertling and Manin in \cite{HM} and are defined as follows.
\begin{definition}\label{defFmani}
An \emph{F-manifold} is a manifold $M$ equipped with
\begin{itemize}
\item[(i)] a commutative associative bilinear product  $\circ$  on the module of (local) vector fields, satisfying the following identity:
\begin{equation}\label{HMeq1free}
\mathcal{L}_{X\circ Y} \circ=X\circ (\mathcal{L}_Y \circ) +Y\circ (\mathcal{L}_X\circ ),
\end{equation}
for all local vector fields $X,Y$.
\newline
\item[(ii)] a distinguished vector field $e$ on $M$ such that 
\[e\circ X=X\] 
for all local vector fields $X$. 
\end{itemize}
\end{definition}
Condition \eqref{HMeq1free} is known as the \emph{Hertling-Manin condition}.

\subsection{F-manifolds with compatible connection}
F-manifolds are usually equipped with additional structures. 
\begin{definition}\label{defFwithE}
An \emph{F-manifold with Euler vector field} is an F-manifold $M$ equipped with a vector field $E$ satisfying
\begin{equation}
	\mathcal{L}_E \circ=\circ.\label{Euler}
\end{equation}
\end{definition}

\hspace{-1em}Following \cite{LPR}, we now introduce the notion of F-manifold with compatible connection (and flat unit).
\begin{definition}\label{Fmnfwcompatconn_flatunit}
An \emph{F-manifold with compatible connection and flat unit} is a manifold $M$ equipped with a product 
\[\circ : TM \times TM \rightarrow TM,\] 
with structure functions $c^i_{jk}$, a connection $\nabla$ with Christoffel symbols 
$\Gamma^i_{jk}$ and a distinguished vector field $e$ such that
\begin{itemize}
\item[(i)] the one-parameter family of  connections $\{\nabla_{\lambda}\}_{\lambda}$ with Christoffel symbols
$$\Gamma^i_{jk}-\lambda c^i_{jk},$$
gives a torsionless connection for any choice of $\lambda$ for which the Riemann  tensor coincides
 with the Riemann tensor of $\nabla \equiv \nabla_0$
\beq\label{curvlambda}
R_{\lambda}(X,Y)(Z)=R(X,Y)(Z),
\eeq
and satisfies the condition
\beq\label{rc-intri}
Z\circ R(W,Y)(X)+W\circ R(Y,Z)(X)+Y\circ R(Z,W)(X)=0,
\eeq
for all local vector fields $X$, $Y$, $Z$, $W$;
\item[(ii)] $e$ is the unit of the product;
\item[(iii)] $e$ is flat: $\nabla e=0$.
\end{itemize}
\end{definition}
\hspace{-1em}For a given $\lambda$, the torsion and curvature are respectively given by
\begin{eqnarray*}
T^{(\lambda)k}_{ij}&=&\Gamma^k_{ij}-\Gamma^k_{ji}+\lambda(c^k_{ij}-c^k_{ji}),\\
R^{(\lambda)k}_{ijl}&=&R^k_{ijl}+\lambda(\nabla_i c^k_{jl}-\nabla_j c^k_{il})+\lambda^2(c^k_{im}c^m_{jl}-c^k_{jm}c^m_{il}),
\end{eqnarray*}
where $R^k_{ijl}$ is the Riemann tensor of $\nabla$. Thus, condition (i) is equivalent to
\begin{enumerate}
\item the vanishing of the torsion of $\nabla$;
\item the commutativity of the product  $\circ$;
\item the symmetry  in the lower indices of the tensor field $\nabla_l c^k_{ij}$;
\item the associativity of the product $\circ$.
\end{enumerate}

\begin{remark}
Since the Hertling-Manin condition \eqref{HMeq1free} follows from the symmetry  in the lower indices of the tensor field $\nabla_l c^k_{ij}$, as shown in \cite{He02},  F-manifolds with compatible connection and flat unit constitute a special class of F-manifolds.
\end{remark}

\begin{remark}
The  operator  of multiplication by the Euler vector field has vanishing Nijenhuis torsion. 
By virtue of this property, families of examples were constructed in \cite{LP23,LPVG}.
\end{remark}

\begin{remark}	
Using Bianchi  identity condition \eqref{rc-intri} can be written in the equivalent  form  
\begin{equation}\label{rc-intri-2}
R(Y,Z)(X\circ W)+R(X,Y)(Z\circ W)+R(Z,X)(Y\circ W)=0,
\end{equation}
for all local vector fields $X$, $Y$, $Z$, $W$ (see \cite{LPR} for  details).
\end{remark}
\hspace{-1em}In local coordinates conditions \eqref{rc-intri} and \eqref{rc-intri-2} read  \eqref{shc-intro}.

\subsection{Regularity} Following \cite{DH}, we present the notion of a regular F-manifold.
\begin{definition}[\cite{DH}]\label{DavidHertlingdef}
An F-manifold with Euler vector field $(M,\circ, e,E)$ is called \emph{regular} if for each $p\in M$ the matrix representing the endomorphism
$$L_p := E_p\circ : T_pM \to T_pM$$
has exactly one Jordan block for each distinct eigenvalue.
\end{definition}
\hspace{-1em}In this paper, we will use the following important result of \cite{DH} regarding the existence of non-semisimple canonical coordinates for regular F-manifolds with Euler vector field.
\begin{theorem}[\cite{DH}]\label{DavidHertlingth}
Let $(M, \circ, e, E)$ be a regular F-manifold of dimension $n \geq 2$ with an Euler vector field $E$. Furthermore, assume that locally around a point $p\in M$, the Jordan canonical form of the operator $L$ has $r$ Jordan blocks of sizes $m_1 , \dots, m_r$ with distinct eigenvalues. Then
there exists locally around $p$ a distinguished system of coordinates $\{u^1, \dots, u^{m_1+\dots+m_r}\}$ such that
 \begin{align}
	e^{i(\alpha)}&=\delta^i_1,\qquad
	E^{i(\alpha)}=u^{i(\alpha)},\qquad
	c^{i(\alpha)}_{j(\beta)k(\gamma)}=\delta^\alpha_\beta\delta^\alpha_\gamma\delta^i_{j+k-1},\notag
\end{align}
for all $\alpha,\beta,\gamma\in\{1,\dots,r\}$ and  $i\in\{1,\dots,m_\alpha\}$, $j\in\{1,\dots,m_\beta\}$, ${k\in\{1,\dots,m_\gamma\}}$.
\end{theorem}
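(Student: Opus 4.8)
The plan is to reduce the statement to the normal-form theory of the $(1,1)$-tensor $L=E\circ$, exploiting that $L$ is a Nijenhuis operator. As recorded in the remark above, the Hertling--Manin condition together with $\mathcal{L}_E\circ=\circ$ forces the Nijenhuis torsion of $L$ to vanish; this is the integrability input that upgrades a pointwise linear-algebraic normal form into a genuine coordinate chart. The strategy has three ingredients: a fibrewise decomposition of the tangent algebra into truncated-polynomial factors, the integration of that decomposition into coordinates via $N_L=0$, and a final matching of $E$, $e$ and the structure constants to the asserted form.

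First I would analyse the fibre algebra. At each point near $p$, $T_pM$ is a finite-dimensional commutative associative algebra with unit $e$, and $L=E\circ$ is multiplication by $E$. By the structure theory of such algebras it splits as a direct sum of local algebras $A_\alpha=\ker(L-\lambda_\alpha\,\mathrm{id})^{m_\alpha}$, one for each distinct eigenvalue $\lambda_\alpha$ of $L$, each $A_\alpha$ being an ideal carrying its own idempotent unit $e_\alpha$, so that $e=\sum_\alpha e_\alpha$. The regularity hypothesis (one Jordan block per eigenvalue) says exactly that $\eta_\alpha:=(E-\lambda_\alpha e)|_{A_\alpha}$ is a cyclic nilpotent, so that $\eta_\alpha^{\circ(m_\alpha-1)}\neq 0$, $\eta_\alpha^{\circ m_\alpha}=0$, and $A_\alpha\cong\mathbb{C}[t]/(t^{m_\alpha})$ with adapted basis $e_\alpha,\eta_\alpha,\eta_\alpha^{\circ 2},\dots,\eta_\alpha^{\circ(m_\alpha-1)}$. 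In this basis the product already reads $\eta_\alpha^{\circ(i-1)}\circ\eta_\alpha^{\circ(k-1)}=\eta_\alpha^{\circ(i+k-2)}$, i.e. it is in the desired form $\delta^\alpha_\beta\delta^\alpha_\gamma\delta^i_{j+k-1}$, and $e_\alpha$ is the first basis vector.

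The core step is to integrate this fibrewise picture into coordinates. The distinct eigenvalues $\lambda_\alpha$ depend smoothly on the point near $p$, and I would set $u^{1(\alpha)}=\lambda_\alpha$. The sub-bundles $D_\alpha$ with fibres $A_\alpha$ vary smoothly, and here $N_L=0$ does the essential work: it guarantees integrability of the relevant distributions, functional independence of the eigenvalue functions, and that the algebraic frame $\eta_\alpha^{\circ(i-1)}$ can be realised as a commuting coordinate frame $\partial_{i(\alpha)}$. This is precisely the statement that a regular Nijenhuis operator admits local coordinates bringing each Jordan block to lower-triangular Toeplitz form with its eigenvalue on the diagonal; I would either invoke this normal-form result directly or reconstruct it by induction on block size, integrating the flag $\ker\eta_\alpha^{\,j}$ and thereby producing the subordinate coordinates $u^{2(\alpha)},\dots,u^{m_\alpha(\alpha)}$.

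Finally I would match the remaining data in the coordinates just built. Since $e=\sum_\alpha e_\alpha=\sum_\alpha\partial_{1(\alpha)}$, the unit axiom gives $e^{i(\alpha)}=\delta^i_1$; reading off $E^i=L^i_{\,j}e^j$ from the Toeplitz first column of $L$ then yields $E=\sum u^{i(\alpha)}\partial_{i(\alpha)}$, that is $E^{i(\alpha)}=u^{i(\alpha)}$ (the Euler condition $\mathcal{L}_E\circ=\circ$ being exactly the homogeneity that makes $E$ radial once the product is constant); and commutativity together with associativity propagate the single-generator relation $\partial_{i(\alpha)}=\eta_\alpha^{\circ(i-1)}$ to the full structure constants $c^{i(\alpha)}_{j(\beta)k(\gamma)}=\delta^\alpha_\beta\delta^\alpha_\gamma\delta^i_{j+k-1}$. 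The genuine \emph{obstacle} lies in the third paragraph: the fibre algebra is elementary linear algebra, but turning it into an honest chart, showing that the eigenvalues and subordinate nilpotent coordinates are functionally independent and that the algebraic powers of $E$ serve as commuting coordinate vector fields, is the analytic heart, and it is exactly here that the vanishing Nijenhuis torsion is indispensable, playing the role that Tsarev's semi-Hamiltonian conditions play in the diagonal case.
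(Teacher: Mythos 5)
This theorem is imported from \cite{DH}; the present paper states it without proof, so the only meaningful comparison is with the argument of David and Hertling. Your overall architecture --- pointwise splitting of the tangent algebra into local algebras $A_\alpha$ with idempotents $e_\alpha$ and cyclic nilpotents $\eta_\alpha=(E-\lambda_\alpha e)|_{A_\alpha}$, followed by integration of the frame $\{\eta_\alpha^{\circ(i-1)}\}$ to a coordinate frame, and a final reading-off of $e$, $E$ and $c$ --- is indeed the skeleton of their proof, and your last paragraph is routine and correct once the frame is known to be a coordinate frame.

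The gap is in the step you yourself single out as the analytic heart. You attribute the commutativity of the frame $e_\alpha,\eta_\alpha,\eta_\alpha^{\circ2},\dots$ (and the integrability of the block distributions) entirely to the vanishing of the Nijenhuis torsion of $L=E\circ$. That is not enough: $N_L=0$ is a condition on the $(1,1)$-tensor $L$ alone, whereas the vector fields $\eta_\alpha^{\circ(i-1)}$ are built from the product, and the normal-form theorems for regular Nijenhuis operators that you propose to invoke produce coordinates adapted to $L$ only (typically companion-type charts built from the coefficients of the characteristic polynomial); they do not assert that the algebraic powers of the nilpotent part of $E\circ$ become coordinate vector fields, nor that the structure constants $c^i_{jk}$ become constant. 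The identity $[\eta_\alpha^{\circ i},\eta_\beta^{\circ j}]=0$ is exactly where the Hertling--Manin condition \eqref{HMeq1free} must be used directly, together with $\mathcal{L}_E\circ=\circ$ (which gives $[e,E]=e$, $\mathcal{L}_e\circ=0$ and hence $\mathcal{L}_eL=\mathrm{id}$, so in particular $e(\lambda_\alpha)=1$, needed for the independence of the eigenvalue coordinates); in \cite{DH} this is carried out by an explicit induction on the powers. So the route is the right one, but the decisive computation is misattributed to a weaker hypothesis, and the proof is incomplete until the commutator argument with the Hertling--Manin condition is actually performed.
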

\hspace{-1em}The coordinates defining this special system are called \emph{David-Hertling coordinates}, and in such coordinates the operator  of multiplication by the Euler vector field takes the lower-triangular Toeplitz form \eqref{toeplitz} mentioned in the introduction. We will also often refer to David-Hertling coordinates as \emph{canonical coordinates} since they are a generalisation of Dubrovin's canonical coordinates in the semisimple case.

\subsection{Integrable hierarchies and  F-manifolds with  compatible connection}\label{SubsectionIntHsAndFmnfsCC}
According to the results of \cite{LPVG}, given  an $n$-dimensional regular F-manifold with Euler vector field $(M,\circ,e,E)$ and a vector field $X$, under mild technical assumptions, there exists a unique torsionless connection $\nabla$ defined by the conditions
\begin{equation}\label{cond1}
\nabla_j e^i=0,\qquad i,j\inw,
\end{equation}
\begin{equation}\label{cond2}
(d_{\nabla}X\circ)^i_{jk}=0,\qquad i,j,k\inw.
\end{equation}
 More precisely, we recall the following.
\begin{proposition}\label{LPVG_Prop3.10}(\cite{LPVG})
	For $(M,\circ,e)$ being an $n$-dimensional regular F-manifold and $X$ being a local vector field realising
	\begin{itemize}
		\item $X^{1(\alpha)}\neq X^{1(\beta)}$ for $\alpha\neq\beta$, $\alpha,\beta\in\{1,\dots,r\}$,
		\item $X^{2(\alpha)}\neq0$, $\alpha\in\{1,\dots,r\}$,
	\end{itemize}
	there exists a unique torsionless connection $\nabla$ satisfying \eqref{cond1} and \eqref{cond2}.
\end{proposition}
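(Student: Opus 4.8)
The plan is to work in the David--Hertling canonical coordinates of Theorem \ref{DavidHertlingth}, in which the unit is constant, $e^{i(\alpha)}=\delta^i_1$, and the product takes the block Toeplitz form \eqref{strconsts_reg}. Writing $A^i_k:=(X\circ)^i_k=c^i_{sk}X^s$ for the $(1,1)$-tensor of multiplication by $X$, condition \eqref{cond2} for a torsionless $\nabla$ becomes, after using $\Gamma^m_{jk}=\Gamma^m_{kj}$ to cancel the term $\Gamma^m_{jk}A^i_m$,
\[\Gamma^i_{jm}A^m_k-\Gamma^i_{km}A^m_j=\partial_kA^i_j-\partial_jA^i_k,\]
while condition \eqref{cond1} reduces to $\Gamma^i_{jk}e^k=0$ (since $\partial_je^i=0$). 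The first observation is that, for each fixed upper index $i$, both conditions involve only the symmetric matrix $S=S^{(i)}$ with entries $S_{jk}:=\Gamma^i_{jk}$; the whole system therefore \emph{decouples} into $n$ independent linear systems, one per value of $i$.

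For fixed $i$ I would rewrite the two conditions intrinsically. Since $S$ is symmetric, \eqref{cond2} says exactly that $SA$ is symmetric, i.e.\ $SA=A^{\mathsf T}S$, while \eqref{cond1} says $Se=0$. The associated linear map
\[M:S\longmapsto (SA-A^{\mathsf T}S,\,Se)\]
sends the space of symmetric $n\times n$ matrices, of dimension $\tfrac{n(n+1)}2$, to the space of antisymmetric matrices times $\reals^n$, again of dimension $\tfrac{n(n-1)}2+n=\tfrac{n(n+1)}2$. Thus $M$ is a linear endomorphism between spaces of \emph{equal} dimension, and it suffices to prove that $M$ is injective: injectivity then forces bijectivity, which yields simultaneously the uniqueness and the existence (for the given right-hand side built from $\partial A$) of $\Gamma^i_{jk}$, the solution depending smoothly on its data through Cramer's rule.

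The crux is therefore the injectivity of $M$, and this is precisely where the hypotheses on $X$ enter. Suppose $S$ is symmetric with $SA=A^{\mathsf T}S$ and $Se=0$. Iterating the intertwining relation gives $SA^k=(A^{\mathsf T})^kS$, whence $S(A^ke)=(A^{\mathsf T})^k(Se)=0$ for every $k\ge0$. It then remains to show that $e$ is a cyclic vector for $A=X\circ$, i.e.\ that $\{e,Ae,\dots,A^{n-1}e\}$ is a basis; for then $S$ annihilates a spanning set and $S=0$. This is exactly what the assumptions guarantee: $A$ is block diagonal, and within the $\alpha$-th block it is the lower-triangular Toeplitz matrix \eqref{toeplitz} with diagonal entry $X^{1(\alpha)}$ and first subdiagonal entry $X^{2(\alpha)}$. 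The condition $X^{2(\alpha)}\ne0$ makes each block a single nonderogatory Jordan block, for which the unit $e_{(\alpha)}=(1,0,\dots,0)$ is cyclic (the vectors $A^ke_{(\alpha)}$ are triangular-independent), while the condition $X^{1(\alpha)}\ne X^{1(\beta)}$ for $\alpha\ne\beta$ makes the block eigenvalues distinct, so that the cyclic subspaces are in direct sum and $e=\sum_\alpha e_{(\alpha)}$ is cyclic for the whole operator.

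The main obstacle is exactly establishing this cyclicity --- equivalently, that $A=X\circ$ is a regular (nonderogatory) endomorphism admitting $e$ as a cyclic vector --- since everything else is the elementary dimension count and the intertwining computation above. I would close by remarking that, although the construction is carried out in canonical coordinates, the defining conditions \eqref{cond1} and \eqref{cond2} are tensorial; hence the uniqueness is coordinate-independent and the object produced is a genuine torsionless connection.
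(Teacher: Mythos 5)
Your argument is correct, but it is genuinely different from the one behind the paper's statement (recalled in Remark \ref{ChristoffelExplicit}): there the connection is produced \emph{constructively}, by solving the equations in David--Hertling coordinates block by block and obtaining the explicit recursive formulas \eqref{Chr_a}--\eqref{Chr_c}, in which the hypotheses on $X$ appear directly as the nonvanishing denominators $X^{1(\alpha)}-X^{1(\beta)}$ and $X^{2(\alpha)}$. You instead observe that, for each fixed upper index $i$, the torsionless conditions \eqref{cond1}--\eqref{cond2} form a square linear system $S\mapsto(SA-A^{\mathsf T}S,\,Se)$ from symmetric matrices to antisymmetric matrices times $\reals^n$ (the dimension count $\tfrac{n(n+1)}2=\tfrac{n(n-1)}2+n$ is right), and reduce everything to injectivity, which you correctly derive from $SA^k e=(A^{\mathsf T})^k Se=0$ together with the cyclicity of $e$ for $A=X\circ$; the two hypotheses on $X$ enter precisely as the conditions making $e$ cyclic (nonzero first subdiagonal in each Toeplitz block, distinct block eigenvalues so the cyclic subspaces sum directly). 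Your route is shorter and conceptually cleaner --- it isolates the hypotheses as exactly the cyclicity of the unit --- but it is non-constructive: it does not yield the closed-form Christoffel symbols of Remark \ref{ChristoffelExplicit}, which the paper relies on repeatedly later (Lemmas \ref{lemma:old1}--\ref{lemma:old2b}, the Darboux--Tsarev analysis, and the appendix). Both proofs are valid; the paper's buys the explicit formulas at the cost of a longer induction.
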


\begin{remark}\label{ChristoffelExplicit}
The proof of Proposition \ref{LPVG_Prop3.10} provides explicit formulas for the Christoffel symbols $\{\Gamma^i_{jk}\}_{i,j,k\in\{1,\dots,n\}}$ of $\nabla$, in terms of $\{\partial_j X^i\}_{i,j\in\{1,\dots,n\}}$, which we shall now recall. Let us fix $\alpha,\beta,\gamma\in\{1,\dots,r\}$, $\alpha\neq\beta\neq\gamma\neq\alpha$, and $i\in\{1,\dots,m_\alpha\}$. The Christoffel symbols of the unique torsionless connection satisfying $\nabla e=0$ and $d_\nabla V=0$, for $V=X\circ$, are determined by the following formulas:\small
	\begin{align}
		 \quad&\Gamma^{i(\alpha)}_{j(\beta)k(\gamma)}=0,\qquad j\in\{1,\dots,m_\beta\},\,k\in\{1,\dots,m_\gamma\};\label{Chr_a}\\
		 \label{Chr_b}\quad&\Gamma^{i(\alpha)}_{j(\beta)k(\alpha)}=-\frac{1}{X^{1(\alpha)}-X^{1(\beta)}}\Bigg(\partial_{j(\beta)}X^{(i-k+1)(\alpha)}+\overset{m_\alpha}{\underset{s=k+1}{\sum}}\Gamma^{i(\alpha)}_{j(\beta)s(\alpha)}X^{(s-k+1)(\alpha)}\\&\quad\quad\qquad\,\,\,\,-\overset{m_\beta}{\underset{s=j+1}{\sum}}\Gamma^{i(\alpha)}_{k(\alpha)s(\beta)}X^{(s-j+1)(\beta)}\Bigg), \qquad k\in\{1,\dots,m_\alpha\},\,j\in\{1,\dots,m_\beta\};\notag\\
		\quad&\Gamma^{i(\alpha)}_{j(\beta)1(\beta)}=-\Gamma^{i(\alpha)}_{j(\beta)1(\alpha)}\label{Chr_nablae_1},\qquad j\in\{1,\dots,m_\beta\},\\
		&\Gamma^{i(\alpha)}_{j(\alpha)1(\alpha)}=-\overset{}{\underset{\sigma\neq\alpha}{\sum}}\,\Gamma^{i(\alpha)}_{j(\alpha)1(\sigma)},\qquad j\in\{1,\dots,m_\alpha\};\label{Chr_nablae_2}\\
		\quad&\Gamma^{i(\alpha)}_{k(\alpha)j(\alpha)}=\frac{1}{X^{2(\alpha)}}\bigg(\partial_{(j-1)(\alpha)}X^{(i-k+1)(\alpha)}+\overset{m_\alpha}{\underset{s=k+1}{\sum}}\Gamma^{i(\alpha)}_{(j-1)(\alpha)s(\alpha)}X^{(s-k+1)(\alpha)}\label{Chr_d}\\&\quad\quad\quad\,\,\,\,-\partial_{k(\alpha)}X^{(i-j+2)(\alpha)}-\overset{m_\alpha}{\underset{s=j+1}{\sum}}\Gamma^{i(\alpha)}_{k(\alpha)s(\alpha)}X^{(s-j+2)(\alpha)}\bigg),\notag\\&\text{for } j, k\in\{2,\dots,m_\alpha\},\text{ where }j\leq k;\notag\\
		&\Gamma^{i(\alpha)}_{k(\beta)j(\beta)}=\frac{1}{X^{2(\beta)}}\Bigg(\overset{m_\beta}{\underset{s=k+1}{\sum}}\Gamma^{i(\alpha)}_{(j-1)(\beta)s(\beta)}X^{(s-k+1)(\beta)}-\overset{m_\beta}{\underset{s=j+1}{\sum}}\Gamma^{i(\alpha)}_{k(\beta)s(\beta)}X^{(s-j+2)(\beta)}\Bigg),\label{Chr_c}\\&\text{for }k, j\in\{2,\dots,m_\beta\},\text{ where }j\leq k.\notag
	\end{align}\normalsize
\end{remark}
\hspace{-1em}The above connection satisfies the following lemmas, which are either proved in \cite{LPVGhodograph} or are immediate consequences of results therein. 
\vspace{-2em}
\newline
\newline
\begin{lemma}[Lemma 4.1 in \cite{LPVGhodograph}] 
    \begin{align}
          & \,   \Gamma^{i(\alpha)}_{j(\beta)k(\alpha)} = \begin{cases}
                \Gamma^{(i-k+1)(\alpha)}_{j(\beta)1(\alpha)}, & \text{ if }\, i \geq k,\\
                0, & \text{ otherwise,}
            \end{cases} \label{eq:lemmaold1diag}\\
       & \,      \Gamma^{i(\alpha)}_{j(\beta)k(\beta)} = \begin{cases}
                \Gamma^{i(\alpha)}_{(j+k-1)(\beta) 1(\beta)}, & \text{ if }\, j+k \leq m_{\beta} + 1,\\
                0, & \text{ otherwise,}
            \end{cases}
            \label{eq:lemmaold1hori}\\
            & \, 
            \Gamma^{i(\alpha)}_{1(\alpha)j(\alpha)} = \begin{cases}
                \Gamma^{(i-j+1)(\alpha)}_{1(\alpha) 1(\alpha)}, & \text{ if }\, i \geq j,\\
                0, & \text{ otherwise.}
            \end{cases}
            \label{eq:lemmaold1diagsameblock}
    \end{align}
    \label{lemma:old1}
\end{lemma}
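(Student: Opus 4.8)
The plan is to read each identity as a \emph{Toeplitz} statement. Fix the upper index $i(\alpha)$ together with the lower index $k(\alpha)$ lying in the same block; then \eqref{eq:lemmaold1diag} asserts that the array $\bigl(\Gamma^{i(\alpha)}_{j(\beta)k(\alpha)}\bigr)_{i,k}$ is lower-triangular Toeplitz, equivalently that, as an endomorphism of the $\alpha$-block, $\Gamma_{j(\beta)}$ commutes with the regular nilpotent $N_{(\alpha)}=(\partial_{2(\alpha)}\circ)$; the relations \eqref{eq:lemmaold1hori} and \eqref{eq:lemmaold1diagsameblock} are the analogous shift-invariances in the two lower indices and in the upper/second-lower pair. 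Note that \eqref{eq:lemmaold1diagsameblock} is exactly the case $\beta=\alpha$, first lower index $=1$, of \eqref{eq:lemmaold1diag}, and that by torsion-freeness the three relations are linked, so I would prove them simultaneously, organising the argument by how the free differentiation index sits relative to the block $\alpha$ of the upper index.

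For the mixed configurations the cleanest route uses the total symmetry of the tensor $\nabla_l c^k_{ij}$ in its lower indices, which is condition (3) in the list following Definition \ref{Fmnfwcompatconn_flatunit} and is available in the compatible-connection setting. Substituting the canonical form $c^{i(\alpha)}_{j(\beta)k(\gamma)}=\delta^\alpha_\beta\delta^\alpha_\gamma\delta^i_{j+k-1}$ turns $\nabla_{m(\mu)}c^{i(\alpha)}_{j(\beta)k(\gamma)}$ into a three-term combination of Christoffel symbols carrying shifted block-positions. When the differentiation index lies in a block $\beta\neq\alpha$ and both product indices lie in $\alpha$ (or when both product indices lie in a block $\beta\neq\alpha$), two of the three terms acquire a factor $\delta^\beta_\alpha=0$; transposing the differentiation index with a product index and cancelling the surviving off-diagonal term by torsion-freeness then leaves precisely the shift identity $\Gamma^{i(\alpha)}_{j(\beta)(k+l-1)(\alpha)}=\Gamma^{(i-k+1)(\alpha)}_{j(\beta)l(\alpha)}$, respectively its block-$\beta$ counterpart. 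Setting the free index equal to $1$ gives \eqref{eq:lemmaold1diag} and \eqref{eq:lemmaold1hori}, while the out-of-range entries vanish because the relevant structure constant does. Alternatively, the same conclusions follow directly from the explicit formula \eqref{Chr_b}, whose seed term $\partial_{j(\beta)}X^{(i-k+1)(\alpha)}$ already depends only on $i-k$ and whose correction sums preserve this dependence once rewritten through the inductive hypothesis, torsion-freeness, and the convention $X^{(\leq 0)}=0$.

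The step I expect to be the main obstacle is the single-block configuration $\beta=\alpha$, in which all three indices lie in one block. Here none of the three terms of $\nabla c$ drops automatically, and transposition symmetry yields only the unhelpful cross relation $\Gamma^{i(\alpha)}_{j(\alpha)(k+l-1)(\alpha)}-\Gamma^{(i-k+1)(\alpha)}_{j(\alpha)l(\alpha)}=\Gamma^{i(\alpha)}_{k(\alpha)(j+l-1)(\alpha)}-\Gamma^{(i-j+1)(\alpha)}_{k(\alpha)l(\alpha)}$, which is not yet in Toeplitz form. To close it I would combine the flat-unit constraints \eqref{Chr_nablae_1}--\eqref{Chr_nablae_2} (that is, $\sum_\gamma\Gamma^{i(\alpha)}_{\,\cdot\,1(\gamma)}=0$) with the same-block recursion \eqref{Chr_d}, running a downward induction on the index being shifted: taking that index equal to $m_\alpha$ empties the nilpotent sums and furnishes the base case, and the inductive step propagates the shift-invariance, pinning down \eqref{eq:lemmaold1diag}, \eqref{eq:lemmaold1hori} and \eqref{eq:lemmaold1diagsameblock} together inside the block. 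The genuinely delicate point is the coupling: formula \eqref{Chr_b} for a mixed symbol already involves block-$\beta$ symbols of the type governed by \eqref{eq:lemmaold1hori}, so the three identities must be carried through a single simultaneous induction on the explicit formulas \eqref{Chr_a}--\eqref{Chr_c}, with careful attention to the zero conventions at the block boundaries; this induction is also the route that avoids invoking the symmetry of $\nabla_l c^k_{ij}$ altogether, should one wish to establish the identities for the bare connection of Proposition \ref{LPVG_Prop3.10}.
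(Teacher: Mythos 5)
Your two routes to \eqref{eq:lemmaold1diag} and \eqref{eq:lemmaold1hori} are both workable, and the second one --- induction on the explicit recursions \eqref{Chr_b} and \eqref{Chr_c} --- is exactly what the paper does: these two identities (stated for $\beta\neq\alpha$) are proved in the Appendix as Propositions \ref{Prop1} and \ref{Prop2}, by decreasing induction on the lower indices for the vanishing part and induction on $i-k$, respectively $j+k$, for the shift part, using only $d_\nabla(X\circ)=0$ and $\nabla e=0$. Your preferred route through the symmetry of $\nabla c$ also closes for these mixed configurations, but mind the logical order: in this paper the symmetry of $\nabla_lc^i_{jk}$ for the bare connection of Proposition \ref{LPVG_Prop3.10} is itself established in the Appendix \emph{using} Propositions \ref{Prop1} and \ref{Prop2} (cases b and c there invoke \eqref{EqProp1} and \eqref{EqProp2}), so deducing the first two identities from $\nabla c$-symmetry is circular unless one works in the setting of Theorem \ref{LPVG_mainTh}, where that symmetry is available independently from the frame of commuting flows. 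You flag this caveat yourself, so I count the proposal as sound on these two items.

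Where you go astray is \eqref{eq:lemmaold1diagsameblock}. You declare the single-block configuration the main obstacle and propose a downward induction on \eqref{Chr_d} aimed at a full in-block Toeplitz property $\Gamma^{i(\alpha)}_{j(\alpha)k(\alpha)}=\Gamma^{(i-k+1)(\alpha)}_{j(\alpha)1(\alpha)}$ for arbitrary $j$. That target is strictly stronger than what the lemma asserts and is false for a general compatible connection (for a single Jordan block it would force all Christoffel symbols to vanish via \eqref{Chr_nablae_2}, contradicting the explicit size-$3$ example in Section 4; it would also render Lemma \ref{lemma:old2b}, Remark \ref{rmk:old} and the Darboux--Tsarev hypothesis superfluous), so the induction you sketch cannot close. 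The identity actually claimed has first lower index equal to $1$, and it drops out of the very ``cross relation'' you dismiss as unhelpful: writing \eqref{SymmNablac} with all indices in the block $\alpha$ gives
\begin{equation*}
\Gamma^{i(\alpha)}_{j(\alpha)(k+s-1)(\alpha)}-\Gamma^{(i-k+1)(\alpha)}_{j(\alpha)s(\alpha)}-\Gamma^{i(\alpha)}_{k(\alpha)(j+s-1)(\alpha)}+\Gamma^{(i-j+1)(\alpha)}_{k(\alpha)s(\alpha)}=0,
\end{equation*}
and setting $j=1$ makes the last two terms cancel, leaving $\Gamma^{i(\alpha)}_{1(\alpha)(k+s-1)(\alpha)}=\Gamma^{(i-k+1)(\alpha)}_{1(\alpha)s(\alpha)}$; the case $s=1$, with the convention that out-of-range symbols vanish, is precisely \eqref{eq:lemmaold1diagsameblock}. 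This one-line specialisation is the paper's own derivation (see the remark following the lemma); no recursion on \eqref{Chr_d} and no flat-unit bookkeeping are needed.
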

\hspace{-1em}Note that  \eqref{eq:lemmaold1diagsameblock} is not included in \cite{LPVGhodograph} but can be easily obtained from the condition $\nabla_jc^i_{ks}-\nabla_kc^i_{js}=0$. Indeed, in canonical coordinates such a condition reads 
	\begin{align}\label{SymmNablac}
		\Gamma^{i(\alpha)}_{j(\beta)t(\tau)}c^{t(\tau)}_{k(\gamma)s(\sigma)}-\Gamma^{t(\tau)}_{j(\beta)s(\sigma)}c^{i(\alpha)}_{k(\gamma)t(\tau)}-\Gamma^{i(\alpha)}_{k(\gamma)t(\tau)}c^{t(\tau)}_{j(\beta)s(\sigma)}+\Gamma^{t(\tau)}_{k(\gamma)s(\sigma)}c^{i(\alpha)}_{j(\beta)t(\tau)}=0.
	\end{align}
Condition \eqref{eq:lemmaold1diagsameblock} is thus obtained by letting all the indices in \eqref{SymmNablac} to be associated to the same block (which we denote by $\alpha$) and setting one of the lower indices to 1. 
\begin{lemma}[Lemma 4.2 in \cite{LPVGhodograph}] 
    The following are equivalent
    \begin{align}
     & \,    \Gamma^{i(\alpha)}_{j(\alpha)k(\alpha)} = 0, \qquad \text{ for } k \geq 2, \, j \geq i+1; \\
     & \,\Gamma^{i(\alpha)}_{j(\alpha)k(\alpha)} = 0, \qquad \text{ for } j, k \geq 2, \, i-j-k \leq -3. 
    \end{align}
    \label{lemma:old2b}
\end{lemma}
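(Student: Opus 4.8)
The plan is to use the diagonal reduction identity \eqref{eq:lemmaold1diag}, specialised to $\beta=\alpha$ so that all three indices lie in the same block, to bring both vanishing conditions to a common normal form, and then to observe that the two conditions produce literally the same family of constraints. Throughout I suppress the block label, writing $\Gamma^i_{jk}$ for $\Gamma^{i(\alpha)}_{j(\alpha)k(\alpha)}$ with $i,j,k\in\{1,\dots,m_\alpha\}$, and I may assume $m_\alpha\geq2$, since otherwise both statements are vacuous. With $\beta=\alpha$ the identity \eqref{eq:lemmaold1diag} reads
\[
\Gamma^i_{jk}=\Gamma^{(i-k+1)}_{j1}\ \text{ for } i\geq k,\qquad \Gamma^i_{jk}=0\ \text{ for } i<k,
\]
so that every Christoffel symbol of the block is determined by those with third lower index equal to $1$, and in particular $\Gamma^i_{jk}$ vanishes automatically whenever $i<k$.

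First I would introduce the normal-form statement, which I call $(*)$:
\[
\Gamma^p_{q1}=0\quad\text{for all } p,q\in\{1,\dots,m_\alpha\}\text{ with } q\geq p+2,
\]
and prove that each of the two displayed conditions is equivalent to $(*)$; the equivalence of the two conditions then follows at once. For the first condition, in the regime where it is not automatic (that is, $i\geq k$) the identity gives $\Gamma^i_{jk}=\Gamma^{(i-k+1)}_{j1}$, and setting $p=i-k+1$, $q=j$ one computes $q-p=(j-i)+(k-1)\geq1+1=2$ from $j\geq i+1$ and $k\geq2$; hence the first condition is contained in $(*)$. Conversely, every pair $(p,q)$ with $q\geq p+2$ is realised by the admissible triple $(i,j,k)=(p+1,q,2)$, which satisfies $k\geq2$ and $j=q\geq p+2=i+1$ and stays within range because $q\leq m_\alpha$ forces $p\leq m_\alpha-2$; thus $(*)$ implies the first condition.

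The same calculation handles the second condition: for $i\geq k$ the constraint $j+k\geq i+3$ gives $q-p=(j-i)+(k-1)\geq(3-k)+(k-1)=2$, so it too is contained in $(*)$, while the same triple $(p+1,q,2)$, which satisfies $j,k\geq2$ and $j+k=q+2\geq(p+1)+3=i+3$, shows that $(*)$ implies it. Equivalently, one may package the whole argument as the single index-shift identity $\Gamma^i_{jk}=\Gamma^{(i-k+2)}_{j2}$, valid for $i\geq k$ (both sides reduce to $\Gamma^{(i-k+1)}_{j1}$), under which the hypothesis region $\{k\geq2,\ j\geq i+1\}$ of the first condition and the region $\{j,k\geq2,\ j+k\geq i+3\}$ of the second are carried onto exactly the same set of triples with third index $2$.

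I expect the only genuine work, and hence the main place to be careful, to be the bookkeeping of the index bounds $1\leq i,j,k\leq m_\alpha$: one must check that the triples used to realise each $(p,q)$ truly lie in the block, and that the automatically vanishing cases $i<k$ are correctly excluded from, rather than imposed by, either condition. No curvature or compatibility input beyond Lemma \ref{lemma:old1} is needed; the statement is a purely combinatorial consequence of the diagonal shift identity \eqref{eq:lemmaold1diag}.
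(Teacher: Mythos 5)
Your proof rests on applying the shift identity \eqref{eq:lemmaold1diag} with $\beta=\alpha$, i.e.\ on the claim that within a single block $\Gamma^{i(\alpha)}_{j(\alpha)k(\alpha)}=\Gamma^{(i-k+1)(\alpha)}_{j(\alpha)1(\alpha)}$ for $i\geq k$ and $=0$ for $i<k$. That identity holds only for distinct blocks: it is proved in the appendix as Proposition \ref{Prop1} under the explicit hypothesis $\alpha\neq\beta$ (the formula \eqref{Chr_b} from which it is derived involves division by $X^{1(\alpha)}-X^{1(\beta)}$), and the only same-block analogue available is \eqref{eq:lemmaold1diagsameblock}, which requires one of the lower indices to equal $1$. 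Your reduction to the normal form $(*)$ therefore collapses. Worse, it proves too much: for a single Jordan block, flatness of the unit (cf.\ \eqref{Chr_nablae_2} with no other blocks present) gives $\Gamma^{p(\alpha)}_{q(\alpha)1(\alpha)}=0$ for all $p,q$, so your argument would show that both displayed conditions hold identically. They do not: in the size-$3$ single-block example of Section 4, $\Gamma^{1}_{22}=-\tfrac{1}{(X^2)^2}\bigl(\tfrac{\partial X^1}{\partial u^2}X^2-\tfrac{\partial X^1}{\partial u^3}X^3\bigr)$ is generically nonzero, yet the triple $(i,j,k)=(1,2,2)$ lies in both regions ($k\geq 2$, $j\geq i+1$, and $i-j-k=-3$). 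Both conditions are genuinely restrictive hypotheses on the connection, not automatic consequences of an index shift.

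What a correct argument needs: the containment of regions gives one direction for free (since $i\geq 1$, the set $\{k\geq2,\ j\geq i+1\}$ is contained in $\{j,k\geq2,\ j+k\geq i+3\}$, so the second vanishing condition implies the first), but the converse is the real content, because the second region is strictly larger even after symmetrising in $j,k$ (e.g.\ $(i,j,k)=(3,3,3)$). To propagate the vanishing from $j\geq i+1$ to all of $j+k\geq i+3$ one must invoke the in-block consequence of the symmetry of $\nabla c$, namely \eqref{SymmNablac} with all indices in block $\alpha$, which reads $\Gamma^{i}_{j(k+s-1)}-\Gamma^{i}_{k(j+s-1)}=\Gamma^{(i-k+1)}_{js}-\Gamma^{(i-j+1)}_{ks}$ (this is exactly the identity established in Proposition \ref{Prop5} and Corollary \ref{Cor5} of the appendix); it is this relation, not a combinatorial relabelling, that trades a larger lower index against a smaller upper index and closes the equivalence.
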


\vspace{0.8em}
\hspace{-1em}An F-manifold with compatible connection and flat unit can be constructed by means of such a connection, when the local vector field $X$ meeting the assumptions of Proposition \ref{LPVG_Prop3.10} belongs to a set of $n$ linearly independent local vector fields defining pairwise commuting flows.
\begin{theorem}\label{LPVG_mainTh}(\cite{LPVG})
	Let $\{X_{(1)},\dots,X_{(n)}\}$ be a set of linearly independent local vector fields on an $n$-dimensional regular F-manifold $(M,\circ,e)$. Let's assume that the corresponding flows
	\begin{align}
		{\bf  u}_{t_i}=X_{(i)}\circ {\bf u}_x,\qquad i\in\{1,\dots,n\},
		\notag
	\end{align}
	pairwise commute, and that there exists a local vector field $X\in\{X_{(1)},\dots,X_{(n)}\}$ such that ${X^{1(\alpha)}\neq X^{1(\beta)}}$ for ${\alpha\neq\beta}$, $\alpha,\beta\in\{1,\dots,r\}$, and $X^{2(\alpha)}\neq0$ for each ${\alpha\in\{1,\dots,r\}}$. Then an F-manifold $(M,\circ,e,\nabla)$ with compatible connection and flat unit is defined by the unique torsionless connection $\nabla$ realising $\nabla e=0$ and $d_\nabla(X\circ)=0$.
\end{theorem}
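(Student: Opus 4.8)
The plan is to verify, item by item, the conditions defining an F-manifold with compatible connection and flat unit (Definition~\ref{Fmnfwcompatconn_flatunit}) for $(M,\circ,e)$ together with the connection $\nabla$ of Proposition~\ref{LPVG_Prop3.10}, and to locate the one point where the commutativity of the flows is genuinely used. Since $(M,\circ,e)$ is an F-manifold, the product is commutative and associative and $e$ is its unit, so Definition~\ref{Fmnfwcompatconn_flatunit}(ii) and items (2) and (4) of the reformulation of Definition~\ref{Fmnfwcompatconn_flatunit}(i) hold automatically; and $\nabla$, being the connection of Proposition~\ref{LPVG_Prop3.10}, is torsionless and satisfies \eqref{cond1}, which gives item (1) and condition (iii). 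By the equivalence recorded just after Definition~\ref{Fmnfwcompatconn_flatunit}, the whole statement therefore reduces to item (3), the symmetry of $\nabla_l c^k_{ij}$ in its lower indices (equivalently \eqref{SymmNablac}); once this is in force, the curvature identities \eqref{shc-intro}, i.e.\ \eqref{rc-intri}, come along through the same equivalence. So the entire content of the theorem is the symmetry of $\nabla c$.

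To obtain it I would first propagate the defining condition \eqref{cond2} to every flow of the hierarchy. The multiplication operators $V_{(i)}:=X_{(i)}\circ$ commute pairwise, since $[V_{(i)},V_{(j)}]=(X_{(i)}\circ X_{(j)}-X_{(j)}\circ X_{(i)})\circ=0$ by commutativity of $\circ$; hence the $\mathbf u_{xx}$-part of the commutativity of the flows $\mathbf u_{t_i}=V_{(i)}\mathbf u_x$ is automatic, and the commutativity of the distinguished flow with the $i$-th flow reduces to the curl-type condition on the remaining quadratic term. Rewriting that condition through the torsionless $\nabla$, all the connection-coefficient contributions cancel under symmetrisation once $[V_{(i)},V_{(j)}]=0$ and torsion-freeness are used; inserting the seed \eqref{cond2} then collapses it to $(d_\nabla V_{(i)})^b_{pa}(X\circ)^a_q+(d_\nabla V_{(i)})^b_{qa}(X\circ)^a_p=0$ for every $i$. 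The hypotheses $X^{1(\alpha)}\neq X^{1(\beta)}$ and $X^{2(\alpha)}\neq 0$ say exactly that $X\circ$ is regular — one non-derogatory Jordan block per eigenvalue, with non-vanishing nilpotent part — and for such an endomorphism an antisymmetric tensor $K$ with $K\,(X\circ)=(X\circ)^{\top}K$ must vanish (separate the blocks using the distinct eigenvalues, and inside a block write $K=R\,p(N)$ with $R$ the reversal and $N$ the Toeplitz nilpotent, forcing $K$ to be simultaneously symmetric and antisymmetric). This yields $d_\nabla(X_{(i)}\circ)=0$ for all $i$.

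The final step is to extract \eqref{SymmNablac} from the relations $d_\nabla(X_{(i)}\circ)=0$ holding for the full linearly independent family $\{X_{(i)}\}$. I would pass to the David--Hertling coordinates of Theorem~\ref{DavidHertlingth}, where $c^{i}_{jk}=\delta^i_{j+k-1}$ is constant and the Christoffel symbols are the explicit ones of Remark~\ref{ChristoffelExplicit}, so that each relation reads $(\nabla_l c^a_{bc}-\nabla_b c^a_{lc})X^c_{(i)}+c^a_{bc}\nabla_l X^c_{(i)}-c^a_{lc}\nabla_b X^c_{(i)}=0$. Expanding a coordinate field $\partial_m$ in the frame $\{X_{(i)}\}$ and using $d_\nabla(Y\circ)=\sum_i dg^i\wedge (X_{(i)}\circ)$ for $Y=\sum_i g^iX_{(i)}$, the system becomes a closed expression for the antisymmetric part $\nabla_l c^a_{bm}-\nabla_b c^a_{lm}$ in terms of the derivatives $\nabla X_{(i)}$; feeding in the closedness relations $d_\nabla(X_{(i)}\circ)=0$ for the complete family together with the flat-unit normalisation \eqref{cond1} should force that antisymmetric part to vanish, which is \eqref{SymmNablac}.

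I expect this last step to be the main obstacle. The propagation $\eqref{cond2}\Rightarrow d_\nabla(X_{(i)}\circ)=0$ is a clean consequence of the regularity of $X\circ$ and uses only the distinguished flow, but the vanishing of $\nabla_l c^a_{bm}-\nabla_b c^a_{lm}$ cannot be read off from any single flow: each relation $d_\nabla(X_{(i)}\circ)=0$ entangles $\nabla c$ with the unknown $\nabla X_{(i)}$, and one must use the closedness for the full, complete family — equivalently, the \emph{completeness} of the commuting hierarchy — together with the lower-triangular Toeplitz block structure \eqref{toeplitz} to close the system. This is the non-diagonalisable analogue of Tsarev's semi-Hamiltonian integrability condition, and, exactly as in the diagonal case where that condition is the compatibility of \eqref{metrics_intro}, it is the presence of a complete set of pairwise commuting flows that supplies the compatibility needed to finish the proof.
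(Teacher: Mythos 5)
Your reduction of the theorem to a single condition is where the argument breaks. You claim that once the symmetry of $\nabla_l c^k_{ij}$ is established, ``the curvature identities \eqref{shc-intro}, i.e.\ \eqref{rc-intri}, come along through the same equivalence.'' They do not. The four items listed after Definition~\ref{Fmnfwcompatconn_flatunit} are the translation of the requirements that $\nabla_\lambda$ be torsionless for every $\lambda$ and that $R_\lambda=R$ (read them off the displayed formulas for $T^{(\lambda)}$ and $R^{(\lambda)}$: the $\lambda$- and $\lambda^2$-coefficients must vanish); condition \eqref{rc-intri} is an additional, genuinely independent constraint on the curvature $R$ of $\nabla$ itself, and the symmetry of $\nabla c$ says nothing about $R$. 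This is precisely why the paper's proof has two separate pillars: for every $Z$ with $d_\nabla(Z\circ)=0$ one derives the pointwise identity $(R^k_{lmi}c^h_{pk}+R^k_{lip}c^h_{mk}+R^k_{lpm}c^h_{ik})Z^l=0$ (equation \eqref{LPVG_eq: 3RCX}), and only then does the linear independence of $X_{(1)},\dots,X_{(n)}$ let one strip off the $Z^l$ and conclude \eqref{shc-intro}. Your proposal contains no argument for this identity, so the curvature half of the theorem --- arguably its main content --- is simply missing.

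On the parts you do address: the propagation of $d_\nabla(X\circ)=0$ to the whole family $\{X_{(i)}\}$ via commutativity of the flows and regularity of $X\circ$ is in the right spirit and matches what the underlying proof needs. For the symmetry of $\nabla c$, however, the mechanism is again a pointwise identity, $(\nabla_jc^i_{ks}-\nabla_kc^i_{js})Z^s=0$ for each $Z=X_{(i)}$ (equation \eqref{almostsymmetryofnablac}), followed by linear independence --- not the frame-expansion scheme you sketch, which you yourself flag as the ``main obstacle'' and leave open. (In fact the appendix of the present paper shows this second property needs only the single distinguished $X$ together with $\nabla e=0$, so no completeness of the hierarchy is required there; completeness enters the theorem through the curvature identity, not through $\nabla c$.) Your closing analogy with Tsarev's semi-Hamiltonian condition is suggestive but does not substitute for these two identities.
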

\hspace{-1em}The proof of Theorem \ref{LPVG_mainTh} relies on the following facts:
\begin{itemize}
\item if $Z$ is a local vector field realising $d_\nabla(Z\circ)=0$ then
\begin{equation}\label{LPVG_eq: 3RCX}
	(R^k_{lmi}c^h_{pk} + R^k_{lip}c^h_{mk} + R^k_{lpm}c^h_{ik})Z^l = 0,\qquad i,p,m,h\inw,
\end{equation}
for any $Z\in\{X_{(1)},\dots,X_{(n)}\}$. The assumption of $\{X_{(i)}\}_{i\in\{1,\dots,n\}}$ providing a basis allows one to conclude that \eqref{LPVG_eq: 3RCX} must hold for any arbitrary local vector field $Z$, implying \eqref{shc-intro}.
\item  if $Z$ is a local vector field realising $d_\nabla(Z\circ)=0$ then
\begin{equation}
		\big(\nabla_jc^i_{ks}-\nabla_kc^i_{js}\big)Z^s=0.
		\label{almostsymmetryofnablac}
	\end{equation}
for any $Z\in\{X_{(1)},\dots,X_{(n)}\}$,  implying the symmetry  of  the tensor field $\nabla_jc^i_{ks}$. In the  Appendix  we show that in fact this second property does not require the existence of a frame of vector fields defining commuting flows.   
\end{itemize}
On the other hand, according to the results of \cite{LPR}, given an F-manifold with compatible connection and flat unit vector field, it is possible to introduce a set of commuting flows of the form
\begin{equation}\label{flowX}
u^i_{t_k}=V^i_ku^k_x=c^i_{jk}X_{(k)}^ju^k_x,\qquad i\inw,\quad k=1,2,\dots
\end{equation}
by requiring the local vector fields $X_{(1)},X_{(2)},\dots$ defining such flows to be solutions of the equation $d_{\nabla}(X\circ)=0$. Moreover, the flows \eqref{flowX} admit common solutions  $\textbf{u}(x,t_1,t_2,\dots)$ obtained, in implicit form, by means of the generalised hodograph formula.

\begin{theorem}
Let $(M,\circ,e,\nabla)$ be an F-manifold with compatible connection and flat unit, then  
 a vector valued  function $\textbf{u}(x,t_1,t_2,\dots)$ satisfying
\begin{equation}\label{ndhm}
x\,e^{i}+\sum_kt_k\,X^{i}_{(k)}(\textbf{u}(x,t_1,t_2,\dots))=0,\qquad i\inw,
\end{equation}
where $X_{(1)},X_{(2)},\dots$ are solutions of of the equation $d_{\nabla}(X\circ)=0$, satisfies simultaneously all the systems \eqref{flowX}.
\end{theorem}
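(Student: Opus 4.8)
The plan is to verify the claim by implicitly differentiating the relation \eqref{ndhm} and reducing the desired equations of motion to the commutativity of operators of multiplication. Introduce the parameter-dependent vector field $Z^i := x\,e^i + \sum_k t_k X^i_{(k)}$ on $M$, so that \eqref{ndhm} reads $Z^i(\mathbf{u}(x,t_1,t_2,\dots))=0$. The central observation is that, because each $X_{(k)}$ solves $d_\nabla(X_{(k)}\circ)=0$, the covariant differential $\nabla Z$ is itself a multiplication operator. Indeed, expanding $d_\nabla(X_{(k)}\circ)=0$ and using the symmetry in the lower indices of the tensor field $\nabla_l c^i_{jk}$ (condition (3) in Definition \ref{Fmnfwcompatconn_flatunit}), the terms containing $\nabla c$ cancel and one is left with $c^i_{bj}\nabla_a X^j_{(k)}=c^i_{aj}\nabla_b X^j_{(k)}$; contracting with $e^b$ and using that $e$ is the unit gives $\nabla_a X^i_{(k)} = c^i_{aj}Y^j_{(k)}$ with $Y_{(k)}:=\nabla_e X_{(k)}$, that is $\nabla X_{(k)} = Y_{(k)}\circ$. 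Since $\nabla e=0$ by the flat unit assumption, it follows that $\nabla_a Z^i = c^i_{aj}\tilde Y^j=(\tilde Y\circ)^i_a$, where $\tilde Y:=\sum_k t_k Y_{(k)}$, so $\mathcal B:=\nabla Z$ equals the multiplication operator $\tilde Y\circ$.

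Next I would differentiate the identity $Z(\mathbf{u})=0$ with respect to $x$ and to each $t_m$. On the solution locus $Z=0$, so $\partial_a Z^i = \nabla_a Z^i = \mathcal B^i_a$, and the chain rule yields the two linear systems $\mathcal B\,\mathbf{u}_x = -e$ and $\mathcal B\,\mathbf{u}_{t_m} = -X_{(m)}$, the inhomogeneous terms being the explicit $x$- and $t_m$-derivatives of $Z$, namely $e^i$ and $X^i_{(m)}$. I would then assume $\mathcal B$ invertible; this is generic, since $\mathcal B=\tilde Y\circ$ is invertible exactly when $\tilde Y$ is an invertible element of the algebra, and it is in any case the very condition under which \eqref{ndhm} locally defines $\mathbf{u}$ through the implicit function theorem. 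Under this assumption $\mathbf{u}_x$ and $\mathbf{u}_{t_m}$ are uniquely determined.

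Finally I would exploit that multiplication operators all commute. Writing $V_{(m)}:=X_{(m)}\circ$, commutativity and associativity of $\circ$ give $\mathcal B\,V_{(m)} = (\tilde Y\circ X_{(m)})\circ = V_{(m)}\,\mathcal B$, while $V_{(m)}e = X_{(m)}\circ e = X_{(m)}$. Hence
\[
\mathcal B\big(\mathbf{u}_{t_m} - V_{(m)}\mathbf{u}_x\big) = -X_{(m)} - V_{(m)}\big(\mathcal B\,\mathbf{u}_x\big) = -X_{(m)} - V_{(m)}(-e) = 0,
\]
and invertibility of $\mathcal B$ forces $\mathbf{u}_{t_m} = X_{(m)}\circ\mathbf{u}_x$, which is precisely the flow \eqref{flowX}. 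As this holds for every $m$, the function $\mathbf{u}$ satisfies all the systems simultaneously. I expect the only genuinely delicate point to be the first step, namely recognising that the flat-unit and $d_\nabla(X_{(k)}\circ)=0$ hypotheses force $\nabla Z$ to be a pure multiplication operator, since it is exactly this fact that collapses the implicit-differentiation identities onto the commuting-operators argument; once the invertibility of $\mathcal B$ is granted, the remainder is formal. It is worth noting that the curvature condition \eqref{rc-intri} is not used directly here: it enters only through the existence of the solutions $X_{(k)}$ of $d_\nabla(X_{(k)}\circ)=0$ and the mutual compatibility of the flows, whereas the pointwise verification above relies solely on commutativity, associativity, the symmetry of $\nabla c$, and $\nabla e=0$.
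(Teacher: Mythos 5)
Your proof is correct and follows essentially the same route as the paper's: you show that on the solution locus the Jacobian of the hodograph relation coincides with the covariant one and is therefore the multiplication operator $(\sum_k t_k\nabla_eX_{(k)})\circ$, then differentiate in $x$ and $t_m$ and use invertibility to conclude — these are exactly the paper's three steps. The only cosmetic differences are that you obtain the multiplication-operator form directly from $\nabla X_{(k)}=(\nabla_eX_{(k)})\circ$ rather than via the symmetry condition \eqref{eq:cond3temporary} and an auxiliary lemma, and you close the argument by commutativity of $\mathcal B$ with $X_{(m)}\circ$ plus injectivity instead of writing out $\mathcal B^{-1}$ explicitly in the algebra.
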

\hspace{-1em}The proof (given in  \cite{LPVGhodograph})  is based  on the following facts:
\begin{enumerate}
\item  The  matrix
\begin{equation}\label{TsMatrix}
M^i_j:=\sum_kt_k\,\d_jX^i_{(k)},\qquad i,j\inw.
\end{equation}
evaluated on the function  $\textbf{u}(x,t_1,t_2,\dots)$ defined by \eqref{ndhm} coincides with  the matrix
\begin{equation}\label{TsMatrix_nabla}
M^i_j(\textbf{u}(x,t_1,t_2,\dots))=\sum_kt_k\,\nabla_jX^i_{(k)},\qquad i,j\inw.
\end{equation}
\item Using this fact it is immediate to check that $M^i_j(\textbf{u}(x,t_1,t_2,\dots))$ satisfies the condition
 \begin{equation}
c^i_{js}M_k^s(\textbf{u}(x,t_1,t_2,\dots))=c^i_{ks}M_j^s(\textbf{u}(x,t_1,t_2,\dots)),\qquad i,j,k\inw,
\label{eq:cond3temporary}
\end{equation}
and this implies that $M(\textbf{u}(x,t_1,t_2,\dots))$ has precisely the form
\[M^i_j(\textbf{u}(x,t_1,t_2,\dots))=c^i_{jk}Z^k,\qquad i,j\inw,\]  
for some vector valued function $Z$.
\item  By differentiating \eqref{ndhm} with respect to $x$ and $t_k$ respectively, we  get
\begin{equation}\label{xtder}
e^i+M^i_j(\textbf{u}(x,t_1,t_2,\dots))u^j_x=0,\qquad X^i_{(k)}+M^i_j(\textbf{u}(x,t_1,t_2,\dots))u^j_{t_k}=0.
\end{equation}
Taking into account that $(M(\textbf{u}(x,t_1,t_2,\dots))^{-1})^i_j=c^i_{jk}(Z^{-1})^k$
 where $Z$ is the unique  solution of the equation $M(\textbf{u}(x,t_1,t_2,\dots))Z^{-1}=e$, from \eqref{xtder} we obtain
\[u^i_x=-c^i_{js}(Z^{-1})^se^j=-(Z^{-1})^i,\quad u^i_{t_k}=-c^i_{js}(Z^{-1})^sX^j_{(k)}=c^i_{js}X^j_{(k)}u^s_x.\]
\end{enumerate}
The above procedure allows one to obtain solutions starting from symmetries (commuting flows). In  \cite{LPVGhodograph} we have proved that, under some additional conditions, all the solutions can be obtained in  this way. We have called this property \emph{completeness  of the symmetries}.

\section{Densities of conservation laws}
\hspace{-1em}Given a system of hydrodynamic type
\[u^i_t=V^i_j({\bf u})u^j_x,\]
densities of  conservation laws are functions $h(u^1 , \dots, u^n)$  such that $h_t$ is a total  $x$-derivative. Under periodic or vanishing (at infinity) boundary conditions, this ensures that the integral of these quantities (over the interval of definition in the periodic case and over the real line in the vanishing case) is constant on  the solutions of the system. Due to the results of \cite{DT}  on the variational bicomplex in the  formal calculus of variations  (see also \cite{DZ}),
 in order to prove that $h_t$ is a total $x$-derivative we need to prove that $\f{\delta \overline{h}_t}{\delta u^i}=0$, where $\overline{h}=\int h(u)\,dx$.
 
\begin{proposition}
Let $\nabla$ be a torsionless connection satisfying  $d_{\nabla}V=0$. Then the vanishing of $\f{\delta \overline{h}_t}{\delta u^l}$ is equivalent to  the condition:
\begin{equation}
V^s_k\nabla_sdh_l-V^s_l\nabla_s dh_k=0.
\end{equation}
\end{proposition}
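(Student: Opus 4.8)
The plan is to compute the variational derivative $\frac{\delta \overline{h}_t}{\delta u^l}$ explicitly in local coordinates, and then to recognise the resulting expression as a covariantisation of the claimed condition, with the entire discrepancy between the two absorbed by the hypothesis $d_\nabla V = 0$. The only external input I would invoke is the quoted fact from the variational bicomplex that $\frac{\delta \overline{h}_t}{\delta u^l}=0$ is the right object to analyse; everything else is a direct computation plus index bookkeeping.

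First I would write $h_t=\partial_i h\,V^i_j u^j_x$, regarded as a density depending on $u$ and $u_x$, and apply the Euler operator $\frac{\delta}{\delta u^l}=\frac{\partial}{\partial u^l}-\partial_x\frac{\partial}{\partial u^l_x}$. Because $h_t$ is linear in $u_x$, both contributions are elementary: the pointwise part gives $(\partial_l\partial_i h\,V^i_k+\partial_i h\,\partial_l V^i_k)u^k_x$, while the total $x$-derivative of $\frac{\partial h_t}{\partial u^l_x}=\partial_i h\,V^i_l$ gives $(\partial_k\partial_i h\,V^i_l+\partial_i h\,\partial_k V^i_l)u^k_x$. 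Subtracting, relabelling the summed index $i\to s$, and collecting the coefficient of $u^k_x$ yields
\[
\frac{\delta \overline{h}_t}{\delta u^l}=\left[\,V^s_k\,\partial_s\partial_l h-V^s_l\,\partial_s\partial_k h+(\partial_l V^s_k-\partial_k V^s_l)\,\partial_s h\,\right]u^k_x.
\]
Since the jet coordinates $u^k_x$ are functionally independent, the vanishing of $\frac{\delta \overline{h}_t}{\delta u^l}$ for every $l$ is equivalent to the vanishing of the bracketed coefficient for all $k,l$; I would also note that this coefficient is antisymmetric in $k,l$, matching the antisymmetry of the target condition $V^s_k\nabla_s dh_l-V^s_l\nabla_s dh_k=0$.

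Next I would covariantise. Writing the covariant Hessian $\nabla_s dh_l=\partial_s\partial_l h-\Gamma^m_{sl}\partial_m h$ and expanding the claimed condition, the pure second-derivative terms coincide exactly with those found above, so the two conditions differ only in their first-order parts: the variational expression carries $(\partial_l V^s_k-\partial_k V^s_l)\partial_s h$, whereas the covariant one, after relabelling summed indices, carries $(V^p_l\Gamma^s_{pk}-V^p_k\Gamma^s_{pl})\partial_s h$. Hence the whole equivalence reduces to the single tensorial identity
\[
\partial_l V^s_k-\partial_k V^s_l=V^p_l\Gamma^s_{pk}-V^p_k\Gamma^s_{pl}.
\]

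The final and decisive step is to recognise this identity as precisely the coordinate form of $d_\nabla V=0$. Expanding $(d_\nabla V)^s_{lk}=\nabla_l V^s_k-\nabla_k V^s_l$, the terms coming from the lower index of $V$ are $-(\Gamma^p_{lk}-\Gamma^p_{kl})V^s_p$, which vanish by torsionlessness; what remains is $\partial_l V^s_k-\partial_k V^s_l+\Gamma^s_{lp}V^p_k-\Gamma^s_{kp}V^p_l$, and symmetry of the Christoffel symbols in their lower indices rearranges this into exactly the identity above. I expect the main obstacle to be purely the index bookkeeping: correctly matching the two first-order expressions and keeping track of which symmetries (torsionlessness of $\Gamma$ versus symmetry in its lower indices) remove which terms. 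No structural input beyond the hypothesis $d_\nabla V=0$ and the torsionlessness of $\nabla$ is needed.
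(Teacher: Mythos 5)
Your proposal is correct and follows essentially the same route as the paper: compute the Euler--Lagrange derivative of $h_t$, obtain the coefficient $V^s_k\partial_s\partial_l h-V^s_l\partial_s\partial_k h+(\partial_lV^s_k-\partial_kV^s_l)\partial_s h$ of $u^k_x$, and absorb the first-order discrepancy via the identity $\partial_lV^i_k-\partial_kV^i_l=(d_\nabla V)^i_{lk}+\Gamma^i_{ks}V^s_l-\Gamma^i_{ls}V^s_k$ together with $d_\nabla V=0$ and torsionlessness. The only cosmetic difference is that the paper keeps the $(d_\nabla V)^s_{lk}\,dh_s$ term explicit in the final covariant expression before invoking the hypothesis, whereas you invoke it at the level of the first-order identity.
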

\begin{proof}
	By straightforward computation we  get
	\begin{equation}
		\f{\delta \overline{h}_t}{\delta u^l}=\left(\d_l\d_i h\,V^i_k-\d_k\d_i h\,V^i_l+\d_i h\left(\d_lV^i_k-\d_kV^i_l\right)\right)u^k_x.
	\end{equation}
	Using the identity
	\[\d_lV^i_k-\d_kV^i_l=(d_{\nabla}V)^i_{lk}+\Gamma^i_{ks}V^s_l-\Gamma^i_{ls}V^s_k,\]
	we obtain
	\begin{eqnarray*}
		\f{\delta h_t}{\delta u^l}&=&\left(\d_l\d_i h\,V^i_k-\d_k\d_i h\,V^i_l+\d_i h\left((d_{\nabla}V)^i_{lk}+\Gamma^i_{ks}V^s_l-\Gamma^i_{ls}V^s_k\right)\right)u^k_x\\
		&=&\left(V^i_k(\d_i\d_l h-\Gamma^s_{li}\d_s h)-V^i_l(\d_i\d_k h-\Gamma^s_{ki}\d_s h)+(d_{\nabla}V)^s_{lk}\d_s h\right)u^k_x\\
		&=&\left(V^i_k\nabla_idh_l-V^i_l\nabla_i dh_k+\left(d_{\nabla}V\right)^s_{lk}dh_s\right)u^k_x.
	\end{eqnarray*}
\end{proof}

\hspace{-1em}As a  consequence of the  above  proposition and of Theorem  \ref{LPVG_mainTh}  we have the following result.
\begin{proposition}\label{sys-cl}
	Let $\{X_{(1)},\dots,X_{(n)}\}$ be a set of linearly independent local vector fields on an $n$-dimensional regular F-manifold $(M,\circ,e)$ defining commuting  flows of the
	form
\[u^i_{t_{(k)}}=(X_{(k)}\circ{\bf u})^i_x,\qquad i=1 , \dots, n,\]	
	satisfying conditions of  Theorem  \ref{LPVG_mainTh} and let $\nabla$ be the associated torsionless compatible connection $\nabla$. Then a function $h$ is  a density of conservation laws
	 for the above flows and for any flow defined by a solution of the equation $d_{\nabla}(X\circ)=0$ if and only  if it satisfies the condition  
	 \begin{equation}\label{dcl1}
	 c^s_{kj}\nabla_sdh_i=c^s_{ij}\nabla_s dh_k.
	 \end{equation} 
\end{proposition}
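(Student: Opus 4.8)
The plan is to reduce the claim, flow by flow, to the preceding Proposition, and then to use that the generators $X_{(1)},\dots,X_{(n)}$ form a frame. First I would record the shape of the velocity matrices. The flow generated by a vector field $X$ is $u^i_{t}=(X\circ\mathbf{u})^i_x$, so its velocity is $V^i_j=c^i_{jp}X^p$; moreover any such $X$ arising in the hierarchy—in particular each $X_{(k)}$—solves $d_{\nabla}(X\circ)=0$, which is precisely the content underlying the construction of $\nabla$ in Theorem \ref{LPVG_mainTh} (see \cite{LPVG}). In particular $\dna V=0$ for every flow under consideration, so the hypothesis of the preceding Proposition is met, and $h$ is a density of conservation laws for the flow generated by $X$ if and only if
\[
V^s_a\nabla_s dh_b-V^s_b\nabla_s dh_a=0,\qquad a,b\inw .
\]

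Substituting $V^s_a=c^s_{ap}X^p$ and collecting the common factor $X^p$, the latter condition reads
\[
X^p\big(c^s_{ap}\nabla_s dh_b-c^s_{bp}\nabla_s dh_a\big)=0,\qquad a,b\inw .
\]
I would then observe that the expression in brackets is exactly the difference occurring in \eqref{dcl1}: choosing $(i,j,k)=(b,p,a)$ in \eqref{dcl1} turns it into $c^s_{ap}\nabla_s dh_b=c^s_{bp}\nabla_s dh_a$. This identification is the crux linking the two formulations.

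Both implications then follow at once. For the ``if'' part, if \eqref{dcl1} holds then the bracket vanishes identically for all $a,b,p$, and contracting with $X^p$ shows that the density-of-conservation-law condition holds for every $X$ solving $d_{\nabla}(X\circ)=0$, in particular for all the flows of the hierarchy. For the ``only if'' part, assume $h$ is a density of conservation laws for the flows generated by $X_{(1)},\dots,X_{(n)}$. Then, for each fixed pair $(a,b)$, the covector $Q_p:=c^s_{ap}\nabla_s dh_b-c^s_{bp}\nabla_s dh_a$ is annihilated by each $X_{(k)}$; since the $n$ vector fields $X_{(1)},\dots,X_{(n)}$ are pointwise linearly independent on the $n$-dimensional manifold $M$, they span the tangent space, forcing $Q_p=0$, which is \eqref{dcl1}.

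The only genuinely delicate point I anticipate is the very first step: justifying that each generator $X_{(k)}$ satisfies $d_{\nabla}(X_{(k)}\circ)=0$, so that the preceding Proposition may be applied to it. This rests on the results of \cite{LPVG} recalled around Theorem \ref{LPVG_mainTh}, according to which the commuting flows of the hierarchy are exactly those generated by solutions of $d_{\nabla}(X\circ)=0$. Granting this, the remainder is a direct substitution together with the elementary fact that a covector annihilating a frame must vanish.
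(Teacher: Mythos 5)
Your proof is correct and follows exactly the route the paper intends: the paper gives no explicit argument beyond saying the proposition is "a consequence of the above proposition and of Theorem \ref{LPVG_mainTh}", and your write-up simply makes that consequence explicit — substitute $V^i_j=c^i_{jp}X^p$ into the criterion $V^s_k\nabla_s dh_l=V^s_l\nabla_s dh_k$, strip off $X^p$ using that the $X_{(k)}$ form a frame, and contract back for the converse. The point you flag as delicate (that every $X_{(k)}$ solves $d_{\nabla}(X_{(k)}\circ)=0$, not only the one used to define $\nabla$) is indeed the ingredient imported from \cite{LPVG}, and citing it there is exactly what the paper does.
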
 

\hspace{-1em}Condition  \eqref{dcl1}  can be simplified as follows. 

\begin{proposition}
 Condition \eqref{dcl1} can be reduced to
  \begin{equation}\label{dcl2}
	 \nabla_j\omega_i=c^s_{ij}e^k\nabla_s\omega_k.
	 \end{equation} 
	 \end{proposition}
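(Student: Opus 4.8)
The plan is to prove the equivalence of \eqref{dcl1} and \eqref{dcl2} by establishing both implications, writing $\omega=dh$ throughout so that $\omega_i=dh_i$. The only structural inputs needed are the unit property of $e$ and the associativity--commutativity of the product $\circ$. First I would record the unit identity: since $e\circ X=X$ for every local vector field $X$, in components one has $c^s_{jk}e^k=\delta^s_j$, where I have also used commutativity $c^s_{jk}=c^s_{kj}$.

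For the implication \eqref{dcl1}$\Rightarrow$\eqref{dcl2} I would simply contract \eqref{dcl1} with $e^k$. On the left-hand side the factor $c^s_{kj}e^k$ collapses to $\delta^s_j$ by the unit identity, leaving $\nabla_j\omega_i$, while the right-hand side becomes $c^s_{ij}e^k\nabla_s\omega_k$. This is exactly \eqref{dcl2}, and the step is a one-line computation. For the converse I would abbreviate $\psi_s:=e^k\nabla_s\omega_k$, so that \eqref{dcl2} reads $\nabla_j\omega_i=c^s_{ij}\psi_s$. Substituting the two instances $\nabla_s\omega_i=c^t_{is}\psi_t$ and $\nabla_s\omega_k=c^t_{ks}\psi_t$ into the respective sides of \eqref{dcl1}, the difference of the two sides reduces to $(c^s_{kj}c^t_{is}-c^s_{ij}c^t_{ks})\psi_t$.

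The heart of the argument, and the one place where the F-manifold structure enters nontrivially, is the identity $c^s_{kj}c^t_{is}=c^s_{ij}c^t_{ks}$, which is precisely the associativity of $\circ$ combined with its commutativity (both sides equal $c^s_{ki}c^t_{sj}$ after suitable relabelling). Granting this, the bracket above vanishes identically and \eqref{dcl1} follows from \eqref{dcl2}.

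I do not expect a serious obstacle: the forward direction is an immediate contraction against the unit, and the converse reduces to associativity. The only point requiring genuine care is that the equivalence truly needs both directions, since contracting \eqref{dcl1} with $e^k$ could a priori discard information; the converse computation confirms that nothing is lost, because \eqref{dcl2} already forces the full symmetrised relation \eqref{dcl1} through associativity alone.
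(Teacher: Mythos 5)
Your two implications are exactly the paper's: the forward direction is the contraction of \eqref{dcl1} with $e^k$ using $c^s_{kj}e^k=\delta^s_j$, and the converse is the substitution of \eqref{dcl2} into \eqref{dcl1} followed by associativity--commutativity of $\circ$, so on that score the proposal is correct and matches the paper. The one thing you skip, by declaring ``$\omega=dh$ throughout,'' is the step that actually makes the statement a \emph{reduction}: \eqref{dcl2} is posed for an a priori arbitrary covector $\omega$, whereas \eqref{dcl1} is a second-order system for $h$, so one must check that any solution of \eqref{dcl2} is automatically closed. The paper does this by antisymmetrising \eqref{dcl3}:
\[
\partial_j\omega_i-\partial_i\omega_j=(\Gamma_{ji}^s-\Gamma_{ij}^s)\omega_s+(c^s_{ij}-c^s_{ji})e^k\nabla_s\omega_k=0,
\]
using torsion-freeness of $\nabla$ and commutativity of the product, whence $\omega=dh$ locally. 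Without this observation your converse only shows that exact solutions of \eqref{dcl2} yield solutions of \eqref{dcl1}; with it, every solution of the first-order system \eqref{dcl2} produces a density of conservation laws, which is the point of the proposition. Adding that one display closes the gap.
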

	\begin{proof}
		Rewriting \eqref{dcl2} as
		\begin{equation}\label{dcl3}
			\partial_j\omega_i=\Gamma_{ji}^s\omega_s+c^s_{ij}e^t\nabla_s\omega_t,
		\end{equation}  
		we observe that 
		\[\partial_j\omega_i-\partial_i\omega_j=(\Gamma_{ji}^s-\Gamma_{ij}^s)\omega_s+(c^s_{ij}-c^s_{ji})e^k\nabla_s\omega_k=0,\]
		implying that locally there exists a function $h$  such that $\omega_i=\partial_i h$.
		Moreover  substituting $\nabla_jdh_i=c^s_{ij}e^k\nabla_s dh_k$ in \eqref{dcl1}  we get the identity
		\[c^s_{kj}c^t_{is}e^m\nabla_tdh_m=c^s_{ki}c^t_{js}e^m\nabla_tdh_m.\]
		Conversely, multiplying  both  sides of \eqref{dcl1} by $e^k$ and taking the sum over $k$, we get \eqref{dcl2} with $\omega$ replaced by $dh$.
	\end{proof} 

\vspace{1em}
    
	\hspace{-1em}In the semisimple  case the system  \eqref{dcl3} reads
	\[ \partial_j\omega_i=\Gamma_{ji}^s\omega_s+\delta^i_je^t\nabla_i\omega_t,\]
	with $e^i=1$, $\Gamma^i_{jk}=0$ for distinct indices, $\Gamma^{i}_{jj}=-\Gamma^i_{ij}$ and $\Gamma^i_{ii}=-\sum_{j\ne i}\Gamma^{i}_{ij}$. For $i=j$ we have 
\begin{eqnarray*}
\partial_i\omega_i&=&\Gamma_{ii}^s\omega_s+e^t\nabla_i\omega_t=\Gamma_{ii}^s\omega_s+\nabla_i\omega_i+\sum_{t\ne  i}\nabla_i\omega_t.
\end{eqnarray*}
Taking into account  the equations for $i\ne j$ we get an identity.
\newline
\newline
For $i\ne j$  we obtain
\[ \partial_j\omega_i=\Gamma_{ji}^i\omega_i+\Gamma_{ji}^j\omega_j.\]
Taking into account that $\omega_i=\partial_i h$ and that $\Gamma_{ji}^i=\frac{\partial_j X^i}{X^j-X^i}$ for $i\neq j$, we obtain the well known system for densities of conservation laws  for diagonal  systems with pairwise distinct  characteristic velocities.
\newline
\newline
The compatibility of such system coincides with Tsarev's integrability condition and ensures
 the  existence of  a  family of solutions depending on $n$ arbitrary  functions of a single variable \cite{ts91}.

\subsection{Compatibility conditions}
Given a solution $(\omega_1 , \dots, \omega_n)$ of the system \eqref{dcl1}, taking into account that
\[\d_k\d_j\omega_i-\d_j\d_k\omega_i=0\]
we should have
\begin{equation}\label{conscond}
\d_k\left(\Gamma_{ji}^s\omega_s+c^s_{ij}e^t\nabla_s\omega_t\right)-\d_j\left(\Gamma_{ki}^s\omega_s+c^s_{ik}e^t\nabla_s\omega_t\right)=0.
\end{equation}

\begin{proposition}\label{compsysdcl}
Let $(M,\nabla,\circ,e)$ an F-manifold with  compatible connection and let 
\[{\bf u}_t=X\circ{\bf u}_x\]
a system  of hydrodynamic type defined by a solution of  the  equation $d_{\nabla}(X\circ)=0$, then the  system \eqref{dcl3} for densities of conservation laws 
satisfies \eqref{conscond}.
\end{proposition}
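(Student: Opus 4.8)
The plan is to recast the coordinate identity \eqref{conscond} as a Ricci-identity consistency condition for the prescribed Hessian, and then to show that it collapses onto the curvature condition \eqref{rc-intri-2}. Writing $\psi_s:=e^t\nabla_s\omega_t$, the system \eqref{dcl3} is equivalent to the prescription $\nabla_j\omega_i=c^s_{ij}\psi_s$, which is symmetric in $i,j$ because $c^s_{ij}$ is; hence $\nabla_s\omega_t=\nabla_t\omega_s$ and $\psi_s=e^t\nabla_t\omega_s=(\nabla_e\omega)_s$. Since ordinary partial derivatives commute, \eqref{conscond} is precisely the requirement that this prescribed covariant Hessian obey the Ricci identity $\nabla_k(\nabla_j\omega_i)-\nabla_j(\nabla_k\omega_i)=-R^s_{kji}\omega_s$. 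Differentiating the prescription and antisymmetrising in $j,k$, the terms $(\nabla_kc^s_{ij}-\nabla_jc^s_{ik})\psi_s$ drop out by the symmetry of $\nabla_lc^s_{ij}$ in its lower indices (the third of the equivalent conditions following Definition \ref{Fmnfwcompatconn_flatunit}), so the claim reduces to
\[ c^s_{ij}\nabla_k\psi_s-c^s_{ik}\nabla_j\psi_s=-R^s_{kji}\omega_s. \]

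First I would compute $\nabla_k\psi_s$. Using $\nabla e=0$ gives $\nabla_k\psi_s=e^t\nabla_k\nabla_t\omega_s$; applying the Ricci identity to swap $k$ and $t$, then inserting $\nabla_k\omega_s=c^r_{sk}\psi_r$ from the system and using $e^t\nabla_tc^r_{sk}=0$ (which follows from differentiating $e^tc^r_{ts}=\delta^r_s$ with $\nabla e=0$, combined with the symmetry of $\nabla c$), I expect to obtain
\[ \nabla_k\psi_s=c^r_{sk}(\nabla_e\psi)_r-e^tR^r_{kts}\omega_r. \]
Substituting this into the reduced identity, the terms carrying $(\nabla_e\psi)_r$ combine as $(c^s_{ij}c^r_{sk}-c^s_{ik}c^r_{sj})(\nabla_e\psi)_r$, which vanishes identically by commutativity and associativity of $\circ$. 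What remains is the purely tensorial curvature identity $c^s_{ij}e^tR^r_{kts}-c^s_{ik}e^tR^r_{jts}=R^r_{kji}$, that is,
\[ R(e_k,e)(e_i\circ e_j)-R(e_j,e)(e_i\circ e_k)=R(e_k,e_j)e_i. \]

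To close the argument I would derive this last identity from \eqref{rc-intri-2}. Evaluating \eqref{rc-intri-2} on $(X,Y,Z,W)=(e_i,e_k,e,e_j)$ and on $(e_i,e_j,e,e_k)$, using $e\circ e_a=e_a$, and subtracting the two relations makes the common $R(e,e_i)(e_j\circ e_k)$ term cancel; the surviving right-hand side $R(e_i,e_j)e_k-R(e_i,e_k)e_j$ is converted into $R(e_k,e_j)e_i$ by a single application of the first Bianchi identity, valid since $\nabla$ is torsionless. This reproduces exactly the required identity, and the hypothesis that $(M,\nabla,\circ,e)$ is an F-manifold with compatible connection enters precisely here, through \eqref{rc-intri-2}.

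The mechanical part of the proof is the expansion of \eqref{conscond} in terms of $\omega$ and $\psi$ and the bookkeeping of the $\Gamma$-terms; the genuine content, and the step I expect to be most delicate, is the computation of $\nabla_k\psi_s$, where one must treat the Ricci commutator carefully and verify $e^t\nabla_tc^r_{sk}=0$, since it is this relation that splits the expression cleanly into an associativity-cancelling piece and the curvature piece. Once that split is in place, the final reduction to \eqref{rc-intri-2} together with Bianchi is the crux and completes the argument.
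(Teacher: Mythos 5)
Your proof is correct and follows essentially the same route as the paper's: both antisymmetrise the prescribed Hessian $\nabla_j\omega_i=c^s_{ij}\psi_s$, invoke the Ricci identity together with $\nabla e=0$, $e^t\nabla_t c^r_{sk}=0$ and associativity to cancel the $\nabla_e\psi$ terms, and reduce the residue to the contraction of the curvature condition with the unit vector field. The only cosmetic difference is that you close by deriving the contracted identity from \eqref{rc-intri-2} plus the first Bianchi identity, where the paper cites the coordinate form \eqref{shc-intro} directly --- an equivalence the paper itself records.
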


\begin{proof}
After a long but straightforward computation  one obtains
\[\d_k\left(\Gamma_{ji}^s\omega_s+c^s_{ij}e^t\nabla_s\omega_t\right)-\d_j\left(\Gamma_{ki}^s\omega_s+c^s_{ik}e^t\nabla_s\omega_t\right)=R^s_{ijk}\omega_s+c^s_{ji}\nabla_k\nabla_e\omega_s-c^s_{ki}\nabla_j\nabla_e\omega_s.\]
Using the identity
\[(\nabla_k\nabla_m-\nabla_m\nabla_k)\omega_s=R^t_{skm}\omega_t,\]
we have
\[R^s_{ijk}\omega_s+c^s_{ji}\nabla_k\nabla_e\omega_s-c^s_{ki}\nabla_j\nabla_e\omega_s=\left[(R^t_{sjk}c^s_{mi}+R^t_{skm}c^s_{ji}+R^t_{smj}c^s_{ki})e^m\right]\omega_t.\]

\end{proof}

 \hspace{-1em}In the above  computation one uses  the fact that $\omega$ is a solution of the system \eqref{dcl3}. According  to a classical theorem of  Darboux (see \cite{darboux})  the validity of \eqref{conscond} on  the  solutions  of the system  \eqref{dcl3} implies the  existence  of a family of solutions (depending  on functional parameters)  provided that the system has the form 
\[\frac{\partial\omega_i}{\partial u^j}=f_{ij}(\boldsymbol{\omega},{\bf u}).\]
Not all first order partial derivatives of the unknown functions $\omega_i$ must appear  in the system and the computation of the compatibility  conditions should not introduce the missing ones. Due to the appearance of partial derivatives of the unknown functions in the right-hand side, the system \eqref{dcl3} has not the form considered by Darboux. However, under some extra assumptions, the same considered in \cite{LPVGhodograph} to ensure the completeness of the  symmetries,
 the system \eqref{dcl3} can be reduced to a finite sequence of closed subsystems satisfying Darboux's conditions.  More precisely,  the first subsystem of the sequence already has Darboux's form, while the remaining ones contain in the right-hand sides partial derivatives of functions obtained at the previous steps.

\section{Non-diagonalisable systems of Darboux-Tsarev type}
\hspace{-1em}Let $\circ$ be the  product of a regular F-manifold and let
\begin{equation}\label{DTsys}
	u^i_t=(X\circ u_x)^i=c^i_{jk}X^ju^k_x,\qquad i\inw
\end{equation}
be a system  of hydrodynamic type defined by a vector field $X$ such that, in canonical coordinates, ${X^{1(\alpha)}\neq X^{1(\beta)}}$ for ${\alpha\neq\beta}$, $\alpha,\beta\in\{1,\dots,r\}$, and $X^{2(\alpha)}\neq0$ for each ${\alpha\in\{1,\dots,r\}}$. 

\begin{definition}
	Let us denote
	\begin{align*}
		\mathcal{U}^i:=\{u^{j(\sigma)}\,|\,j\leq i\}_{\sigma\in\{1,\dots,r\}}=\{u^{1(\sigma)},\dots,u^{\min\{i,m_\sigma\}(\sigma)}\}_{\sigma\in\{1,\dots,r\}}.
	\end{align*}
	The system \eqref{DTsys} is said to be of \emph{Darboux-Tsarev type} if the torsionless connection $\nabla$ uniquely defined by the conditions
	\[d_{\nabla}(X\circ)=0,\qquad\nabla e=0,\]
	satisfies the condition \eqref{rc-intri} and, in canonical coordinates, the components   $X^{i(\alpha)}$ of the vector field defining the system depend only on $\mathcal{U}^i$.
\end{definition}

\hspace{-1em}In the case of systems of Darboux-Tsarev type, the  linear system 
\[d_{\nabla}(Y\circ)=0\]
defining the symmetries
\begin{equation}\label{DTsyssym}
	u^i_\tau=(Y\circ u_x)^i=c^i_{jk}Y^ju^k_x,\qquad i\inw,
\end{equation}
of \eqref{DTsys} has a special property which is the main  motivation for their definition. Let us first observe that, in David-Hertling canonical  coordinates, such system reads
\begin{align}
	&\partial_{k(\alpha)}Y^{i(\alpha)}=\partial_{1(\alpha)}Y^{(i-k+1)(\alpha)}+\overset{m_\alpha}{\underset{s=2}{\sum}}\Big(\Gamma^{(i-k+1)(\alpha)}_{1(\alpha)s(\alpha)}-\Gamma^{i(\alpha)}_{k(\alpha)s(\alpha)}\Big)Y^{s(\alpha)},\quad k\in\{2,\dots,m_\alpha\},\label{EqSymm_sameblock_updated}\\
	&\partial_{j(\beta)}Y^{i(\alpha)}=-\overset{i}{\underset{s=1}{\sum}}\Gamma^{i(\alpha)}_{j(\beta)s(\alpha)}Y^{s(\alpha)}-\overset{m_\beta-j+1}{\underset{s=1}{\sum}}\Gamma^{i(\alpha)}_{j(\beta)s(\beta)}Y^{s(\beta)},\qquad\alpha\neq\beta,\,\,j\in\{1,\dots,m_\beta\}.\label{EqSymm_distinctblocks_updated}
\end{align}
For all $m\in\{1,\dots,\underset{{\sigma\in\{1,\dots,r\}}}{\max}m_\sigma\}$,
let us denote $\mathcal{A}_m:=\{\alpha\in\{1,\dots,r\}\,|\,m_\alpha\geq m\}$ and let us call $\mathcal{S}_i$ the subsystem of  \eqref{EqSymm_sameblock_updated}-\eqref{EqSymm_distinctblocks_updated} for $\{Y^{i(\alpha)}\}_{\alpha\in\mathcal{A}_i}$. We have the following result, proved in \cite{LPVGhodograph}.
\begin{theorem}\label{Theorem_Completeness_multiblock}
	In order for the assumptions of Darboux's theorem to be valid for the subsystems $\mathcal{S}_1,\mathcal{S}_2,\dots$, for all $\alpha\in\{1,\dots,r\}$ and $i\in\{1,\dots,m_\alpha\}$, $X^{i(\alpha)}$ must depend on $\mathcal{U}^i$ only. Furthermore, if the system is  of Darboux-Tsarev type then the set of symmetries is complete  and the generalised hodograph formula provides the general solution of the system \eqref{DTsys}.
\end{theorem}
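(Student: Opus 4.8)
The plan is to exploit the hierarchical structure of the linear symmetry system \eqref{EqSymm_sameblock_updated}--\eqref{EqSymm_distinctblocks_updated} organised by the level index $i$. The first observation is that the only derivatives which never occur on a left-hand side are $\partial_{1(\alpha)}Y^{i(\alpha)}$, the derivative of an unknown with respect to the first coordinate of its own block; these are precisely the derivatives that Darboux's theorem permits to be absent, and the functional parameters will accordingly be functions of the single variables $u^{1(\alpha)}$. I would first make the triangular decoupling precise: the subsystem $\mathcal{S}_i$ prescribes all the admissible derivatives of the level-$i$ unknowns $\{Y^{i(\alpha)}\}_{\alpha\in\mathcal{A}_i}$, and by \eqref{EqSymm_sameblock_updated} the only derivatives of unknowns appearing on its right-hand side are the lower-level free derivatives $\partial_{1(\alpha)}Y^{(i-k+1)(\alpha)}$ with $i-k+1<i$. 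Here the truncation of the sums is governed by the vanishing relations of Lemmas \ref{lemma:old1} and \ref{lemma:old2b}, which I would use to check that no unknown of level greater than $i$ is coupled in, so that once $\mathcal{S}_1,\dots,\mathcal{S}_{i-1}$ have been integrated these free derivatives are known functions and $\mathcal{S}_i$ becomes a genuine Darboux-type system in the level-$i$ unknowns alone.

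For the necessity of the dependence condition I would compute the compatibility conditions $\partial_{l(\gamma)}\partial_{j(\beta)}Y^{i(\alpha)}=\partial_{j(\beta)}\partial_{l(\gamma)}Y^{i(\alpha)}$ of each $\mathcal{S}_i$, with $l(\gamma),j(\beta)$ ranging over the admissible directions. Differentiating the right-hand sides of \eqref{EqSymm_sameblock_updated}--\eqref{EqSymm_distinctblocks_updated} produces derivatives of Christoffel symbols, admissible derivatives of $Y$-components (which are re-expressed through the system), and the lower-level free derivatives differentiated once more. The crux is that Darboux's hypothesis forbids the reappearance of the level-$i$ free derivatives $\partial_{1(\alpha)}Y^{i(\alpha)}$, and I would show that this is controlled by the coordinate dependence of the Christoffel symbols. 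Translating the condition $X^{i(\alpha)}=X^{i(\alpha)}(\mathcal{U}^i)$ into $\partial_{k(\beta)}X^{i(\alpha)}=0$ for $k>i$, and propagating this through the explicit formulas of Remark \ref{ChristoffelExplicit}, I would argue that precisely this vanishing keeps the Christoffel coefficients of $\mathcal{S}_i$ independent of the high coordinates $u^{k(\beta)}$, $k>i$; a dependence on any such coordinate would, upon differentiation and re-substitution, generate a term proportional to $\partial_{1(\alpha)}Y^{i(\alpha)}$ through the coefficient $\Gamma^{1(\alpha)}_{j(\beta)1(\alpha)}$ of $Y^{i(\alpha)}$ (obtained from \eqref{eq:lemmaold1diag} with $k=i$), violating Darboux's form. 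This yields the stated necessity.

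For the converse, assuming the system to be of Darboux-Tsarev type, I would apply Darboux's theorem \cite{darboux} to the subsystems in succession. The role of condition \eqref{rc-intri} is to guarantee that the compatibility conditions hold identically on solutions: as in the analogous computation of Proposition \ref{compsysdcl}, the residual obstruction is a curvature expression of the type $R^t_{sjk}c^s_{mi}+R^t_{skm}c^s_{ji}+R^t_{smj}c^s_{ki}$ which vanishes by \eqref{rc-intri}. The autonomous subsystem $\mathcal{S}_1$ is then in Darboux form and its general solution depends on $|\mathcal{A}_1|=r$ arbitrary functions of the variables $u^{1(\alpha)}$; inductively each $\mathcal{S}_i$ contributes $|\mathcal{A}_i|$ further functions of one variable, for a total of $\sum_i|\mathcal{A}_i|=\sum_{\alpha}m_\alpha=n$ arbitrary functions of a single variable. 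To conclude completeness I would follow Tsarev's argument adapted to the present setting: given an arbitrary solution $\mathbf{u}(x,t)$ of \eqref{DTsys}, the abundance of symmetries just established allows one to prescribe Cauchy data for a symmetry $Y$ realising the hodograph relation \eqref{ndhm} along $\mathbf{u}$; the hodograph theorem of Section \ref{SectionFmnfs} then certifies that $\mathbf{u}$ is recovered from $Y$, so that every solution arises in this way.

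The step I expect to be the main obstacle is the bookkeeping in the necessity argument. One must verify, level by level and with precise control of the index ranges in the sums of \eqref{EqSymm_sameblock_updated}--\eqref{EqSymm_distinctblocks_updated}, that differentiating the right-hand sides neither couples in unknowns of level greater than $i$ nor reintroduces the forbidden derivatives $\partial_{1(\alpha)}Y^{i(\alpha)}$, and that the triangular dependence $X^{i(\alpha)}=X^{i(\alpha)}(\mathcal{U}^i)$ is the exact condition preventing the latter. This hinges on differentiating the vanishing patterns of Lemmas \ref{lemma:old1} and \ref{lemma:old2b} and combining them with the explicit Christoffel formulas of Remark \ref{ChristoffelExplicit}; keeping track of which Christoffel symbols survive, and of their dependence on the coordinates $u^{k(\beta)}$ with $k>i$, is where the genuine difficulty of the proof is concentrated.
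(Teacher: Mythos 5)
The paper does not actually prove Theorem \ref{Theorem_Completeness_multiblock} here --- it is quoted from \cite{LPVGhodograph} --- so your attempt can only be measured against the strategy this paper uses for the parallel results (the sequential Darboux reduction for the densities of conservation laws in Section 3 and for the metric system in Section 4). Measured that way, your outline is essentially the intended one: the missing derivatives are exactly $\partial_{1(\alpha)}Y^{i(\alpha)}$, the subsystems $\mathcal{S}_1,\mathcal{S}_2,\dots$ decouple triangularly with only lower-level free derivatives feeding forward, compatibility reduces to the curvature expression killed by \eqref{rc-intri} (via \eqref{LPVG_eq: 3RCX}), the count $\sum_i|\mathcal{A}_i|=n$ is correct, and completeness follows by Tsarev-style matching of Cauchy data through the hodograph relation.

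The one point I would push back on is your mechanism for the necessity claim. You locate the failure in the reappearance of the forbidden derivative $\partial_{1(\alpha)}Y^{i(\alpha)}$ in the compatibility conditions; but if you trace the re-substitution through \eqref{EqSymm_sameblock_updated}, the terms $\partial_{1(\alpha)}Y^{(i-k+1)(\alpha)}$ always carry an upper index strictly below $i$, so the current level's missing derivative does not obviously re-enter by that route. The more direct obstruction when $X^{i(\alpha)}$ depends on a coordinate outside $\mathcal{U}^i$ is the failure of \emph{closure}: Lemma \ref{lemma:old4} is then lost, so the sums $\sum_{s=1}^{m_\beta-j+1}\Gamma^{i(\alpha)}_{j(\beta)s(\beta)}Y^{s(\beta)}$ in \eqref{EqSymm_distinctblocks_updated} (and likewise the $s$-sums in \eqref{EqSymm_sameblock_updated}) retain terms with $s>i$, i.e.\ $\mathcal{S}_i$ couples to unknowns of higher level and is no longer a determined system of Darboux type in the level-$i$ unknowns at all --- the scheme breaks at the level of the form of the system, before any compatibility computation. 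Your sufficiency direction already implicitly uses this (you invoke Lemmas \ref{lemma:old1} and \ref{lemma:old2b} to truncate the sums), so the fix is to run that same truncation argument in reverse for the necessity part; the completeness paragraph, while only sketched, is the standard argument and is the right route.
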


\subsection{Useful lemmas}
In this subsection we give some additional properties of the connection associated to Darboux-Tsarev systems, which will be useful in the subsequent parts. We here assume that $\alpha \neq \beta$, unless specified otherwise. Let's start with some lemmas already present in \cite{LPVGhodograph}.

\begin{lemma}[Remark 4.12 in \cite{LPVGhodograph}]
   If the system  \eqref{DTsys}  is of Darboux-Tsarev type then 
   \begin{equation}
       \Gamma^{i(\alpha)}_{j(\beta)k(\gamma)} = 0, \qquad \text{for } \, i<\text{max}\{j,k\}.
   \end{equation}
   \label{lemma:old4}
\end{lemma}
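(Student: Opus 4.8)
The plan is to prove Lemma \ref{lemma:old4} by combining the explicit Christoffel formulas of Remark \ref{ChristoffelExplicit} with the Darboux-Tsarev dependence hypothesis on the components $X^{i(\alpha)}$, proceeding by a (finite) induction that mirrors the recursive structure of those formulas. The statement asserts that $\Gamma^{i(\alpha)}_{j(\beta)k(\gamma)}=0$ whenever the upper block-index $i$ is smaller than $\max\{j,k\}$, so the natural strategy is to isolate the various index-configurations that arise in \eqref{Chr_a}--\eqref{Chr_c} and show each vanishes under the constraint $i<\max\{j,k\}$.

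First I would organise the argument by the block-incidence of the three indices, exactly as Remark \ref{ChristoffelExplicit} does. The case where all three blocks $\alpha,\beta,\gamma$ are distinct is immediate from \eqref{Chr_a}. For the mixed and same-block cases I would set up an induction on the value of $i$ (or equivalently a downward induction on $\max\{j,k\}-i$), reading each defining formula as expressing the desired symbol in terms of a partial derivative $\partial X$ and a sum of Christoffel symbols with strictly larger summation index $s>k$ or $s>j$. The key observation is that the Darboux-Tsarev assumption forces the relevant derivatives $\partial_{j(\beta)}X^{(i-k+1)(\alpha)}$ (and their analogues) to vanish precisely in the regime $i<\max\{j,k\}$: since $X^{(i-k+1)(\alpha)}$ depends only on $\mathcal{U}^{i-k+1}$, differentiating with respect to a variable $u^{j(\beta)}$ with $j$ exceeding $i-k+1$ yields zero, and one checks that the inequality $i<\max\{j,k\}$ guarantees this. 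Simultaneously, every Christoffel symbol appearing in the right-hand-side sums carries a summation index larger than $k$ or $j$, so its own upper index relation falls under the inductive hypothesis and vanishes. Thus each formula collapses to $0$.

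I would then handle the two same-block formulas \eqref{Chr_d} and \eqref{Chr_c} separately, as well as the flat-unit relations \eqref{Chr_nablae_1}--\eqref{Chr_nablae_2}, checking that the vanishing propagates consistently: \eqref{Chr_nablae_1} expresses $\Gamma^{i(\alpha)}_{j(\beta)1(\beta)}$ directly in terms of $\Gamma^{i(\alpha)}_{j(\beta)1(\alpha)}$, and \eqref{Chr_nablae_2} writes $\Gamma^{i(\alpha)}_{j(\alpha)1(\alpha)}$ as a sum of cross-block symbols, so once the cross-block and generic same-block cases are settled these follow. Throughout I would also invoke Lemma \ref{lemma:old1}, whose relations \eqref{eq:lemmaold1diag}--\eqref{eq:lemmaold1diagsameblock} already give vanishing when the relevant index inequalities fail and reduce general symbols to those with a lower index equal to $1$, thereby shrinking the number of base cases the induction must treat directly.

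The main obstacle I expect is bookkeeping the interplay between the two competing inequalities: the condition $i<\max\{j,k\}$ must be shown to be \emph{preserved or improved} by each recursive step, so that the summation terms (which shift $k\mapsto s>k$, lowering $i-s+1$) genuinely land inside the inductive hypothesis, while the derivative terms vanish for the right reason. Concretely, one must verify that when $i<\max\{j,k\}$ the quantity $i-k+1$ is small enough that $X^{(i-k+1)(\alpha)}$ is independent of the differentiation variable, and that no term in the recursion ever produces a symbol with $i\geq\max\{\text{lower indices}\}$ that would escape the hypothesis. This is the delicate point where the precise form of the Darboux-Tsarev dependence set $\mathcal{U}^i$ is used; the remaining manipulations are routine substitutions into the formulas of Remark \ref{ChristoffelExplicit}.
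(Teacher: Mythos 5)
The paper does not actually prove Lemma \ref{lemma:old4}; it is imported from the companion paper (Remark 4.12 of \cite{LPVGhodograph}), so there is no in-text proof to compare against. Your strategy is nevertheless sound, and it is exactly in the spirit of the technique this paper \emph{does} use in its Appendix for the closely related identities \eqref{eq:lemmaold1diag} and \eqref{eq:lemmaold1hori}: read \eqref{Chr_a}--\eqref{Chr_c} as a finite recursion, kill the inhomogeneous terms $\partial_{j(\beta)}X^{(i-k+1)(\alpha)}$ using the hypothesis that $X^{l(\alpha)}$ depends only on $\mathcal{U}^{l}$, and absorb the remaining sums into a multi-index induction. Two points on the bookkeeping you rightly flag as the delicate part. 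First, the only configurations that genuinely need the Darboux--Tsarev dependence assumption are $k\le i<j$ for $\Gamma^{i(\alpha)}_{j(\beta)k(\alpha)}$ (which \eqref{eq:lemmaold1diag} reduces to $\Gamma^{(i-k+1)(\alpha)}_{j(\beta)1(\alpha)}$ with $i-k+1<j$) and the same-block symbols with both lower indices at least $2$; every other case already vanishes via \eqref{Chr_a}, Lemma \ref{lemma:old1}, \eqref{eq:lemmaold1diagsameblock} or the flat-unit relations \eqref{Chr_nablae_1}--\eqref{Chr_nablae_2}, with no dependence hypothesis. Second, the induction must be a double one: increasing in the upper index $i$, to handle the sums over $s>k$ whose terms reduce to upper index $i-s+1<i$, and decreasing in the large lower index, to handle the sums over $s>j$ whose terms keep the upper index but strictly increase a lower one --- precisely the scheme of Propositions \ref{Prop1} and \ref{Prop2}. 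Your inequality check is correct ($i<\max\{j,k\}$ forces $j>i-k+1$ in \eqref{Chr_b} and $k>i-j+2$ in \eqref{Chr_d}, so all derivative terms drop out), and I see no step that would fail; the proposal is a plan rather than a finished argument, but the plan is the right one.
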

\begin{remark}
    Note that by combining Lemmas \ref{lemma:old2b} and \ref{lemma:old4} we have that, if the system \eqref{DTsys} is of Darboux-Tsarev type then
    \begin{equation}
        \Gamma^{i(\alpha)}_{j(\alpha)k(\alpha)} = 0, \quad \text{ for } i-j-k \leq -3.
         \label{eq:Rmk:old}
    \end{equation}
    \label{rmk:old}
\end{remark}
    
\begin{lemma}[Lemma 4.3 in \cite{LPVGhodograph}]
    If the system \eqref{DTsys} is of Darboux-Tsarev type then $  \Gamma^{i(\alpha)}_{j(\beta)k(\gamma)}$       depends on $\mathcal{U}^i$ only. 
    \label{lemma:old5}
\end{lemma}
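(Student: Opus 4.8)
The plan is to prove the statement by strong induction on the upper index $i$, using the explicit formulas for the Christoffel symbols recalled in Remark~\ref{ChristoffelExplicit} together with Lemmas~\ref{lemma:old1} and~\ref{lemma:old4}. Two facts will be used repeatedly: (a) by the Darboux--Tsarev hypothesis each component $X^{i(\alpha)}$ depends only on $\mathcal{U}^i$, so $\partial_{j(\beta)}X^{i(\alpha)}$ still depends only on $\mathcal{U}^i$ (and vanishes when $j>i$); and (b) a sum or product of functions each depending on some $\mathcal{U}^p$ with $p\le i$ depends only on $\mathcal{U}^i$, while the scalar factors $(X^{1(\alpha)}-X^{1(\beta)})^{-1}$ and $(X^{2(\alpha)})^{-1}$ depend only on $\mathcal{U}^1$ and $\mathcal{U}^2$ respectively. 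The base case $i=1$ is immediate: by Lemma~\ref{lemma:old4} the only possibly nonzero symbols are $\Gamma^{1(\alpha)}_{1(\beta)1(\gamma)}$, and evaluating \eqref{Chr_b}, \eqref{Chr_nablae_1}, \eqref{Chr_nablae_2} at $i=j=k=1$ (all internal sums drop out by Lemma~\ref{lemma:old4}) exhibits them as functions of $\mathcal{U}^1$ only.

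For the inductive step, fix $i\ge 2$, assume the claim for every smaller upper index, and reduce a general $\Gamma^{i(\alpha)}_{a(\beta)b(\gamma)}$ to a short list of fundamental symbols. If the three blocks are distinct the symbol vanishes by \eqref{Chr_a}. If both lower indices lie in the same block $\beta\neq\alpha$, then \eqref{eq:lemmaold1hori} followed by \eqref{Chr_nablae_1} rewrites it as $-\Gamma^{i(\alpha)}_{(a+b-1)(\beta)1(\alpha)}$. If one lower index lies in block $\alpha$, say $\Gamma^{i(\alpha)}_{a(\beta)b(\alpha)}$ with $\beta\neq\alpha$, then \eqref{eq:lemmaold1diag} gives $\Gamma^{(i-b+1)(\alpha)}_{a(\beta)1(\alpha)}$: for $b\ge 2$ the upper index drops below $i$ and the outer induction applies (note $\mathcal{U}^{i-b+1}\subseteq\mathcal{U}^i$), while for $b=1$ it is already fundamental. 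The purely same-block symbols with a lower index equal to $1$ reduce analogously through \eqref{eq:lemmaold1diagsameblock} (lowering the upper index) and through \eqref{Chr_nablae_2}, which turns $\Gamma^{i(\alpha)}_{a(\alpha)1(\alpha)}$ into a sum of symbols $\Gamma^{i(\alpha)}_{1(\sigma)a(\alpha)}$ ($\sigma\neq\alpha$), each of which is, by \eqref{eq:lemmaold1diag}, either of strictly lower upper index or fundamental. Thus the whole problem at level $i$ collapses to: (I) the cross-block fundamental symbols $\Gamma^{i(\alpha)}_{a(\beta)1(\alpha)}$ with $\beta\neq\alpha$, and (II) the purely same-block symbols $\Gamma^{i(\alpha)}_{k(\alpha)j(\alpha)}$ with $2\le j\le k\le m_\alpha$.

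For family (I) I will use \eqref{Chr_b} with $k=1$ and argue by a finite reverse induction on the second index $a$, starting from $a=m_\beta$ where the last sum is empty. On the right-hand side the term $\partial_{a(\beta)}X^{i(\alpha)}$ depends on $\mathcal{U}^i$ by (a); in the sum $\sum_{s\ge 2}\Gamma^{i(\alpha)}_{a(\beta)s(\alpha)}X^{s(\alpha)}$ each factor equals $\Gamma^{(i-s+1)(\alpha)}_{a(\beta)1(\alpha)}$ (upper index $<i$, hence outer induction) and, crucially, Lemma~\ref{lemma:old4} forces $s\le i$ on the surviving terms so that $X^{s(\alpha)}$ has level $\le i$; in the remaining sum the factors $\Gamma^{i(\alpha)}_{1(\alpha)s(\beta)}=\Gamma^{i(\alpha)}_{s(\beta)1(\alpha)}$ (symmetry of the lower indices) are fundamental symbols with second index $s>a$, available by the reverse induction, again paired by Lemma~\ref{lemma:old4} with $X$-components of level $\le i$. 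Family (II) is treated the same way through \eqref{Chr_d}, now by a nested downward induction: an outer downward induction on the larger lower index $k$, so that the terms $\Gamma^{i(\alpha)}_{(j-1)(\alpha)s(\alpha)}$ with $s>k$ are already known (those with smaller index $1$ having already been disposed of in the reduction above), and an inner downward induction on the smaller lower index $j$, so that the terms $\Gamma^{i(\alpha)}_{k(\alpha)s(\alpha)}$ with $j<s\le k$ are already known. The same bookkeeping, together with Lemma~\ref{lemma:old4}, keeps every $X$-factor at level $\le i$, and $(X^{2(\alpha)})^{-1}$ is harmless since family (II) forces $i\ge k\ge 2$. This closes the induction.

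The main obstacle is precisely that \eqref{Chr_b} and \eqref{Chr_d} are recursive \emph{within the same upper index} $i$, so induction on $i$ alone does not close. The resolution is the two-step reduction above: Lemma~\ref{lemma:old1} converts every lower index $\ge 2$ into a genuine drop of the upper index, leaving only the fundamental third-index-$1$ cross-block symbols and the purely same-block symbols, whose residual same-$i$ couplings are triangular in the lower indices and therefore solvable by finite (reverse, respectively nested-downward) inductions. The other recurring subtlety is that the sums contain $X$-components whose level could a priori exceed $i$; discarding the vanishing terms via Lemma~\ref{lemma:old4} is exactly what eliminates this spurious dependence, and it is invoked at every step.
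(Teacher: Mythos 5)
Your argument is correct. Note that the paper itself does not prove this statement -- it is imported as Lemma 4.3 of \cite{LPVGhodograph} -- so there is no in-text proof to compare against; judged on its own, your reconstruction is sound and self-contained. The two key organisational ideas both check out: (i) Lemma \ref{lemma:old1} together with \eqref{Chr_nablae_1}--\eqref{Chr_nablae_2} really does collapse every symbol at level $i$ either to a strictly lower upper index (handled by the outer induction, using $\mathcal{U}^{p}\subseteq\mathcal{U}^{i}$ for $p\le i$) or to one of the two ``fundamental'' families $\Gamma^{i(\alpha)}_{a(\beta)1(\alpha)}$ and $\Gamma^{i(\alpha)}_{k(\alpha)j(\alpha)}$ with $2\le j\le k$, and these two families do not feed into each other at the same level, so the nested downward inductions on the lower indices are well founded; (ii) the potentially dangerous $X$-factors $X^{s(\alpha)}$, $X^{(s-k+1)(\alpha)}$, $X^{(s-j+2)(\alpha)}$ are always multiplied by Christoffel symbols that Lemma \ref{lemma:old4} forces to vanish unless $s\le i$, which caps their level at $i$ exactly as you claim. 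The base case and the harmlessness of the prefactors $(X^{1(\alpha)}-X^{1(\beta)})^{-1}$ and $(X^{2(\alpha)})^{-1}$ are also handled correctly.
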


\hspace{-1em}Let us now derive some additional properties of the Christoffel symbols associated to Darboux-Tsarev systems. Writing the condition \eqref{shc-intro} and the symmetry of $\nabla c$ in David-Hertling coordinates using the double-index notation gives
\begin{subequations}
	\begin{equation}
		R^{(j-k+1)(\tau)}_{l(\alpha)m(\beta)i(\gamma)} \delta^{\rho}_{\tau} +   R^{(j-m+1)(\tau)}_{l(\alpha)i(\gamma)k(\rho)}\delta^{\beta}_{\tau} +  R^{(j-i+1)(\tau)}_{l(\alpha)k(\rho)m(\beta)} \delta^{\gamma}_{\tau} = 0,
		\label{eq:3RCmultiinR}
	\end{equation}
	\begin{equation}
		\Gamma^{k(\gamma)}_{l(\rho) m(\sigma)}\delta^{\sigma}_{\alpha}\delta^{\sigma}_{\beta}\delta^{m}_{i+j-1} - \Gamma^{m(\sigma)}_{l(\rho) i(\alpha)}\delta^{\gamma}_{\sigma}\delta^{\gamma}_{\beta}\delta^{k}_{m+j-1} - \Gamma^{k(\gamma)}_{j(\beta) m(\sigma)}\delta^{\sigma}_{\alpha}\delta^{\sigma}_{\rho}\delta^{m}_{i+l-1}+\Gamma^{m(\sigma)}_{j(\beta) i(\alpha)}\delta^{\gamma}_{\sigma}\delta^{\gamma}_{\rho}\delta^{k}_{m+l-1}  = 0,
		\label{eq:nablacsymmulti}
	\end{equation}
\end{subequations}
respectively.  
\newline
\newline
We can then improve upon Lemma \ref{rmk:old} as follows. 

\begin{lemma}
	For any $\rho \in \{1, \dots, r\}$,
	\begin{equation}
		\Gamma^{i(\alpha)}_{j(\rho)k(\beta)} = 0, \qquad \text{for } i-j-k<-1.
	\end{equation}
	\label{lemma:newold}
\end{lemma}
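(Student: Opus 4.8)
The plan is to fix $\alpha\neq\beta$ (as assumed throughout this subsection) and to distinguish three cases according to the block $\rho$ of the first lower index: the generic case $\rho\notin\{\alpha,\beta\}$, and the two degenerate cases $\rho=\beta$ and $\rho=\alpha$. In the generic case the three blocks $\alpha,\rho,\beta$ are pairwise distinct, so $\Gamma^{i(\alpha)}_{j(\rho)k(\beta)}=0$ identically by \eqref{Chr_a}; in particular it vanishes throughout the claimed range $i-j-k<-1$. Thus the real content of the lemma lies in the two cases where $\rho$ coincides with $\alpha$ or with $\beta$, each of which I would reduce to Lemma \ref{lemma:old4} by first lowering one of the two lower indices.

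For $\rho=\beta$ the symbol is $\Gamma^{i(\alpha)}_{j(\beta)k(\beta)}$, with both lower indices in the block $\beta$. By \eqref{eq:lemmaold1hori} this equals $\Gamma^{i(\alpha)}_{(j+k-1)(\beta)1(\beta)}$ when $j+k\leq m_\beta+1$ and vanishes otherwise; in the first situation Lemma \ref{lemma:old4} forces it to be zero as soon as $i<\max\{j+k-1,1\}=j+k-1$, that is $i\leq j+k-2$, which is exactly the hypothesis $i-j-k<-1$ (note $j+k\geq2$, so the maximum is indeed $j+k-1$).

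For $\rho=\alpha$ the symbol is $\Gamma^{i(\alpha)}_{j(\alpha)k(\beta)}$; using that $\nabla$ is torsionless I would rewrite it as $\Gamma^{i(\alpha)}_{k(\beta)j(\alpha)}$, whose last lower index now sits in the same block $\alpha$ as the contravariant index. Then \eqref{eq:lemmaold1diag} gives $\Gamma^{i(\alpha)}_{k(\beta)j(\alpha)}=\Gamma^{(i-j+1)(\alpha)}_{k(\beta)1(\alpha)}$ for $i\geq j$ and $0$ for $i<j$. Applying Lemma \ref{lemma:old4} to the reduced symbol, it vanishes whenever $i-j+1<\max\{k,1\}=k$, i.e. $i\leq j+k-2$, again the required range; the complementary subcase $i<j$ lies inside this range since $k\geq1$, so it is covered as well.

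There is no deep obstruction here: once the generic three-block case is dispatched by \eqref{Chr_a}, everything rests on the two reduction identities \eqref{eq:lemmaold1diag}--\eqref{eq:lemmaold1hori} together with Lemma \ref{lemma:old4}, and the only genuine care needed is in the bookkeeping of index ranges and boundary subcases ($k=1$, and $i<j$ versus $i\geq j$), where one can easily mismatch the thresholds. If one prefers a uniform derivation that does not appeal to the closed-form expressions of Remark \ref{ChristoffelExplicit}, the same conclusions follow by specialising the symmetry of $\nabla c$, \eqref{eq:nablacsymmulti}: choosing the free indices so as to isolate $\Gamma^{i(\alpha)}_{j(\rho)k(\beta)}$ in one of its four terms, the $\delta^{\bullet}_{\bullet+\bullet-1}$ factors (reflecting $c^{i(\alpha)}_{j(\beta)k(\gamma)}=\delta^\alpha_\beta\delta^\alpha_\gamma\delta^i_{j+k-1}$) annihilate all the remaining terms for block reasons. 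Verifying precisely which of those terms survive is the one step of that alternative route that must be handled with attention.
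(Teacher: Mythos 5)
Your proposal is correct and follows essentially the same route as the paper's proof: dispatch the case of three pairwise distinct blocks via \eqref{Chr_a}, reduce the remaining cases $\rho=\alpha$ and $\rho=\beta$ to symbols with a lower index equal to $1$ via Lemma \ref{lemma:old1}, and conclude with Lemma \ref{lemma:old4}. The only cosmetic difference is that the paper first disposes of $i<\max\{j,k\}$ by Lemma \ref{lemma:old4} before reducing, whereas you absorb the boundary subcases ($i<j$, $j+k>m_\beta+1$) into the case analysis; the index bookkeeping in both branches checks out.
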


\begin{proof} 
	Notice that if $\alpha \neq \rho \neq \beta$ the lemma holds by \eqref{Chr_a}. Moreover, if $i<j$ or $i<k$ it holds by Lemma \ref{lemma:old4}. Thus we may assume that $i \geq \max\{j,k\}$. By Lemma \ref{lemma:old1} we get \begin{equation}
		\Gamma^{i(\alpha)}_{j(\rho)k(\beta)} = \begin{cases}
			\Gamma^{(i-j+1)(\alpha)}_{1(\alpha)k(\beta)}, & \text{if } \rho = \alpha \text{ and } i \geq j,\vspace{0.2em}\\
			\Gamma^{i(\alpha)}_{(j+k-1)(\beta)1(\beta)}, & \text{if } \rho = \beta \text{ and } j+k \leq m_{\beta}+1,\\
			0, & \text{ otherwise.}
		\end{cases}
	\end{equation}
	However, by Lemma \ref{lemma:old4} all the cases above vanish if $i-j-k<-1$. 
\end{proof}

\begin{lemma}
	Let $j \in \{1, \dots, m_{\alpha}\}$, and $k \in \{1, \dots, m_{\beta}\}$. Then, 
	\begin{equation}
		\partial_{k(\beta)} \Gamma^{j(\alpha)}_{2(\alpha)j(\alpha)} = 0.
	\end{equation}
	\label{lemma:1multidiff}
\end{lemma}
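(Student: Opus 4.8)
The plan is to separate the two regimes $k>j$ and $k\le j$; the first is immediate and the second is where all the work lies. For $k>j$, Lemma~\ref{lemma:old5} tells us that $\Gamma^{j(\alpha)}_{2(\alpha)j(\alpha)}$ depends only on the variables of $\mathcal{U}^j=\{u^{l(\sigma)}\mid l\le j\}_{\sigma}$, which excludes $u^{k(\beta)}$ precisely because $k>j$; hence $\partial_{k(\beta)}\Gamma^{j(\alpha)}_{2(\alpha)j(\alpha)}=0$ at once. The case $j=1$ is also trivial, since $\Gamma^{1(\alpha)}_{2(\alpha)1(\alpha)}=\Gamma^{1(\alpha)}_{1(\alpha)2(\alpha)}=0$ by \eqref{eq:lemmaold1diagsameblock}. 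There remains the genuine range $2\le j$, $1\le k\le j$, with $\beta\neq\alpha$.

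For this range I would first put the symbol into a reduced form. Using torsionlessness to write $\Gamma^{j(\alpha)}_{2(\alpha)j(\alpha)}=\Gamma^{j(\alpha)}_{j(\alpha)2(\alpha)}$ and feeding the indices (upper $j$, first lower $j$, second lower $2$) into \eqref{Chr_d}, the first sum disappears because $\Gamma^{j(\alpha)}_{1(\alpha)s(\alpha)}=0$ for $s>j$ by \eqref{eq:lemmaold1diagsameblock}, while Lemma~\ref{lemma:old4} truncates the second sum to $s\le j$, leaving
\[
\Gamma^{j(\alpha)}_{2(\alpha)j(\alpha)}=\frac{1}{X^{2(\alpha)}}\Big(\partial_{1(\alpha)}X^{1(\alpha)}-\partial_{j(\alpha)}X^{j(\alpha)}-\sum_{s=3}^{j}\Gamma^{j(\alpha)}_{j(\alpha)s(\alpha)}X^{s(\alpha)}\Big).
\]
Applying $\partial_{k(\beta)}$ and exploiting that $X$ itself realises $d_\nabla(X\circ)=0$, I would replace every cross-block derivative $\partial_{k(\beta)}X^{\bullet(\alpha)}$ that appears (in particular the one hidden in $\partial_{k(\beta)}(1/X^{2(\alpha)})$) by the right-hand side of \eqref{EqSymm_distinctblocks_updated}, and handle the mixed second derivatives $\partial_{k(\beta)}\partial_{1(\alpha)}X^{1(\alpha)}$ and $\partial_{k(\beta)}\partial_{j(\alpha)}X^{j(\alpha)}$ by commuting the partials before substituting. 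Lemmas~\ref{lemma:old1}, \ref{lemma:old4}, \ref{lemma:newold} and Remark~\ref{rmk:old} then annihilate the bulk of the resulting Christoffel symbols on index grounds.

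The structural fact that makes the computation close is that the same reduction applied to the general symbol gives, for $2\le s\le j$,
\[
\Gamma^{j(\alpha)}_{j(\alpha)s(\alpha)}=\frac{1}{X^{2(\alpha)}}\Big(\partial_{(s-1)(\alpha)}X^{1(\alpha)}-\partial_{j(\alpha)}X^{(j-s+2)(\alpha)}-\sum_{s'=s+1}^{j}\Gamma^{j(\alpha)}_{j(\alpha)s'(\alpha)}X^{(s'-s+2)(\alpha)}\Big),
\]
where the only Christoffel symbols on the right carry the \emph{same} upper and first-lower index $j$ and a strictly larger second-lower index $s'$ (the lower-upper-index contributions are again killed by Lemma~\ref{lemma:old4}). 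This triangular structure permits a downward induction on $s$ at fixed $j$: the base case $s=j$ has an empty sum, so $\Gamma^{j(\alpha)}_{j(\alpha)j(\alpha)}=\tfrac{1}{X^{2(\alpha)}}(\partial_{(j-1)(\alpha)}X^{1(\alpha)}-\partial_{j(\alpha)}X^{2(\alpha)})$, and the inductive step invokes $\partial_{k(\beta)}\Gamma^{j(\alpha)}_{j(\alpha)s'(\alpha)}=0$ for $s'>s$; the Lemma is the endpoint $s=2$ read off via torsionlessness. I expect the main obstacle to be exactly the step-by-step cancellation: once \eqref{EqSymm_distinctblocks_updated} has been substituted for all $X$-derivatives, one is left with a sum of products of cross-block Christoffel symbols $\Gamma^{\bullet(\alpha)}_{k(\beta)\bullet}$ with the $X^{\bullet}$, and the real content is to check, using the vanishing lemmas, that these survivors cancel in pairs. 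This bookkeeping is laborious but mechanical; the induction merely organises it.
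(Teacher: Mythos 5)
Your reductions of the easy regimes are fine ($k>j$ via Lemma \ref{lemma:old5}, $j=1$ via Lemma \ref{lemma:old4} or \eqref{eq:lemmaold1diagsameblock}), and the reduced formula you derive for $\Gamma^{j(\alpha)}_{j(\alpha)s(\alpha)}$ from \eqref{Chr_d} is correct. But the core of the argument --- the cancellation in the range $2\le j$, $k\le j$, $\beta\neq\alpha$ --- is asserted rather than proved, and the mechanism you propose for it cannot work as described. Once you substitute \eqref{EqSymm_distinctblocks_updated} for the cross-block derivatives of $X$ and then apply the remaining same-block derivatives (e.g.\ in $\partial_{(s-1)(\alpha)}\partial_{k(\beta)}X^{1(\alpha)}$ and $\partial_{j(\alpha)}\partial_{k(\beta)}X^{(j-s+2)(\alpha)}$), the Leibniz rule produces terms containing \emph{first derivatives} of cross-block Christoffel symbols, such as $\partial_{(s-1)(\alpha)}\Gamma^{1(\alpha)}_{k(\beta)t(\alpha)}$ and $\partial_{j(\alpha)}\Gamma^{(j-s+2)(\alpha)}_{k(\beta)t(\alpha)}$. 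Lemmas \ref{lemma:old1}, \ref{lemma:old4}, \ref{lemma:newold} and Remark \ref{rmk:old} constrain the symbols themselves, not their derivatives, so they cannot ``annihilate'' these terms on index grounds, and there is no visible pairwise cancellation among them. Controlling such derivative terms is exactly what the curvature condition \eqref{rc-intri} (equivalently \eqref{eq:3RCmultiinR}), which is part of the Darboux--Tsarev hypothesis, is for --- and your proposal never invokes it. Moreover, your downward induction is carried on the strictly stronger claim $\partial_{k(\beta)}\Gamma^{j(\alpha)}_{j(\alpha)s(\alpha)}=0$ for all $2\le s\le j$, which is nowhere established (only the endpoint $s=2$ is the statement of the lemma), so even the organising scaffold rests on an unproven assertion.

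For comparison, the paper's proof is short precisely because it starts from \eqref{eq:3RCmultiinR}: choosing all indices in block $\alpha$ except one in $\beta$, with $k=1$, $i=j$, $l=2$, the identity collapses to
\begin{equation*}
\partial_{k(\beta)}\Gamma^{j(\alpha)}_{2(\alpha)j(\alpha)}-\partial_{j(\alpha)}\Gamma^{j(\alpha)}_{2(\alpha)k(\beta)}+\Gamma^{j(\alpha)}_{k(\beta)s(\sigma)}\Gamma^{s(\sigma)}_{2(\alpha)j(\alpha)}-\Gamma^{j(\alpha)}_{j(\alpha)s(\sigma)}\Gamma^{s(\sigma)}_{2(\alpha)k(\beta)}=0,
\end{equation*}
where the second term dies by Lemmas \ref{lemma:old1} and \ref{lemma:old5} and the quadratic terms reduce to a single cancelling pair by Remark \ref{rmk:old} and Lemma \ref{lemma:newold}. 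If you want to salvage your route, you would have to import \eqref{eq:3RCmultiinR} (or an equivalent consequence of \eqref{rc-intri}) to dispose of the leftover Christoffel derivatives --- at which point you have essentially reconstructed the paper's argument with considerably more bookkeeping.
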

\begin{proof}
	Considering \eqref{eq:3RCmultiinR} with all indices belonging to the same block (which we denote by $\alpha$) except $m$ which belongs to $\beta$, with $\beta \neq \alpha$, letting $k=1$, $i=j$, $m \rightarrow k$, $l=2$ and using Lemma \ref{lemma:old4} gives
	\begin{equation}
		\partial_{k(\beta)}\Gamma^{j(\alpha)}_{2(\alpha)j(\alpha)} - \partial_{j(\alpha)}\Gamma^{j(\alpha)}_{2(\alpha)k(\beta)}+ \underset{f}{\underbrace{\Gamma^{j(\alpha)}_{k(\beta)s(\sigma)} \Gamma^{s(\sigma)}_{2(\alpha)j(\alpha)} - \Gamma^{j(\alpha)}_{j(\alpha) s(\sigma)} \Gamma^{s(\sigma)}_{2(\alpha)k(\beta)}}}. 
	\end{equation}
	Notice that the second term is zero by Lemmas \ref{lemma:old1}, \ref{lemma:old5}. Hence, if $f = 0$ we are done. By Remark \ref{rmk:old} and Lemma \ref{lemma:newold}, the only non-vanishing contribution for the second $s$-sum occurs when $s=2$, $k=1$, and $\sigma = \alpha$. Similarly, for the first $s$-sum, we obtain zero unless $s=j$, $k=1$, and $\sigma = \alpha$. However, then they cancel with each other after using Lemma \ref{lemma:old1}. 
\end{proof}
\begin{lemma}
	\begin{equation}
		\partial_{m(\alpha)} \Gamma^{j(\alpha)}_{l(\alpha)i(\alpha)} = 0, \qquad \text{for } m > j-l+2. 
	\end{equation}
	\label{lemma:diffsameblockzero}
\end{lemma}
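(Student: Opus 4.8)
My plan is to turn the statement into the vanishing of a single component of the Riemann tensor and then to annihilate that component using the curvature constraint \eqref{eq:3RCmultiinR}. Throughout I keep all indices in the block $\alpha$ and use the curvature convention $R^{a}_{bcd}=\partial_c\Gamma^{a}_{bd}-\partial_d\Gamma^{a}_{bc}+\Gamma^{a}_{cs}\Gamma^{s}_{bd}-\Gamma^{a}_{ds}\Gamma^{s}_{bc}$ that is implicit in the proof of Lemma \ref{lemma:1multidiff}. Three degenerate cases are immediate. If $j<\max\{l,i\}$ the symbol vanishes by Lemma \ref{lemma:old4} and there is nothing to prove. If $m>j$, then $\Gamma^{j(\alpha)}_{l(\alpha)i(\alpha)}$ depends only on $\mathcal{U}^{j}$ by Lemma \ref{lemma:old5}, so its derivative vanishes; this already settles $l\le 2$, leaving the claim nonempty only for $l\ge 3$ and $j-l+3\le m\le j$. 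Finally, if $i=1$ (hence, by torsionlessness, whenever a lower index equals $1$), then by \eqref{eq:lemmaold1diagsameblock} either $\Gamma^{j(\alpha)}_{l(\alpha)1(\alpha)}=0$ for $j<l$, or it equals $\Gamma^{(j-l+1)(\alpha)}_{1(\alpha)1(\alpha)}$, which depends only on $\mathcal{U}^{j-l+1}$; in either case $\partial_{m(\alpha)}$ of it vanishes for $m>j-l+1$, which our hypothesis $m>j-l+2$ implies. This isolates the genuine case $i\ge 2$, $l\ge 3$, $j-l+3\le m\le j$.

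In that case I would first locate the wanted derivative inside a curvature component. From the definition of $R$,
\[
R^{j(\alpha)}_{l(\alpha)m(\alpha)i(\alpha)}=\partial_{m(\alpha)}\Gamma^{j(\alpha)}_{l(\alpha)i(\alpha)}-\partial_{i(\alpha)}\Gamma^{j(\alpha)}_{l(\alpha)m(\alpha)}+\Gamma^{j(\alpha)}_{m(\alpha)s}\Gamma^{s}_{l(\alpha)i(\alpha)}-\Gamma^{j(\alpha)}_{i(\alpha)s}\Gamma^{s}_{l(\alpha)m(\alpha)}.
\]
The hypothesis $m>j-l+2$ gives $j-l-m\le -3$, so $\Gamma^{j(\alpha)}_{l(\alpha)m(\alpha)}\equiv 0$ by \eqref{eq:Rmk:old} and the second term drops out. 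For the two products I would run a range count on $s=t(\sigma)$ using Lemma \ref{lemma:newold}: in the first product the two factors force $t\le j-m+1$ and $t\ge l+i-1$, and $m>j-l+2$ renders these incompatible; the second product is handled symmetrically. Hence $\partial_{m(\alpha)}\Gamma^{j(\alpha)}_{l(\alpha)i(\alpha)}=R^{j(\alpha)}_{l(\alpha)m(\alpha)i(\alpha)}$, and it remains to show this component vanishes.

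For this I would invoke \eqref{eq:3RCmultiinR} with all indices in $\alpha$ and the free index set to $1$, which yields
\[
R^{j(\alpha)}_{l(\alpha)m(\alpha)i(\alpha)}+R^{(j-m+1)(\alpha)}_{l(\alpha)i(\alpha)1(\alpha)}+R^{(j-i+1)(\alpha)}_{l(\alpha)1(\alpha)m(\alpha)}=0,
\]
and then show that the two companion components vanish. In $R^{(j-m+1)(\alpha)}_{l(\alpha)i(\alpha)1(\alpha)}$ the upper index is $j-m+1\le l-2<l$, so every Christoffel symbol carrying it together with a lower index $\ge l$ is zero by Lemma \ref{lemma:old4}, killing both $\partial\Gamma$ contributions, while the two products die by the same range count. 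In $R^{(j-i+1)(\alpha)}_{l(\alpha)1(\alpha)m(\alpha)}$ one $\partial\Gamma$ term vanishes by \eqref{eq:Rmk:old} and the two products by the range count, leaving exactly $-\partial_{m(\alpha)}\Gamma^{(j-i+1)(\alpha)}_{l(\alpha)1(\alpha)}$; but this is a last-lower-index-$1$ symbol, already shown to have vanishing $m(\alpha)$-derivative in the first step. It is precisely to keep this last reduction non-circular that the last-index-$1$ case must be disposed of beforehand. Thus both companions vanish, so $R^{j(\alpha)}_{l(\alpha)m(\alpha)i(\alpha)}=0$ and the lemma follows.

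The main obstacle is the index bookkeeping in the last two steps: one must verify that the partner derivative, all products over $s$, and every sub-term of the two companion curvature components in \eqref{eq:3RCmultiinR} are forced to vanish, each by one of the three available criteria — the defect bound \eqref{eq:Rmk:old}, the magnitude bound of Lemma \ref{lemma:old4}, and the sharp range bound of Lemma \ref{lemma:newold}. The delicate design choice is the specialisation of the auxiliary index to $1$ in \eqref{eq:3RCmultiinR}: it is exactly what collapses the two companion terms to a component with small upper index and a component built from a last-index-$1$ symbol, both already under control.
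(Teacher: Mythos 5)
Your proof is correct and follows the same route as the paper's own (which simply invokes \eqref{eq:3RCmultiinR} with all indices in the block $\alpha$ and $k=1$, together with Lemmas \ref{lemma:old5}, \ref{lemma:newold} and Remark \ref{rmk:old}); you have essentially written out the details that the paper omits, including the necessary preliminary treatment of the last-lower-index-$1$ case. One minor imprecision: in the range count for the quadratic terms of $R^{j(\alpha)}_{l(\alpha)m(\alpha)i(\alpha)}$, the bounds $t\le j-m+1$ and $t\ge l+i-1$ taken from Lemma \ref{lemma:newold} apply only when the summation block differs from $\alpha$; when it coincides with $\alpha$ one must use Remark \ref{rmk:old} instead, which gives the weaker bounds $t\le j-m+2$ and $t\ge l+i-2$ --- still incompatible under $m>j-l+2$ and $i\ge 2$, so the argument is unaffected.
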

\begin{proof}
	Consider \eqref{eq:3RCmultiinR} with all indices belonging to the same block (which we denote by $\alpha$) and let $k=1$ and $m>j-l+2$. Then, by Lemmas \ref{lemma:old5}, and \ref{lemma:newold} together with Remark \ref{eq:Rmk:old}, the result follows. 
\end{proof} 

\begin{lemma}
	For any $\rho \in \{1, \dots, r\}$,
	\begin{equation}
		\partial_{m(\rho)} \Gamma^{i(\alpha)}_{j(\alpha)k(\beta)} = 0, \qquad \forall \, m > i-j+1.
	\end{equation}
	\label{lemma:diffdiffblocks}
\end{lemma}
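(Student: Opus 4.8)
The plan is to bypass the Riemann-tensor identity \eqref{eq:3RCmultiinR} that was used for the neighbouring Lemmas \ref{lemma:1multidiff} and \ref{lemma:diffsameblockzero}, and instead reduce the Christoffel symbol to one whose functional dependence is already pinned down by Lemma \ref{lemma:old5}. The key observation is that $\Gamma^{i(\alpha)}_{j(\alpha)k(\beta)}$ carries a lower index $j$ in the \emph{same} block $\alpha$ as its upper index, and this is precisely the index that the reduction formula \eqref{eq:lemmaold1diag} absorbs into the upper slot, thereby lowering the upper index and tightening the range of coordinates on which the symbol can depend.

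First I would dispose of the trivial range: if $i<j$ then $i<\max\{j,k\}$, so Lemma \ref{lemma:old4} gives $\Gamma^{i(\alpha)}_{j(\alpha)k(\beta)}=0$ and every derivative vanishes, while $i-j+1\leq 0$ makes the bound $m>i-j+1$ vacuous. Assume therefore $i\geq j$. Since $\nabla$ is torsionless I would write $\Gamma^{i(\alpha)}_{j(\alpha)k(\beta)}=\Gamma^{i(\alpha)}_{k(\beta)j(\alpha)}$, which now has the block-$\alpha$ index in the last position, matching the shape of \eqref{eq:lemmaold1diag}; applying that formula gives
\[\Gamma^{i(\alpha)}_{j(\alpha)k(\beta)}=\Gamma^{(i-j+1)(\alpha)}_{k(\beta)1(\alpha)}.\]
By Lemma \ref{lemma:old5} the right-hand side depends only on $\mathcal{U}^{i-j+1}$, that is, on the coordinates $u^{s(\sigma)}$ with $s\leq i-j+1$. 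Hence its derivative $\partial_{m(\rho)}$ vanishes as soon as $u^{m(\rho)}\notin\mathcal{U}^{i-j+1}$, i.e. whenever $m>i-j+1$, and this holds for every $\rho\in\{1,\dots,r\}$ simultaneously. This is exactly the assertion.

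I do not expect a genuine obstacle: no case analysis over $\rho$ is needed, because Lemma \ref{lemma:old5} constrains the dependence on all blocks at once, and the only bookkeeping is the single split $i<j$ versus $i\geq j$. The one point that must be checked carefully is that the index selected for the reduction is the block-$\alpha$ lower index and not the block-$\beta$ one, since only this choice drops the upper index from $i$ to $i-j+1$ and yields the sharp threshold $m>i-j+1$; reducing the wrong index, or pushing the computation through \eqref{eq:3RCmultiinR} as in the preceding lemmas, would be considerably more laborious and would not transparently reproduce the same bound.
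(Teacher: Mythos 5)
Your proposal is correct and takes essentially the same route as the paper: after disposing of the case $i<j$ via Lemma \ref{lemma:old4}, the paper likewise reduces the symbol through \eqref{eq:lemmaold1diag} to $\Gamma^{(i-j+1)(\alpha)}_{1(\alpha)k(\beta)}$ (performing the lower-index swap you spell out only implicitly, by torsionlessness) and then concludes with Lemma \ref{lemma:old5}.
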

\begin{proof}
	If $i<j$ the lemma holds trivially  by Lemma \ref{lemma:old4}. Thus, let $i\geq j$. Applying  \eqref{eq:lemmaold1diag} gives
	\begin{equation}
		\partial_{m(\rho)} \Gamma^{i(\alpha)}_{j(\alpha)k(\beta)} = \partial_{m(\rho)}\Gamma^{(i-j+1)(\alpha)}_{1(\alpha) k(\beta)}.
	\end{equation}
	Using Lemma \ref{lemma:old5} concludes the proof.
\end{proof}

\begin{lemma}
	Let $\beta \neq \alpha \neq \gamma$. Then, 
	\begin{equation}
		\partial_{m(\beta)} \Gamma^{j(\alpha)}_{l(\alpha)i(\gamma)} = \partial_{i(\gamma)}\Gamma^{j(\beta)}_{l(\alpha) m(\beta)}.
	\end{equation}
	\label{lemma:diffswapgen}
\end{lemma}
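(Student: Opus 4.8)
The plan is to show that each side of the identity equals, up to sign, one component of a commutator of Christoffel ``matrices'' $(\Gamma_a)^{k}_{\ c}:=\Gamma^{k}_{ac}$, and then to match the two commutators. Throughout I use the Riemann tensor in the form $R^{k}_{abc}=\partial_a\Gamma^{k}_{bc}-\partial_b\Gamma^{k}_{ac}+\Gamma^{k}_{as}\Gamma^{s}_{bc}-\Gamma^{k}_{bs}\Gamma^{s}_{ac}$, which is antisymmetric in $(a,b)$ and obeys the first Bianchi identity $R^{k}_{abc}+R^{k}_{bca}+R^{k}_{cab}=0$. I treat the generic case in which $\alpha,\beta,\gamma$ are pairwise distinct, the coincidence $\beta=\gamma$ being dealt with by the same scheme with the appropriate vanishing replaced by Lemma \ref{lemma:old1}.

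First I would produce the left-hand side. Taking in \eqref{eq:3RCmultiinR} the upper block $\tau=\alpha$ together with $k=1$ and $\rho=\alpha$ annihilates the second and third terms (because $\alpha\neq\beta$ and $\alpha\neq\gamma$) and leaves $R^{j(\alpha)}_{l(\alpha)m(\beta)i(\gamma)}=0$. Expanding this component and discarding $\partial_{l(\alpha)}\Gamma^{j(\alpha)}_{m(\beta)i(\gamma)}$, which vanishes since $\Gamma^{j(\alpha)}_{m(\beta)i(\gamma)}=0$ by \eqref{Chr_a}, yields $\partial_{m(\beta)}\Gamma^{j(\alpha)}_{l(\alpha)i(\gamma)}=\Gamma^{j(\alpha)}_{l(\alpha)s}\Gamma^{s}_{m(\beta)i(\gamma)}-\Gamma^{j(\alpha)}_{m(\beta)s}\Gamma^{s}_{l(\alpha)i(\gamma)}=:Q_1$.

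For the right-hand side I would establish $R^{j(\beta)}_{i(\gamma)m(\beta)l(\alpha)}=0$. The subtlety is that the second lower index $m(\beta)$ lies in the block $\beta$ of the upper index, so the relevant instance of \eqref{eq:3RCmultiinR} does not collapse to a single term but reads $R^{j(\beta)}_{i(\gamma)m(\beta)l(\alpha)}+R^{(j-m+1)(\beta)}_{i(\gamma)l(\alpha)1(\beta)}=0$. The auxiliary component $R^{(j-m+1)(\beta)}_{i(\gamma)l(\alpha)1(\beta)}$ has its block-$\beta$ lower index $1(\beta)$ in the last slot, and the instance of \eqref{eq:3RCmultiinR} adapted to it shows it to be antisymmetric in its last two lower indices; together with the antisymmetry in the first two and the first Bianchi identity this makes it totally antisymmetric in all three lower indices, so it equals a third of its own cyclic sum and hence vanishes. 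Therefore $R^{j(\beta)}_{i(\gamma)m(\beta)l(\alpha)}=0$, and expanding it, using $\Gamma^{j(\beta)}_{i(\gamma)l(\alpha)}=0$ from \eqref{Chr_a} and torsion-freeness, gives $\partial_{i(\gamma)}\Gamma^{j(\beta)}_{l(\alpha)m(\beta)}=\Gamma^{j(\beta)}_{m(\beta)s}\Gamma^{s}_{i(\gamma)l(\alpha)}-\Gamma^{j(\beta)}_{i(\gamma)s}\Gamma^{s}_{m(\beta)l(\alpha)}=:Q_2$.

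It then remains to prove $Q_1=Q_2$, and this is the hard part. I would first restrict, by \eqref{Chr_a}, the block of the contracted index $s$ in each product (so that $s$ ranges over the blocks $\beta,\gamma$ in the first summand of $Q_1$ and over $\alpha$ in the second, and analogously in $Q_2$), reduce every Christoffel symbol to the building-block form $\Gamma^{\bullet(\bullet)}_{\bullet(\bullet)1(\bullet)}$ by torsion-freeness and \eqref{eq:lemmaold1diag}, and drop the contributions killed by Lemmas \ref{lemma:old4} and \ref{lemma:newold}. The two surviving expressions carry different upper blocks, namely $\alpha$ in $Q_1$ and $\beta$ in $Q_2$, and the bridge between them is the symmetry of $\nabla c$ in the form \eqref{eq:nablacsymmulti}: contracted against the appropriate Christoffel symbol and combined with the Toeplitz shape of the structure constants, it rewrites each upper-$\alpha$ product as the matching upper-$\beta$ product. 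Checking that, after these substitutions, $Q_1$ and $Q_2$ coincide term by term is the genuinely laborious step, and it is where I expect the main difficulty to lie.
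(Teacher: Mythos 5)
Your two preparatory computations are individually sound: the derivation of $Q_1$ from $R^{j(\alpha)}_{l(\alpha)m(\beta)i(\gamma)}=0$ is exactly the paper's starting point, and your antisymmetry-plus-Bianchi argument forcing $R^{(j-m+1)(\beta)}_{i(\gamma)l(\alpha)1(\beta)}=0$ (each required slot-swap does follow from an instance of \eqref{eq:3RCmultiinR} with $k=1$ taken in block $\beta$, the middle term being killed by the structure constants) is correct and rather elegant. The gap is the step you deferred: $Q_1=Q_2$ is never proved, and in fact it is false, because the printed statement contains a typo --- the upper index on the right-hand side should be $j(\alpha)$, not $j(\beta)$. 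This is visible in the paper's own proof, whose displayed identity has upper block $\alpha$ in every Christoffel factor of the quadratic terms (it is, up to the same typo in its second term, the expansion of a vanishing curvature component with upper block $\alpha$), and in both invocations of the lemma inside the proof of Proposition \ref{prop:comp}, which need $\partial_{j(\beta)}\Gamma^{s(\alpha)}_{1(\alpha)k(\gamma)}=\partial_{k(\gamma)}\Gamma^{s(\alpha)}_{1(\alpha)j(\beta)}$. A diagonal sanity check settles the matter: for blocks of size one the intended identity is Tsarev's symmetry $\partial_{\beta}\Gamma^{\alpha}_{\alpha\gamma}=\partial_{\gamma}\Gamma^{\alpha}_{\alpha\beta}$, whereas the literal statement reads $\partial_{\beta}\Gamma^{\alpha}_{\alpha\gamma}=\partial_{\gamma}\Gamma^{\beta}_{\beta\alpha}$; your own formulas reduce there to $Q_1=\Gamma^{\alpha}_{\alpha\beta}\Gamma^{\beta}_{\beta\gamma}+\Gamma^{\alpha}_{\alpha\gamma}\Gamma^{\gamma}_{\gamma\beta}-\Gamma^{\alpha}_{\alpha\beta}\Gamma^{\alpha}_{\alpha\gamma}$ and $Q_2=\Gamma^{\beta}_{\beta\gamma}\Gamma^{\gamma}_{\gamma\alpha}+\Gamma^{\beta}_{\beta\alpha}\Gamma^{\alpha}_{\alpha\gamma}-\Gamma^{\beta}_{\beta\gamma}\Gamma^{\beta}_{\beta\alpha}$, which differ at a generic point (take $\Gamma^{\alpha}_{\alpha\beta}=\Gamma^{\beta}_{\beta\gamma}=1$ and the remaining rotation coefficients zero there: $Q_1=1$, $Q_2=0$), and such pointwise first-order data are realisable by semi-Hamiltonian systems since the integrability conditions only constrain second-order relations. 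So the step you flagged as ``genuinely laborious'' is not laborious but impossible: no manipulation via \eqref{eq:nablacsymmulti} can convert the upper-$\alpha$ products of $Q_1$ into the upper-$\beta$ products of $Q_2$, and a blind-spot check in the semisimple limit would have exposed this before you committed to the plan.

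Once the statement is corrected, your own machinery closes the proof in two lines and coincides with the paper's argument. The $\beta\leftrightarrow\gamma$-swapped instance of your first computation gives $\partial_{i(\gamma)}\Gamma^{j(\alpha)}_{l(\alpha)m(\beta)}=\Gamma^{j(\alpha)}_{l(\alpha)s}\Gamma^{s}_{i(\gamma)m(\beta)}-\Gamma^{j(\alpha)}_{i(\gamma)s}\Gamma^{s}_{l(\alpha)m(\beta)}$, and subtracting this from $Q_1$ leaves, after torsion-freeness, $\Gamma^{j(\alpha)}_{i(\gamma)t(\sigma)}\Gamma^{t(\sigma)}_{m(\beta)l(\alpha)}-\Gamma^{j(\alpha)}_{m(\beta)t(\sigma)}\Gamma^{t(\sigma)}_{l(\alpha)i(\gamma)}$, where \eqref{Chr_a} forces $\sigma=\alpha$ and \eqref{eq:lemmaold1diag} turns both products into convolutions $\sum_{t=l}^{j}\Gamma^{(j-t+1)(\alpha)}_{\cdot\,1(\alpha)}\Gamma^{(t-l+1)(\alpha)}_{\cdot\,1(\alpha)}$, so the relabelling $t\mapsto j+l-t$ cancels them. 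This is precisely the paper's one-step proof: a single instance of \eqref{eq:3RCmultiinR} with $\tau=\rho=\alpha$, $k=1$, followed by the Lemma \ref{lemma:old1} shift and relabelling --- no second curvature component with upper block $\beta$, no Bianchi gymnastics, and no appeal to the symmetry of $\nabla c$. Your auxiliary vanishing $R^{(j-m+1)(\beta)}_{i(\gamma)l(\alpha)1(\beta)}=0$ remains a correct by-product, but it serves a target the lemma does not actually assert.
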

\begin{proof}
	Considering \eqref{eq:3RCmultiinR} with $\alpha = \rho = \tau$, $\beta \neq \tau \neq \gamma$, $k=1$ gives 
	\begin{equation*}
		\partial_{m(\beta)} \Gamma^{j(\alpha)}_{l(\alpha)i(\gamma)} - \partial_{i(\gamma)}\Gamma^{j(\beta)}_{l(\alpha) m(\beta)} + \Gamma^{j(\alpha)}_{m(\beta)t(\sigma)} \Gamma^{t(\sigma)}_{l(\alpha)i(\gamma)} - \Gamma^{j(\alpha)}_{i(\gamma) t(\sigma)} \Gamma^{t(\sigma)}_{m(\beta)l(\alpha)} = 0.
	\end{equation*}
	Note that by \eqref{Chr_a}, we must have $\sigma = \alpha$. Using Lemma \ref{lemma:old1} the final two terms cancel after a relabelling of the summed index (for instance replacing $t$  with $j+l-t$ in the first sum), which concludes the proof. 
\end{proof}
\begin{lemma}
	Let $j \geq 3$, then
	\begin{equation}
		\partial_{j(\alpha)} \Gamma^{k(\alpha)}_{k(\beta)1(\alpha)} = 0.
	\end{equation}
	\label{lemma:newnew}
\end{lemma}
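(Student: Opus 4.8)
The plan is to cut the problem down to a short range of $j$ and then read off the derivative from a single curvature component that I isolate inside the three-term relation \eqref{eq:3RCmultiinR}. First I would dispose of the easy ranges: the symbol $\Gamma^{k(\alpha)}_{k(\beta)1(\alpha)}$ only exists for $k\le\min\{m_\alpha,m_\beta\}$, and $\partial_{j(\alpha)}$ is meaningful only for $j\le m_\alpha$. By Lemma \ref{lemma:old5}, $\Gamma^{k(\alpha)}_{k(\beta)1(\alpha)}$ depends only on $\mathcal{U}^k$, whose block-$\alpha$ variables are $u^{1(\alpha)},\dots,u^{k(\alpha)}$; hence the derivative already vanishes for $j>k$, and it remains to handle $3\le j\le k$. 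In this range I would specialise \eqref{eq:3RCmultiinR} to the configuration whose outer lower index is $1(\alpha)$ and whose three cyclically permuted slots are $k(\beta)$, $j(\alpha)$, $1(\alpha)$ (taking $\gamma=\rho=\alpha$ and the upper value equal to $k$), so that the first curvature term becomes $R^{k(\alpha)}_{1(\alpha)k(\beta)j(\alpha)}$. The relation then reads
\[
R^{k(\alpha)}_{1(\alpha)k(\beta)j(\alpha)}+R^{1(\beta)}_{1(\alpha)j(\alpha)1(\alpha)}+R^{(k-j+1)(\alpha)}_{1(\alpha)1(\alpha)k(\beta)}=0.
\]

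Next I would show that the last two curvature terms vanish, so that the relation collapses to $R^{k(\alpha)}_{1(\alpha)k(\beta)j(\alpha)}=0$. Expanding $R^{1(\beta)}_{1(\alpha)j(\alpha)1(\alpha)}$, every Christoffel symbol carries an upper value $1$ together with a lower index of value $j\ge3$ and therefore dies by Lemma \ref{lemma:old4}, with the single exception of the term $\partial_{j(\alpha)}\Gamma^{1(\beta)}_{1(\alpha)1(\alpha)}$, which vanishes because $\Gamma^{1(\beta)}_{1(\alpha)1(\alpha)}$ depends only on $\mathcal{U}^1$ by Lemma \ref{lemma:old5}. In $R^{(k-j+1)(\alpha)}_{1(\alpha)1(\alpha)k(\beta)}$ the upper value $k-j+1$ is strictly below $k$, so any symbol containing the lower index $k(\beta)$ vanishes by Lemma \ref{lemma:old4}, while the remaining derivative term $\partial_{k(\beta)}\Gamma^{(k-j+1)(\alpha)}_{1(\alpha)1(\alpha)}$ vanishes by Lemma \ref{lemma:old5}, since $u^{k(\beta)}\notin\mathcal{U}^{k-j+1}$.

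Finally I would expand $R^{k(\alpha)}_{1(\alpha)k(\beta)j(\alpha)}=0$. Its two derivative terms are $\partial_{j(\alpha)}\Gamma^{k(\alpha)}_{1(\alpha)k(\beta)}$ — which, by symmetry of the lower indices, is the target $\partial_{j(\alpha)}\Gamma^{k(\alpha)}_{k(\beta)1(\alpha)}$ — and $\partial_{k(\beta)}\Gamma^{k(\alpha)}_{1(\alpha)j(\alpha)}$; the latter is zero because \eqref{eq:lemmaold1diagsameblock} gives $\Gamma^{k(\alpha)}_{1(\alpha)j(\alpha)}=\Gamma^{(k-j+1)(\alpha)}_{1(\alpha)1(\alpha)}$, which does not depend on $u^{k(\beta)}$ by Lemma \ref{lemma:old5}. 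For the two quadratic terms $\Gamma^{k(\alpha)}_{j(\alpha)r}\Gamma^{r}_{1(\alpha)k(\beta)}$ and $\Gamma^{k(\alpha)}_{k(\beta)r}\Gamma^{r}_{1(\alpha)j(\alpha)}$ I would first restrict the summation over $r$ using \eqref{Chr_a} and Lemma \ref{lemma:old4}: in the first sum this forces $r=k(\alpha)$ or $r=k(\beta)$, and in the second it forces $r=(r_v)(\alpha)$ with $j\le r_v\le k$. Each surviving contribution then vanishes: $\Gamma^{k(\alpha)}_{j(\alpha)k(\alpha)}=0$ by Remark \ref{rmk:old} and $\Gamma^{k(\alpha)}_{j(\alpha)k(\beta)}=0$ by Lemma \ref{lemma:newold} in the first sum, and $\Gamma^{(k-r_v+1)(\alpha)}_{k(\beta)1(\alpha)}=0$ by Lemma \ref{lemma:old4} (its upper value $k-r_v+1<k$ is below the value $k$ of the lower index $k(\beta)$) in the second. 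This yields $\partial_{j(\alpha)}\Gamma^{k(\alpha)}_{k(\beta)1(\alpha)}=0$.

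The only genuinely creative step is guessing the index substitution that isolates $R^{k(\alpha)}_{1(\alpha)k(\beta)j(\alpha)}$ while forcing the other two curvature terms in \eqref{eq:3RCmultiinR} to vanish; after that the argument is bookkeeping with the lemmas of this subsection. The real obstacle is the first quadratic sum, which is precisely where the hypothesis $j\ge3$ is indispensable: $\Gamma^{k(\alpha)}_{j(\alpha)k(\alpha)}$ is killed by Remark \ref{rmk:old} exactly when $k-j-k\le-3$, i.e.\ $j\ge3$, whereas for $j=2$ this contribution survives and the statement fails. I would therefore keep careful track of this term to confirm that $j\ge3$ is both used and sharp.
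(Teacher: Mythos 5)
Your proof is correct and follows essentially the same route as the paper's: both specialise the three-term curvature identity \eqref{eq:3RCmultiinR} so that the only surviving derivative is $\partial_{j(\alpha)}\Gamma^{k(\alpha)}_{1(\alpha)k(\beta)}$, and then kill every other contribution with Lemmas \ref{lemma:old4}, \ref{lemma:old5}, \ref{lemma:newold} and Remark \ref{rmk:old} (the paper uses the substitution $i=2$, $l=1$, $(k,j,m)\to(j,k+1,k)$ rather than yours, but the mechanism and the role of $j\geq 3$ are identical). One cosmetic slip: in \eqref{eq:3RCmultiinR} all three curvature components share the same upper block $\tau$, so your middle term is really $R^{1(\alpha)}_{1(\alpha)j(\alpha)1(\alpha)}\,\delta^{\beta}_{\alpha}=0$ and drops out for free; the effort spent showing that a component $R^{1(\beta)}_{1(\alpha)j(\alpha)1(\alpha)}$ vanishes is unnecessary, though harmless.
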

\begin{proof}
	Consider \eqref{eq:3RCmultiinR} with all indices except $m$ belonging to the same block (which we denote by $\alpha$), while $m = m(\beta)$ with $\beta \neq \alpha$. Moreover, let $i = 2, l = 1$, and $(k,  j, m) \rightarrow (j, k+1, k)$, then we get
	\begin{equation*}
		\underset{= \, 0 \, \text{ by Lemma \ref{lemma:old1}}}{\underbrace{\partial_{k(\beta)}\left(\Gamma^{(k-j+2)(\alpha)}_{1(\alpha)2(\alpha)} -\Gamma^{k(\alpha)}_{1(\alpha)j(\alpha)}\right)}} -\underset{= \, 0 \, \text{ by Lemma \ref{lemma:newold}}}{\underbrace{\partial_{2(\alpha)}\Gamma^{(k-j+2)(\alpha)}_{1(\alpha)k(\beta)}}} + \partial_{j(\alpha)}\Gamma^{k(\alpha)}_{1(\alpha)k(\beta)} + f = 0,
	\end{equation*}
	where 
	\begin{equation*}
		f = \Gamma^{(k-j+2)(\alpha)}_{k(\beta)t(\sigma)}\Gamma^{t(\sigma)}_{1(\alpha)2(\alpha)} - \Gamma^{(k-j+2)(\alpha)}_{2(\alpha)t(\sigma)}\Gamma^{t(\sigma)}_{1(\alpha)k(\beta)} + \Gamma^{k(\alpha)}_{j(\alpha)t(\sigma)}\Gamma^{t(\sigma)}_{1(\alpha)k(\beta)} - \Gamma^{k(\alpha)}_{k(\beta)t(\sigma)}\Gamma^{t(\sigma)}_{1(\alpha)j(\alpha)}.
	\end{equation*}
	Note that the second factor in the first term implies, by Lemma \ref{lemma:old4} that $t>k$, but then $k-j+2-t-2 \leq -j \leq -3$, and so the term is zero by Remark \ref{rmk:old}. The same logic implies that $f$ vanishes, which concludes the proof. 
\end{proof}

\subsection{Symmetries for systems of Darboux-Tsarev type} 
In \cite{LPVGhodograph} it was proved that the symmetries of a system \eqref{DTsys} of Darboux-Tsarev type 
are parametrised by $n$ arbitrary functions of a single variable. For instance,  in the  case 
of  a single Jordan block of size $3$, where $X^3=F_1(u^1,u^2)u^3+F_2(u^1,u^2)$,
one gets  
\begin{eqnarray*}
	&&Y^1=Y^1(u^1),\quad Y^2 = (-F'_3\int e^g\,du^2+F_4)e^{-g}\\
	&&Y^3=\frac{e^{-g}u^3(F_1-(X^1)')}{X^2}\left(-F'_3\int e^g\,du^2+F_4\right)+e^{-h}\left(F_5+\int\frac{ke^{h-g}}{(X^2)^2}\,du^2\right)-F'_3u^3
\end{eqnarray*}
where $F_3,F_4,F_5$ are arbitrary functions of $u^1$,
\begin{eqnarray*}
	g=\int\left(\frac{(X^1)'-\frac{\partial X^2}{\partial u^2}}{X^2}\right)\,du^2,\qquad
	h=\int\left(\frac{(X^1)'-F_1}{X^2}\right)\,du^2,
\end{eqnarray*}
and, finally,
\begin{eqnarray*}
	k&=&\left[-(X^2)^2f+X^2\left(\frac{\partial X^2}{\partial u^1}-\frac{\partial F_2}{\partial u^2}\right)+F_2(F_1-(X^1)')\right]\left(F'_3\int e^g\,du^2-F_4\right)\\&&
	+(X^2)^2\left(F'_4+\int e^{g}(fF'_3-F''_3)\,du^2\right),
\end{eqnarray*}
with
\begin{eqnarray*}
	f&=&\int\frac{1}{(X^2)^2}\left(X^2\frac{\partial^2 X^2}{\partial u^1\partial u^2}-X^2(X^1)''+\frac{\partial X^2}{\partial u^1}(X^1)'-\frac{\partial X^2}{\partial u^1}
	\frac{\partial X^2}{\partial u^2}\right)\,du^2.
\end{eqnarray*}

\subsection{Densities of conservation laws for systems of Darboux-Tsarev type} 
In the case of  systems of Darboux-Tsarev type  \eqref{dcl3} can be reduced to a finite sequence of closed subsystems satisfying Darboux's conditions. 
\subsubsection{The case of a single Jordan block}
In David-Hertling canonical coordinates  the system  \eqref{dcl3} reads  
\begin{equation}
	\partial_j\omega_i=\Gamma^s_{ji}\omega_s+\partial_1\omega_{j+i-1}.
\end{equation}
For $j=1$, the system reduces to a trivial identity since  $\Gamma^s_{1i}=0$ (by the flatness of the unit vector field).
The first subsystem of the sequence is the system for $\omega_n$:
\begin{equation}
	\partial_j\omega_n=\Gamma^s_{jn}\omega_s,\qquad j>1,
\end{equation}
which already has Darboux's form, as when $j>1$  we have $j+n-1>n$. Moreover, due to the  condition $\Gamma^s_{jn}=0$ for $s\ge n+j-2$, the  only  non-vanishing contribution  in the right-hand side of the system arises for $j=2$:
\begin{eqnarray*}
	\partial_2\omega_n&=&\Gamma^n_{2n}\omega_n,\\
	\partial_j\omega_n&=&0,\qquad j>2.
\end{eqnarray*}
The second closed subsystem is the system for $\omega_{n-1}$ which reads
\begin{eqnarray*}
	\partial_2\omega_{n-1}&=&\Gamma^{n-1}_{2,n-1}\omega_{n-1}+\Gamma^n_{2,n-1}\omega_n+\partial_1\omega_{n},\\
	\partial_3\omega_{n-1}&=&\Gamma^{n}_{3,n-1}\omega_{n},\\
	\partial_j\omega_{n-1}&=&0,\qquad j>3.
\end{eqnarray*}
In addition to  the unknown function $\omega_{n-1}$,  the right-hand sides of the system  also contain 
the function $\omega_n$ which was obtained in the previous step. In a similar way, the right-hand sides of the remaining $n-2$ subsystems for $\omega_{i}$, $i\in\{n-2,n-3,\dots,1\}$, contain, in addition to the unknown $\omega_i$, the functions $\omega_{i+1},...,\omega_n$  obtained in the previous steps. At each  step, the compatibility of the subsystem follows from Proposition \ref{compsysdcl}.
\newline
\newline
As a  consequence of Darboux's theorem, we have proved the following result.

\begin{theorem}
	In the  case of single Jordan block of size $n$, the general solution of the system \eqref{dcl3} for densities of conservation laws depends on $n$ arbitrary functions of $u^1$. 
\end{theorem}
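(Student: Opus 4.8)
The plan is to turn the triangular structure of the system into a finite descending induction on the index $i$, applying Darboux's theorem once at each stage. For a single block the system \eqref{dcl3} reads $\partial_j\omega_i=\Gamma^s_{ji}\omega_s+\partial_1\omega_{j+i-1}$, and the equations with $j=1$ are identities because the flatness of the unit forces $\Gamma^s_{1i}=0$; only the equations with $j\ge2$ are informative. First I would record two structural facts. By Lemma \ref{lemma:old4} the coefficient $\Gamma^s_{ji}$ vanishes unless $s\ge\max\{j,i\}\ge i$, so the sum $\Gamma^s_{ji}\omega_s$ only ever involves the current unknown $\omega_i$ and the higher-index functions $\omega_{i+1},\dots,\omega_n$. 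Moreover, for $j\ge2$ the inhomogeneous term $\partial_1\omega_{j+i-1}$ carries the index $j+i-1\ge i+1$, hence again only higher-index functions. Consequently the equations for $\partial_j\omega_i$ with $j\ge2$ form a subsystem $\mathcal{S}_i$ that is closed in $\omega_i$ once $\omega_{i+1},\dots,\omega_n$ are known; in particular $\mathcal{S}_n$ is closed in $\omega_n$ alone, since the term $\partial_1\omega_{j+n-1}$ is absent.

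Second, I would verify that each $\mathcal{S}_i$ has the shape required by Darboux's theorem: the only first-order derivative of $\omega_i$ not prescribed by $\mathcal{S}_i$ is $\partial_1\omega_i$, and the compatibility computation must not regenerate it. Forming $\partial_kf_{ij}-\partial_jf_{ik}$ with $j,k\ge2$, the only way a derivative of the current unknown $\omega_i$ could appear is through the $s=i$ term $\Gamma^i_{ji}\partial_k\omega_i$, which is supplied by the equation $\partial_k\omega_i=f_{ik}$ of $\mathcal{S}_i$ itself (as $k\ge2$), while differentiating $\partial_1\omega_{j+i-1}$ yields only known second derivatives of the already-determined higher-index functions. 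Thus $\partial_1\omega_i$ is genuinely absent and is not introduced by the compatibility conditions, which is exactly Darboux's hypothesis.

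Third, the vanishing of the compatibility expression \eqref{conscond} on solutions of \eqref{dcl3} is precisely Proposition \ref{compsysdcl}, so the integrability conditions of each $\mathcal{S}_i$ hold automatically. I would then apply Darboux's theorem to $\mathcal{S}_i$ (with $\omega_{i+1},\dots,\omega_n$ already fixed): since $\partial_j\omega_i$ is prescribed for $j=2,\dots,n$ while $\partial_1\omega_i$ is free, the solution is determined by its Cauchy data along the $u^1$-direction, i.e.\ by one arbitrary function of the single variable $u^1$. Running this from $i=n$ down to $i=1$ produces one such function at each of the $n$ steps, yielding the claimed $n$ arbitrary functions of $u^1$.

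The main obstacle I anticipate is not the final count but the bookkeeping behind the second and third steps: one must ensure that, uniformly in $i$, neither the Christoffel coefficients nor the inhomogeneous terms smuggle the missing derivative $\partial_1\omega_i$ (or any lower-index unknown $\omega_s$ with $s<i$) back into $\mathcal{S}_i$ or into its compatibility conditions, so that Darboux's hypotheses genuinely hold at every stage rather than only for the top variable $\omega_n$. This is where the vanishing properties collected earlier---Lemma \ref{lemma:old4} together with Remark \ref{rmk:old}---do the essential work, and where a careful index-by-index check is required.
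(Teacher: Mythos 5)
Your proposal is correct and follows essentially the same route as the paper: the descending decomposition into closed subsystems $\mathcal{S}_n,\mathcal{S}_{n-1},\dots,\mathcal{S}_1$ (closure guaranteed by Lemma \ref{lemma:old4} and by $j+i-1\ge i+1$ for $j\ge 2$), compatibility supplied by Proposition \ref{compsysdcl}, and Darboux's theorem applied at each stage to produce one arbitrary function of $u^1$. The paper only adds the sharper observation, via Remark \ref{rmk:old}, that most right-hand sides actually vanish (e.g.\ only $j=2$ contributes for $\omega_n$), but this is not needed for the count.
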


\subsubsection{Arbitrary Jordan block structure}
The general case can be  treated in a similar way as the case of a single Jordan block. Let the matrix defining the system contain $r$  blocks of sizes $m_1, \dots, m_r$, respectively. Then, without loss of generality, we can assume that 
\[m_1\ge m_2\ge\cdots\ge m_r.\]
Firstly,  observe that in David-Hertling  coordinates the system \eqref{dcl3} reads 
\begin{eqnarray}
	\label{sb1}
	\partial_{j(\alpha)}\omega_{i(\alpha)}&=&\Gamma_{j(\alpha)i(\alpha)}^{s(\sigma)}\omega_{s(\sigma)}+\sum_{\beta=1}^r\partial_{1(\beta)}\omega_{(i+j-1)(\alpha)},\\
	\label{sb2}
	\partial_{j(\beta)}\omega_{i(\alpha)}&=&\Gamma_{j(\beta)i(\alpha)}^{s(\sigma)}\omega_{s(\sigma)},\qquad\beta\ne\alpha.
\end{eqnarray}
When $j=1$, taking  into  account \eqref{sb2},   \eqref{sb1} reduces  to an identity. Indeed,
the equation
\[\partial_{1(\alpha)}\omega_{i(\alpha)}=\Gamma_{1(\alpha)i(\alpha)}^{s(\sigma)}\omega_{s(\sigma)}+\sum_{\beta=1}^r\partial_{1(\beta)}\omega_{(i+j-1)(\alpha)}\]
can be written as
\[\Gamma_{1(\alpha)i(\alpha)}^{s(\sigma)}\omega_{s(\sigma)}+\sum_{\beta\ne\alpha}\partial_{1(\beta)}\omega_{i(\alpha)}=0\]
or, using \eqref{sb2}, as
\[e^{j(\beta)}\Gamma_{j(\beta)i(\alpha)}^{s(\sigma)}\omega_{s(\sigma)}=0,\]
which is an identity due to the  flatness of the unit vector field.
\newline
\newline
The first subsystem of the sequence is the subsystem for the unknowns $\omega_{m_1(1)}, \dots, \omega_{m_1(p_1)}$, where
\[
p_1=\max\big\{\alpha\in\{1,\dots,r\}\,|\,m_\alpha=m_1\big\}.
\]
When $j=1$,  only  the second part of the system is non-trivial and we have 
\[\partial_{1(\beta)}\omega_{m_1(\alpha)}=\Gamma_{1(\beta)m_1(\alpha)}^{s(\sigma)}\omega_{s(\sigma)},\qquad\alpha\in\{1,...,p_1\},\,\beta\ne\alpha.\] 
Using Lemma \ref{lemma:newold}, we obtain
\begin{eqnarray*}
\partial_{1(\beta)}\omega_{m_1(\alpha)}&=&\Gamma_{1(\beta)m_1(\alpha)}^{m_1(\alpha)}\omega_{m_1(\alpha)}+\Gamma_{1(\beta)m_1(\alpha)}^{m_1(\beta)}\omega_{m_1(\beta)}\mathds{1}_{\{\beta\le  p_1\}},\qquad\alpha\in\{1,...,p_1\},\,\beta\ne\alpha.
\end{eqnarray*}
Let us consider the case when $j=2$. THen we have,
\begin{eqnarray}
	\partial_{2(\alpha)}\omega_{m_1(\alpha)}&=&\Gamma_{2(\alpha)m_1(\alpha)}^{s(\sigma)}\omega_{s(\sigma)},\qquad\alpha\in\{1,...,p_1\},\\
	\partial_{2(\beta)}\omega_{m_1(\alpha)}&=&\Gamma_{2(\beta)m_1(\alpha)}^{s(\sigma)}\omega_{s(\sigma)},\qquad\alpha\in\{1,...,p_1\},\,\beta\ne\alpha.
\end{eqnarray}  
Using Remark \ref{rmk:old} and Lemma \ref{lemma:newold}, we obtain 
\begin{eqnarray}
	\partial_{2(\alpha)}\omega_{m_1(\alpha)}&=&\Gamma_{2(\alpha)m_1(\alpha)}^{m_1(\alpha)}\omega_{m_1(\alpha)},\qquad\alpha\in\{1,...,p_1\},\\
	\partial_{2(\beta)}\omega_{m_1(\alpha)}&=&0,\qquad\qquad\qquad\qquad\,\alpha\in\{1,...,p_1\},\,\beta\ne\alpha.
\end{eqnarray}  
Similarly, the remaining equations ($j>2$) for $\omega_{m_1(\alpha)}$ are
\begin{eqnarray}
	\partial_{j(\alpha)}\omega_{m_1(\alpha)}&=&0,\qquad\alpha\in\{1,...,p_1\},\\
	\partial_{j(\beta)}\omega_{m_1(\alpha)}&=&0,\qquad\alpha\in\{1,...,p_1\},\,\beta\ne\alpha.
\end{eqnarray}  
The second subsystem of the sequence is the subsystem for the unknowns 
\[\omega_{(m_1-1)(1)},...,\omega_{(m_1-1)(p_2)},\] 
where
\[
p_2=\max\big\{\alpha\in\{1,\dots,r\}\,|\,m_\alpha\geq m_1-1\big\}.
\]
Clearly $p_2\ge p_1$. When $j=1$, only  the second part of the system is non-trivial and we have 
\begin{eqnarray*}
	\partial_{1(\beta)}\omega_{(m_1-1)(\alpha)}&=&\Gamma_{1(\beta)(m_1-1)(\alpha)}^{(m_1-1)(\alpha)}\omega_{(m_1-1)(\alpha)}+\Gamma_{1(\beta)(m_1-1)(\alpha)}^{m_1(\alpha)}\omega_{m_1(\alpha)}\mathds{1}_{\{\alpha\le  p_1\}}+\\
	&&\Gamma_{1(\beta)(m_1-1)(\alpha)}^{(m_1-1)(\beta)}\omega_{(m_1-1)(\beta)}\mathds{1}_{\{\beta\le  p_2\}}+\Gamma_{1(\beta)(m_1-1)(\alpha)}^{m_1(\beta)}\omega_{m_1(\beta)}\mathds{1}_{\{\beta\le  p_1\}},
\end{eqnarray*}
where $\alpha\in\{1,...,p_2\},\,\beta\ne\alpha.$
\noindent
Let us consider the case where $j=2$. Taking into account the vanishing conditions on the Christoffel symbols we get
\begin{eqnarray*}
	\partial_{2(\alpha)}\omega_{(m_1-1)(\alpha)}&=&\Gamma_{2(\alpha)(m_1-1)(\alpha)}^{(m_1-1)(\alpha)}\omega_{(m_1-1)(\alpha)}+\Gamma_{2(\alpha)(m_1-1)(\alpha)}^{m_1(\alpha)}\omega_{m_1(\alpha)}\mathds{1}_{\{\alpha\le  p_1\}}\\&&+\sum_{\sigma=1}^{p_1}\Gamma_{2(\beta)(m_1-1)(\alpha)}^{m_1(\sigma)}\omega_{m_1(\sigma)},\\
	\partial_{2(\beta)}\omega_{(m_1-1)(\alpha)}&=&\Gamma_{2(\beta)(m_1-1)(\alpha)}^{m_1(\alpha)}\omega_{m_1(\alpha)}\mathds{1}_{\{\alpha\le  p_1\}}+\Gamma_{2(\beta)(m_1-1)(\alpha)}^{m_1(\beta)}\omega_{m_1(\beta)}\mathds{1}_{\{\beta\le  p_1\}},
\end{eqnarray*}
where $\alpha\in\{1,...,p_2\},\,\beta\ne\alpha.$ Similarly, for $j=3$ we get
\begin{eqnarray}
	\partial_{3(\alpha)}\omega_{(m_1-1)(\alpha)}&=&\Gamma_{3(\alpha)(m_1-1)(\alpha)}^{m_1(\alpha)}\omega_{m_1(\alpha)}\mathds{1}_{\{\alpha\le  p_1\}},\qquad\alpha\in\{1,...,p_1\},\\
	\partial_{3(\beta)}\omega_{(m_1-1)(\alpha)}&=&0,\qquad\alpha\in\{1,...,p_1\},\,\beta\ne\alpha,
\end{eqnarray}  
with the remaining equations ($j>3$) for $\omega_{(m_1-1)(\alpha)}$ being
\begin{eqnarray}
	\partial_{j(\alpha)}\omega_{(m_1-1)(\alpha)}&=&0,\qquad\alpha\in\{1,...,p_1\},\\
	\partial_{j(\beta)}\omega_{(m_1-1)(\alpha)}&=&0,\qquad\alpha\in\{1,...,p_1\},\,\beta\ne\alpha.
\end{eqnarray}  
In addition to  the unknowns  $(\omega_{(m_1-1)(1)},...,\omega_{(m_1-1)(p_2)})$, 
the right-hand sides of the system  also contain  
the functions $(\omega_{m_1(1)},...,\omega_{m_1(p_1)})$ obtained at the previous step. Similarly, the right-hand sides of the remaining $m_1-2$ subsystems for $\omega_{i(\alpha)}$, $i\in\{m_1-2,m_1-3,\dots,1\}$, contain, in addition to  the unknown $\omega_{i(\alpha)}$,   the functions $\omega_{(i+1)(\alpha)},...,\omega_{m_1(\alpha)}$ obtained at the previous steps. At each  step, the compatibility of the subsystem follows from Proposition \ref{compsysdcl}.
\newline
\newline
As a  consequence of Darboux's theorem, we have proved the following result.

\begin{theorem}
	In the  case of $r$ Jordan blocks of sizes $m_1,\dots,m_r$, the general solution of the system \eqref{dcl3} for densities of conservation laws depends on $n$ arbitrary functions. More precisely, each  block $\alpha$ contributes with $m_\alpha$ arbitrary functions of the main variable of the block, $u^{1(\alpha)}$ . 
\end{theorem}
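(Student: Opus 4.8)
The plan is to reproduce the staircase argument of the single-block case, now organised by the value of the ``level'' index. After relabelling the blocks so that $m_1\ge m_2\ge\cdots\ge m_r$, I would, for each integer $i$ decreasing from $m_1$ to $1$, group together the unknowns $\{\omega_{i(\alpha)}\}$ over exactly those blocks reaching level $i$. Because the blocks are ordered, these are $\omega_{i(1)},\dots,\omega_{i(p_i)}$ with $p_i=\max\{\alpha : m_\alpha\ge i\}=|\{\alpha:m_\alpha\ge i\}|$, and I would call the corresponding equations of \eqref{sb1}--\eqref{sb2} the \emph{level-$i$ subsystem}. The whole argument then rests on showing that these subsystems form a closed, Darboux-compatible sequence, each fed only by the unknowns of strictly higher levels.

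First I would analyse the structure of the level-$i$ subsystem. The only feature of \eqref{sb1} spoiling Darboux's form is the term $\sum_\beta\partial_{1(\beta)}\omega_{(i+j-1)(\alpha)}$. For $j=1$ it combines with \eqref{sb2} and the flatness of $e$ to give a pure identity (exactly as in the $j=1$ computation recalled above), so that \emph{no} equation is imposed on $\partial_{1(\alpha)}\omega_{i(\alpha)}$; this absent main-variable derivative is precisely what will carry the arbitrary data. For $j\ge 2$ the raised index $i+j-1$ exceeds $i$, hence $\omega_{(i+j-1)(\alpha)}$ sits at a strictly higher level: at the top level $i=m_1$ it is out of range and vanishes identically, while at every lower level it has already been determined at an earlier stage of the recursion. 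It remains to control the Christoffel terms $\Gamma^{s(\sigma)}_{j(\rho)i(\alpha)}\omega_{s(\sigma)}$, which I would do with Lemmas \ref{lemma:old1}, \ref{lemma:old4}, \ref{lemma:old5} and \ref{lemma:newold} together with Remark \ref{rmk:old}: these vanishing properties force every surviving coefficient to multiply either a current-level unknown $\omega_{i(\cdot)}$ or an already-known higher-level function, with the coefficients themselves depending only on ${\bf u}$. Consequently the level-$i$ subsystem is genuinely of the Darboux form $\partial_{j(\alpha)}\omega_{i(\beta)}=f(\boldsymbol{\omega},{\bf u})$ with the derivatives $\partial_{1(\alpha)}\omega_{i(\alpha)}$ the only missing ones.

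Next I would invoke Darboux's theorem at each level. Its compatibility hypothesis, namely the cross-derivative condition \eqref{conscond}, holds on solutions by Proposition \ref{compsysdcl}; the additional point to check is that forming these compatibility conditions does not resurrect any of the absent $\partial_{1(\alpha)}$-derivatives, which again follows from the same vanishing lemmas since differentiating the right-hand sides produces only derivatives of Christoffel symbols (functions of ${\bf u}$) and derivatives of the previously determined higher-level functions. Darboux's theorem then yields, for the level-$i$ subsystem, a general solution determined by prescribing each $\omega_{i(\alpha)}$ along its free direction $u^{1(\alpha)}$; that is, $p_i$ arbitrary functions, one of $u^{1(\alpha)}$ for every $\alpha\in\{1,\dots,p_i\}$.

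Finally I would collect the count. Summing over levels, the total number of arbitrary functions is $\sum_{i=1}^{m_1}p_i=\sum_{i=1}^{m_1}|\{\alpha:m_\alpha\ge i\}|$, and interchanging the order of summation gives $\sum_{\alpha=1}^{r}m_\alpha=n$. Reading the same double sum block by block shows that block $\alpha$ occurs at precisely the levels $i=1,\dots,m_\alpha$, contributing one arbitrary function of its main variable $u^{1(\alpha)}$ at each, hence $m_\alpha$ in total, which is the refined statement. I expect the genuine difficulty to lie in the middle step: the original system \eqref{dcl3} is \emph{not} of Darboux's form because of the $\partial_1\omega$ terms, so the entire content is that the level decomposition, powered by the index-raising pattern $i+j-1$ and the Christoffel vanishing lemmas, reorganises it into honest Darboux blocks without ever coupling a current-level unknown's missing $\partial_{1(\alpha)}$-derivative back into the system.
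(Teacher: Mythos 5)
Your proposal is correct and follows essentially the same route as the paper: the same level-by-level decomposition into subsystems for $\omega_{i(1)},\dots,\omega_{i(p_i)}$ with $p_i=\max\{\alpha:m_\alpha\ge i\}$, the same use of the flatness of $e$ to turn the $j=1$ same-block equation into an identity (leaving $\partial_{1(\alpha)}\omega_{i(\alpha)}$ free), the same vanishing lemmas to put each subsystem in Darboux form with only higher-level functions appearing as known data, and compatibility via Proposition \ref{compsysdcl}. The final count $\sum_i p_i=\sum_\alpha m_\alpha=n$ with block $\alpha$ contributing $m_\alpha$ functions of $u^{1(\alpha)}$ matches the paper's conclusion.
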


\section{Dubrovin-Novikov-Tsarev system}
\hspace{-1em}From now on we will assume that the functions $u^i(x)$ satisfy vanishing boundary conditions (at infinity) or periodic boundary conditions.
\begin{definition}
Functionals of hydrodynamic type are functional  of the form
\[H[u]=\int h(u)\,dx.\]
The domain of integration is the real line for  vanishing boundary conditions and the period in the case of periodic boundary conditions.
\end{definition}

\begin{definition}\cite{DN84}
A Poisson bracket of hydrodynamic type is a Poisson bracket  of the form
\[\{F,G\}
=\iint
\f{\delta F}{\delta u^i(x)}\{u^i(x),u^j(y)\}\f{\delta G}{\delta u^j(y)}\,dxdy,\]
where
\[\{u^i(x),u^j(y)\}=g^{ij}({\bs u}(x))\delta'(x-y)+b^{ij}_k({\bs u}(x))u^k_x\delta(x-y)\]
and $g^{ij},b^{ij}_k$ are smooth functions with $i, j, k \in\{1,\dots,n\}$.
\end{definition}

\hspace{-1em}From the definition of the Dirac delta function and its derivatives, we have 
\begin{eqnarray*}
\int f(y)\delta^{(p)}(x-y)dy&=& f^{(p)}(x).
\end{eqnarray*}
It follows that
\[\{F,G\}
=\int
\f{\delta F}{\delta u^i}P^{ij}\f{\delta G}{\delta u^j}\,dx,\]
where 
\[P^{ij}=g^{ij}({\bs u})\d_x+b^{ij}_k({\bs u})u^k_x.\]
From now on we will assume that $g \coloneqq (g^{ij})$ constitute  a non-degenerate matrix.

\begin{definition}
A hamiltonian system of hydrodynamic type is a  system of the form
\[u^i_t=P^{ij}\frac{\delta H}{\delta u^j}=\left(g^{ij}\f{\d^2 h}{\d u^j\d u^k}+b^{ij}_k\f{\d h}{\d u^j}\right)u^k_x,\qquad i=1,\dots,n\]
where $H=\int h(u)\,dx$ is a functional of hydrodynamic type.
\end{definition}

\begin{theorem}\cite{DN84,DN}
In the non-degenerate case  $\text{det}\,(g)\ne 0$,  the formula
\[\{u^i(x),u^j(y)\}=g^{ij}({\bs u})\delta'(x-y)+b^{ij}_k({\bs u})u^k_x\delta(x-y)\]
 defines a Poisson bracket if and only if $g$ is symmetric, the linear connection defined by the Christoffel symbols $\tilde\Gamma^j_{hk}=-g_{hi}b^{ij}_k$ 
coincides with the Levi-Civita connection of the metric $g$ and the associated Riemann tensor vanishes.
\end{theorem}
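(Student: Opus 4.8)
The plan is to use the fact that the given formula defines a Poisson bracket precisely when the associated bracket on functionals is skew-symmetric and satisfies the Jacobi identity, and to convert each of these two requirements into pointwise identities on $g^{ij}$ and $b^{ij}_k$ by matching the coefficients of the independent distributional structures. The single computational tool needed throughout is the expansion of a coefficient evaluated at $y$ in terms of quantities evaluated at $x$, namely $f(\bs u(y))\delta'(x-y)=f(\bs u(x))\delta'(x-y)+\partial_s f\,u^s_x\,\delta(x-y)$, which follows directly from the convention $\int f(y)\delta^{(p)}(x-y)\,dy=f^{(p)}(x)$ recorded above (test against an arbitrary $\phi(y)$ and compare).

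First I would treat skew-symmetry. Imposing $\{u^i(x),u^j(y)\}+\{u^j(y),u^i(x)\}=0$ and using $\delta'(y-x)=-\delta'(x-y)$, $\delta(y-x)=\delta(x-y)$ together with the expansion above to bring every coefficient to the point $x$, the left-hand side becomes a linear combination of $\delta'(x-y)$ and $u^k_x\,\delta(x-y)$. Setting the coefficient of $\delta'(x-y)$ to zero yields $g^{ij}=g^{ji}$, and setting the coefficient of $u^k_x\,\delta(x-y)$ to zero yields $\partial_k g^{ij}=b^{ij}_k+b^{ji}_k$. Inverting the defining relation $\tilde\Gamma^j_{hk}=-g_{hi}b^{ij}_k$ gives $b^{ij}_k=-g^{is}\tilde\Gamma^j_{sk}$, and substituting this into the last identity rewrites it as $\partial_k g^{ij}+g^{is}\tilde\Gamma^j_{sk}+g^{js}\tilde\Gamma^i_{sk}=0$, i.e. $\tilde\nabla_k g^{ij}=0$. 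Thus skew-symmetry is equivalent to $g$ being symmetric and to the connection $\tilde\Gamma$ being compatible with the metric $g$.

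Next I would impose the Jacobi identity. Substituting the three coordinate functionals $F=u^i(x)$, $G=u^j(y)$, $H=u^k(z)$ and expanding the cyclic sum of iterated brackets with the same distributional identities produces a sum of terms supported on the diagonal and weighted by the two-point structures $\delta'(x-y)\delta'(x-z)$, $\delta''\delta$, $u^l_x\,\delta\,\delta$, and so on. Setting to zero the coefficient of the highest-order structure (the one free of $u_x$) forces the symmetry $g^{is}b^{jk}_s=g^{js}b^{ik}_s$, equivalently $\tilde\Gamma^j_{hk}=\tilde\Gamma^j_{kh}$; combined with the metric compatibility already obtained from skew-symmetry, this identifies $\tilde\Gamma$ with the Levi-Civita connection of $g$. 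The remaining coefficient, quadratic in $u_x$, then reorganises — after using metricity and torsion-freeness to discard the non-tensorial pieces — into the Riemann tensor $R^i_{jkl}$ of $\tilde\Gamma$, so that its vanishing is exactly the condition $R=0$. The converse follows by running the same matching backwards: assuming $g$ symmetric, $\tilde\Gamma$ equal to the Levi-Civita connection, and $R=0$, every distributional coefficient in both the skew-symmetry relation and the Jacobi sum vanishes identically.

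The main obstacle is the Jacobi computation: the cyclic sum generates a large number of distributional terms at the three points $x,y,z$, and the delicate part is the bookkeeping required to group them by independent structures and to recognise that, once the skew-symmetry (metricity) and torsion-free conditions are enforced, the sole surviving obstruction assembles precisely into the curvature $R^i_{jkl}$ of $\tilde\Gamma$. Carefully tracking the non-tensorial intermediate expressions and the derivative-of-delta identities is where the effort concentrates, but it is purely computational; the conceptual content is the clean split of skew-symmetry into (symmetry $+$ metricity) and of Jacobi into (torsion-freeness $+$ flatness).
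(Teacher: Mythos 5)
The paper does not prove this statement: it is quoted verbatim from Dubrovin--Novikov (\cite{DN84,DN}) as a known classical result, so there is no in-paper argument to compare yours against. Judged on its own terms, your plan is the standard original argument and its overall architecture is correct: skew-symmetry of the bracket on the coordinate functionals, after transporting all coefficients to the point $x$ via $f(\bs u(y))\delta'(x-y)=f(\bs u(x))\delta'(x-y)+\partial_kf\,u^k_x\,\delta(x-y)$, splits into $g^{ij}=g^{ji}$ and $\partial_kg^{ij}=b^{ij}_k+b^{ji}_k$, the latter being exactly $\tilde\nabla_kg^{ij}=0$; and the leading distributional coefficient of the Jacobi cyclic sum gives $g^{is}b^{jk}_s=g^{js}b^{ik}_s$, i.e.\ symmetry of $\tilde\Gamma^k_{st}$ in the lower indices, which together with metricity pins down the Levi-Civita connection. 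Two caveats. First, what you have written is a plan, not a proof: the entire substance of the theorem lives in the Jacobi computation you defer, and there the bookkeeping is genuinely nontrivial because the cyclic sum produces \emph{several} independent structures ($\delta''(x-y)\delta(x-z)$, $\delta'(x-y)\delta'(x-z)$, $u^l_x\delta'\delta$, $u^l_xu^m_x\delta\delta$, at the various pairings of $x,y,z$), not a single ``remaining coefficient quadratic in $u_x$''; the reduction to $R^i{}_{jkl}=0$ uses all of them jointly, with the lower-order conditions following from differentiating the higher-order ones only after metricity and torsion-freeness are imposed. Second, for the ``if'' direction you should note explicitly that verifying skew-symmetry and Jacobi on the coordinate functionals $u^i(x)$ suffices, since general functionals of hydrodynamic type are generated by these and the bracket is a (bi)derivation. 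Neither point is a wrong turn, but both must be supplied before the sketch becomes a proof.
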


\hspace{-1em}It is immediate to check that, in the non degenerate case, a Hamiltonian system of hydrodynamic type has the form
\[u^i_t=g^{ij}\tilde\nabla_j\tilde\nabla_k h\,u^k_x.\]
This leads to the following result.

\begin{proposition}\cite{ts86,DN}
A system of hydrodynamic type 
\[u^i_t=V^i_j({\bs u})u^j_x,\]
is Hamiltonian with respect to a Dubrovin-Novikov structure if and only if there exists a flat metric $(g_{ij})$ such that
\begin{eqnarray}
\label{DN1}
g_{ik}V^k_j&=&g_{jk}V^k_i,\\
\label{DN2}
\tilde\nabla_iV^k_j&=&\tilde\nabla_jV^k_i,
\end{eqnarray}
where $\nabla$ is the covariant derivative defined by the associated Levi-Civita connection.
\end{proposition}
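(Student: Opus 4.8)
The plan is to prove both implications by relating the matrix $V$ to a covariant Hessian through the metric. I set $\omega_{ij} := g_{ik}V^k_j$, so that \eqref{DN1} is literally the symmetry $\omega_{ij}=\omega_{ji}$, while $V^i_j = g^{ik}\omega_{kj}$. The key observation, already recorded just above the statement, is that in the non-degenerate case a Hamiltonian system necessarily takes the form $u^i_t = g^{ij}\tilde\nabla_j\tilde\nabla_k h\,u^k_x$; I read this as the assertion that $\omega_{kj}=\tilde\nabla_k\tilde\nabla_j h$ for a suitable density $h$, which is exactly the structure I want to characterise. Thus the whole proposition reduces to understanding when $\omega_{ij}$ is a covariant Hessian for a flat metric.

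For the forward implication I would start from the Dubrovin–Novikov theorem quoted above: a Hamiltonian structure forces $g$ to be symmetric, the symbols $\tilde\Gamma^j_{hk}=-g_{hi}b^{ij}_k$ to be its Levi-Civita connection, and the Riemann tensor to vanish, so $g$ is flat. Writing $V^i_j=g^{ik}\tilde\nabla_k\tilde\nabla_j h$, torsion-freeness of $\tilde\nabla$ gives $\tilde\nabla_k\tilde\nabla_j h=\tilde\nabla_j\tilde\nabla_k h$, so $\omega_{ij}=\tilde\nabla_i\tilde\nabla_j h$ is symmetric, which is \eqref{DN1}. For \eqref{DN2} I note that $\tilde\nabla_i V^k_j=g^{kl}\tilde\nabla_i\tilde\nabla_l\tilde\nabla_j h$, since $\tilde\nabla g=0$, and in flat coordinates (where $\tilde\Gamma=0$) this third covariant derivative is just $\partial_i\partial_l\partial_j h$, which is totally symmetric; being a tensorial identity it persists in every chart, and symmetry under $i\leftrightarrow j$ yields \eqref{DN2}.

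For the converse I would work in a flat coordinate system for the given flat metric, so that $\tilde\nabla=\partial$. Then \eqref{DN1} says $\omega_{ij}$ is symmetric, while \eqref{DN2} gives that $\partial_i\omega_{lj}=g_{lk}\partial_i V^k_j$ is symmetric under $i\leftrightarrow j$. Setting $C_{ilj}:=\partial_i\omega_{lj}$, symmetry of $\omega$ gives $C_{ilj}=C_{ijl}$ and \eqref{DN2} gives $C_{ilj}=C_{jli}$; these two transpositions generate the full symmetric group, so $C$ is totally symmetric. In particular $\partial_k\omega_{ij}=\partial_i\omega_{kj}$, so for each fixed $j$ the one-form $i\mapsto\omega_{ij}$ is closed and, by the Poincaré lemma, equals $\partial_i\phi_j$ for some $\phi_j$; the symmetry $\partial_i\phi_j=\omega_{ij}=\omega_{ji}=\partial_j\phi_i$ then makes $(\phi_j)$ closed, whence $\phi_j=\partial_j h$ locally and $\partial_i\partial_j h=\omega_{ij}$. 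Thus $V^i_j=g^{ik}\tilde\nabla_k\tilde\nabla_j h$, and defining $P^{ij}=g^{ij}\partial_x+b^{ij}_k u^k_x$ with $b^{ij}_k$ determined by the Levi-Civita symbols of the flat $g$, the Dubrovin–Novikov theorem guarantees that $P$ is Poisson; the reduced form recalled above then identifies $u^i_t=P^{ij}\frac{\delta H}{\delta u^j}$, with $H=\int h\,dx$, as the given system.

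The main obstacle is the local integration in the converse: one must extract total symmetry of $\tilde\nabla_i\omega_{jk}$ purely from \eqref{DN1}–\eqref{DN2} and then run the two-step Poincaré argument, whose solvability depends essentially on flatness of $g$. Indeed, in a non-flat metric the integrability condition for $\tilde\nabla_i\psi_j=\omega_{ij}$ would carry a curvature obstruction of the form $\tilde R^s_{jki}\psi_s$, matching the convention $(\nabla_k\nabla_m-\nabla_m\nabla_k)\omega_s=R^t_{skm}\omega_t$ used earlier. Care is also needed to keep all statements either coordinate-free or, where flat coordinates are used for convenience, to recognise the resulting identities as tensorial and hence chart-independent; in particular the existence of $h$ is only local, in keeping with the local nature of Hamiltonian densities.
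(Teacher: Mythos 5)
Your argument is correct. Note that the paper does not actually prove this proposition: it is quoted as a classical result of Tsarev and Dubrovin--Novikov (the citation \cite{ts86,DN}), with only the preceding remark that a non-degenerate Hamiltonian system takes the form $u^i_t=g^{ij}\tilde\nabla_j\tilde\nabla_k h\,u^k_x$. Your proof is the standard one and fills this in correctly: the forward direction follows from the torsion-freeness and metricity of $\tilde\nabla$ applied to the covariant Hessian of $h$, and the converse is the two-step Poincar\'e-lemma argument in flat coordinates, where the key point --- that the two transpositions coming from \eqref{DN1} and \eqref{DN2} generate the full symmetric group on the three indices of $\partial_i(g_{lk}V^k_j)$, forcing total symmetry --- is stated and used properly. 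The only step worth flagging is the identification of the resulting system $u^i_t=g^{ij}\tilde\nabla_j\tilde\nabla_k h\,u^k_x$ with $P^{ij}\frac{\delta H}{\delta u^j}$, which requires $b^{ij}_k=-g^{is}\tilde\Gamma^j_{sk}$ as in the paper's statement of the Dubrovin--Novikov theorem; you invoke this correctly, so there is no gap.
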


\subsection{Hamiltonian formalism for diagonal systems}
Let us consider a diagonal system
\[V^i_j=v^i\delta^i_j\]
and suppose that $v^i\ne v^j,$ for $i\ne j$. In this case the condition \eqref{DN1} 
states that the metric must be diagonal in the Riemann invariants, while the condition
 \eqref{DN2} reduces  to
\begin{equation}\label{sys_met}
\partial_j g_{ii}=2\f{\d_jv^i}{v^j-v^i}g_{ii},\qquad i\ne j.
\end{equation}
Expanding the compatibility conditions 
\[\d_k\left(2\f{\d_jv^i}{v^j-v^i}g_{ii}\right)=\d_j\left(2\f{\d_kv^i}{v^k-v^i}g_{ii}\right),\qquad i\ne j\ne k\ne i,\]
and using the equations of the system, we obtain Tsarev's integrability condition.

In other words, if a system of hydrodynamic type is \emph{semi-Hamiltonian}, the general  solution of the system
\begin{equation}\label{metrics}
\partial_j g_{ii}=2\f{\d_jv^i}{v^j-v^i}g_{ii},\qquad i\ne j
\end{equation}
depends on $n$ arbitrary functions of a single variable.
\newline
\newline
Indeed, given a solution $(g_{11} , \dots, g_{nn})$ of the system, it is immediate to check that 
\[\left(\varphi_1(r^1)g_{11} , \dots, \varphi_n(r^n)g_{nn}\right)\] 
is a solution as well for any choice of the functions $\varphi_1(r^1) , \dots, \varphi_n(r^n)$.
Summarising, for any semi-Hamiltonian system there exists a family of solutions
 of the system \eqref{metrics}.  The existence of flat  solutions, however, is not guaranteed. As a consequence, only a subset of semi-Hamiltonian systems admits a Hamiltonian structure of Dubrovin-Novikov type.
\newline
\newline
Non-flat solutions of the linear system \eqref{metrics} are possibly  related to non-local Hamiltonian structures of hydrodynamic type.  
Let us consider  
\begin{eqnarray*}
\{u^i(x),u^j(y)\}&=&g^{ij}\delta'(x-y)-g^{is}\tilde\Gamma^j_{sk}u^k_x\delta(x-y)\\
&&+\sum_{\alpha=1}^M\varepsilon_{\alpha}\left(W_{(\alpha)}\right)^i_ku^k_x
\nu(x-y)\left(W_{(\alpha)}\right)^j_hu^h_x,
\end{eqnarray*}
where $\varepsilon_{\alpha}=\pm 1$ and 
\begin{eqnarray*}
\nu(x-y)&=&-\f{1}{2},\quad\text{for}\,\,x<0,\\
\nu(x-y)&=&\f{1}{2},\quad\text{for}\,\,x>0.
\end{eqnarray*}
Conditions ensuring that the above formula defines a Poisson bracket were found by Ferapontov in \cite{F} (see also \cite{CLV}). It  turns out  that $(g^{ij})$ defines a metric and that $\tilde\Gamma^j_{sk}$ are the Christoffel symbols of the associated Levi-Civita connection, as in the local  case. Moreover, $\{W_{(\alpha)}\}$ are $(1,1)$-tensor fields  satisfying the \emph{Gauss-Peterson-Mainardi-Codazzi equations}
\begin{subequations}\label{gmc}
\begin{gather} 
\left[W_{(\alpha)},W_{(\alpha')}\right]=0,\label{GMC1}\\
g_{ik}(W_{(\alpha)})^k_j=g_{jk}(W_{(\alpha)})^k_i,\label{GMC2}\\
\tilde\nabla_k(W_{(\alpha)})^i_j=\tilde\nabla_j(W_{(\alpha)})^i_k,\label{GMC3}\\
\tilde{R}^{ij}_{kh}=\sum_{\alpha=1}^M\varepsilon_{\alpha} \left\{\left(W_{(\alpha)}\right)^i_k\left(W_{(\alpha)}\right)^j_h
-\left(W_{(\alpha)}\right)^j_k\left(W_{(\alpha)}\right)^i_h\right\},\label{GMC4}
\end{gather}
\end{subequations}
where $\tilde{R}^{ij}_{kh}=g^{is}\tilde{R}^{j}_{skh}$. In the case of diagonal systems of hydrodynamic type, the condition \eqref{GMC4} becomes
 \[\tilde{R}^{ij}_{ij}=\sum_{\alpha=1}^M\varepsilon_{\alpha} w^i_{(\alpha)}w^j_{(\alpha)},\]
 where the  functions $w^i_{(\alpha)}$ are the characteristic  velocities  of the symmetries of the  system.  Ferapontov conjectured that any solution of \eqref{metrics} admits such a  quadratic expansion in terms of the symmetries.  
 
\subsection{Hamiltonian formalism for non-diagonalisable systems}
We have seen that  the metric $g$ defining a Hamiltonian structure of hydrodynamic type
  must satisfy the conditions \eqref{DN1} and \eqref{DN2}. For systems defined by a $(1,1)$-tensor field of the form $V^i_j=c^i_{jk}X^k$ associated with an F-manifold  with compatible connection,  the second condition \eqref{DN2} can be written in other, useful, equivalent forms as stated in Theorem \ref{EquivalentDN}. In order to prove this we need a preliminary lemma (similar to Lemma 2.2 in \cite{LPVGhodograph})
\begin{lemma}\label{MToeplitz_implied}
A matrix $M$ satisfying 
\begin{equation}\label{cond3}
c^s_{ji}M_{sk}=c^s_{ki}M_{sj},\qquad i,j,k\inw,
\end{equation}	
can be  written as
\begin{equation}\label{MToeplitz_gen}
M_{ij}=c^s_{ij}\theta_s,\qquad i,j\inw.
\end{equation}
\end{lemma}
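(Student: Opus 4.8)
The plan is to recast the index identity \eqref{cond3} as a statement about a single trilinear form and then exploit the symmetry generated by commutativity of the product together with the hypothesis. Introduce the bilinear form $M(X,Y):=M_{ij}X^iY^j$ and the trilinear form $B(X,Y,Z):=M(X\circ Y,Z)=c^s_{ij}M_{sk}X^iY^jZ^k$. Contracting \eqref{cond3} against arbitrary vectors $X^j$, $Y^i$, $Z^k$ and using $(X\circ Y)^s=c^s_{ij}X^iY^j$, the hypothesis becomes simply
\[B(X,Y,Z)=M(X\circ Y,Z)=M(Z\circ Y,X)=B(Z,Y,X),\]
that is, $B$ is invariant under the transposition of its first and third slots.

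Next I would record two elementary symmetries. Since $\circ$ is commutative, $B(X,Y,Z)=M(X\circ Y,Z)=M(Y\circ X,Z)=B(Y,X,Z)$, so $B$ is symmetric in its first two arguments; combined with the $1\leftrightarrow 3$ invariance just obtained, and since the transpositions $(1\,2)$ and $(1\,3)$ generate the whole symmetric group $S_3$, the form $B$ is \emph{totally symmetric}. Separately, setting $Y=e$ in $M(X\circ Y,Z)=M(Z\circ Y,X)$ (i.e.\ letting the contracted index run over the unit direction) and using the unit axiom $e\circ X=X$ gives $M(X,Z)=M(Z,X)$, so $M$ is symmetric and the tensor $(M_{ij})$ is recovered in full by the form $M(\cdot,\cdot)$.

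Then I would produce $\theta$ by specialising one slot of $B$ to the unit. Define the covector $\theta$ by $\theta_s:=M_{sk}e^k$, i.e.\ $\theta(W):=M(W,e)$. Using $e\circ X=X$, the total symmetry of $B$, and the definition of $B$,
\[M(X,Y)=M(e\circ X,Y)=B(e,X,Y)=B(X,Y,e)=M(X\circ Y,e)=\theta(X\circ Y)=c^s_{ij}\theta_sX^iY^j.\]
Since this holds for all $X,Y$ and both $M$ (by the previous paragraph) and $c^s_{ij}$ (by commutativity) are symmetric in $i,j$, we conclude $M_{ij}=c^s_{ij}\theta_s$, which is \eqref{MToeplitz_gen}.

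The one genuinely load-bearing point — and the step I would be most careful about — is the passage from ``the symmetric parts agree'' to the full tensor identity: contracting with arbitrary $X^iY^j$ only detects $M_{(ij)}$, so the symmetry of $M$ obtained from the $Y=e$ specialisation is essential rather than cosmetic. Everything else is formal bookkeeping. As a cross-check (and a fallback should the coordinate-free argument need reinforcement), one can verify the claim directly in David--Hertling coordinates, where $c^{i(\alpha)}_{j(\beta)k(\gamma)}=\delta^\alpha_\beta\delta^\alpha_\gamma\delta^i_{j+k-1}$: choosing the free indices so that one side of \eqref{cond3} carries a structure constant mixing two distinct blocks forces $M_{i(\alpha)j(\beta)}=0$ for $\alpha\neq\beta$, while within a single block the identity $\delta^i_{j+k-1}$ forces $M_{i(\alpha)j(\alpha)}$ to depend only on $i+j$ and to vanish once $i+j-1>m_\alpha$, which is exactly the pattern of $c^s_{ij}\theta_s$.
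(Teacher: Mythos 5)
Your proof is correct and rests on the same key move as the paper's: specialising one slot to the unit vector $e$ to produce $\theta_s = M_{sj}e^j$. The paper does this more directly — it contracts \eqref{cond3} with $e^j$, which turns the left-hand side into $M_{ik}$ and the right-hand side into $c^s_{ki}\theta_s$ while keeping $i$ and $k$ free, so the full tensor identity drops out in one line and the discussion of symmetric parts that you (correctly) flag as load-bearing in your formulation becomes unnecessary.
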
  
\begin{proof}
	Firstly,  observe that if $M$ is of the form $M_{ij}=c^s_{ij}\theta_s$ then  condition
	\eqref{cond3} is satisfied. Multiplying both sides of  \eqref{cond3}  by $e^j$ and taking the sum over $j$, we get
	\[M_{ik}=c^s_{ki}\theta_s\]
	with $M_{sj}e^j$.
\end{proof}

\begin{theorem}\label{EquivalentDN}
Let  us consider a regular F-manifold with compatible connection and let $g$ be a metric satisfying condition \eqref{DN1}. Assuming  that  the linear system
\[\nabla_kV^i_j-\nabla_jV^i_k=c^i_{js}\nabla_kX^s-c^i_{ks}\nabla_jX^s=0\]
admits $n$ independent solution $X_{(1)} , \dots, X_{(n)}$, the condition \eqref{DN2} is equivalent to one of the following conditions:

\begin{equation}\label{Ch_sym}
\tilde\Gamma^l_{ij}-\Gamma^l_{ij}=-g^{lk}c^m_{ij}(\tilde{\nabla}_k\theta_m-\mathcal{L}_eg_{km})=\frac{1}{2}g^{lk}c^m_{ij}(d\theta_{mk}+\mathcal{L}_eg_{mk}),
\end{equation}
where $d\theta_{mk}=\partial_m\theta_k-\partial_k\theta_m$;

\begin{equation}\label{eq_for_g}
c^r_{hj}\nabla_i\theta_r+c^r_{hi}\nabla_j\theta_r-c^r_{ij}\nabla_h\theta_r=c^r_{ij}c^s_{hr}\nabla_e\theta_s-\theta_r\nabla_hc^r_{ij}.
\end{equation}
\end{theorem}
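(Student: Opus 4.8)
The plan is to prove the two stated equivalences by first reformulating the hypothesis \eqref{DN2} in terms of the tensor $\theta$ supplied by Lemma \ref{MToeplitz_implied}, and then to manipulate covariant derivatives of the product using the compatibility axioms of the F-manifold. First I would observe that condition \eqref{DN1}, namely $g_{ik}V^k_j = g_{jk}V^k_i$ with $V^k_j = c^k_{js}X^s$, says precisely that the matrix $M_{ij} := g_{ik}V^k_j = g_{is}c^s_{jk}X^k$ satisfies the symmetry hypothesis \eqref{cond3} of Lemma \ref{MToeplitz_implied}. That lemma then yields functions $\theta_s$ with $M_{ij} = c^s_{ij}\theta_s$, so $\theta$ is the natural object in which to phrase everything. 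This is the conceptual bridge: \eqref{DN1} alone already guarantees the existence of $\theta$, and \eqref{DN2} becomes a condition relating $\nabla\theta$ (or $\tilde\nabla\theta$) to the product and the two connections.

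**Proving the first equivalence \eqref{Ch_sym}.**
Next I would translate \eqref{DN2}, written as $\tilde\nabla_i V^k_j = \tilde\nabla_j V^k_i$, into the difference tensor $D^l_{ij} := \tilde\Gamma^l_{ij} - \Gamma^l_{ij}$. Since $V = X\circ$ satisfies $d_\nabla V = 0$ by hypothesis (the solutions $X_{(k)}$ are exactly those realising $\nabla_k V^i_j - \nabla_j V^i_k = 0$), the $\nabla$-antisymmetrisation of $V$ already vanishes, so the antisymmetrised $\tilde\nabla V$ reduces to a purely algebraic expression in $D$ and $c$. Concretely, $\tilde\nabla_i V^k_j - \tilde\nabla_j V^k_i = (D^k_{is}V^s_j - D^s_{ij}V^k_s) - (i\leftrightarrow j)$, and using $V^k_j = c^k_{js}X^s$ together with the associativity/commutativity encoded in $c$, this collapses to an equation forcing $D^l_{ij}$ to be proportional to $c^m_{ij}$ contracted against a one-form. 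The one-form is identified as $\tilde\nabla_k\theta_m - \mathcal{L}_e g_{km}$ by lowering indices with $g$ and recalling $M_{ij} = c^s_{ij}\theta_s$. The final rewriting as $\tfrac12 g^{lk}c^m_{ij}(d\theta_{mk} + \mathcal{L}_e g_{mk})$ should follow by splitting $\tilde\nabla_k\theta_m$ into its symmetric and antisymmetric parts and using that $\theta_s = M_{sj}e^j$ together with $\nabla e = 0$.

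**Proving the second equivalence \eqref{eq_for_g} and the main obstacle.**
For \eqref{eq_for_g} I would contract the difference-tensor relation \eqref{Ch_sym} against $g_{lh}$ (equivalently, lower the upper index $l$ to $h$) and symmetrise in a suitable way, so that the left-hand side produces the cyclic combination $c^r_{hj}\nabla_i\theta_r + c^r_{hi}\nabla_j\theta_r - c^r_{ij}\nabla_h\theta_r$; the passage from $\tilde\nabla$ to $\nabla$ is handled by reintroducing the difference tensor $D$, which is itself expressed through $c$ and $\theta$ by the first equivalence, thereby generating the right-hand side terms $c^r_{ij}c^s_{hr}\nabla_e\theta_s$ and $-\theta_r\nabla_h c^r_{ij}$. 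I expect the genuinely delicate step to be keeping track of the $\mathcal{L}_e g$ and $\nabla_e\theta$ contributions consistently: one must repeatedly use $\nabla e = 0$, the flatness of the unit, and the symmetry of $\nabla_l c^k_{ij}$ in its lower indices (axiom (3) of the compatible-connection definition) to convert $\mathcal{L}_e$ derivatives into $\nabla_e$ derivatives and to move the $\nabla_h$ past the structure constants, picking up the term $\theta_r\nabla_h c^r_{ij}$. The main obstacle is therefore not any single identity but the bookkeeping of these covariant-derivative rearrangements; the key technical inputs are Lemma \ref{MToeplitz_implied}, the hypothesis $d_\nabla(X\circ)=0$, and the associativity/symmetry properties of $c$ and $\nabla c$.
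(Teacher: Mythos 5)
Your overall strategy is the paper's: show that \eqref{DN1}--\eqref{DN2} force the difference tensor $\tilde\Gamma^i_{ks}-\Gamma^i_{ks}$ into the form $c^t_{ks}W^i_t$ via Lemma \ref{MToeplitz_implied}, determine $W$ by contracting with $e$, and then lower indices and trade $\tilde\nabla$ for $\nabla$ to reach \eqref{eq_for_g}. There are, however, two concrete problems in your setup.

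First, you misidentify $\theta$. You apply Lemma \ref{MToeplitz_implied} to $M_{ij}=g_{ik}V^k_j$ and take $\theta_s=M_{sj}e^j$, which equals $g_{sk}X^k$, the one-form dual to $X$. The $\theta$ appearing in \eqref{Ch_sym} and \eqref{eq_for_g} is instead the Hankel generator of the metric itself, $\theta_s=g_{sj}e^j$, obtained by applying the lemma to $g$: imposing \eqref{DN1} for all $n$ independent flows $X_{(1)},\dots,X_{(n)}$ yields $g_{ik}c^k_{js}=g_{jk}c^k_{is}$, whence $g_{ij}=c^s_{ij}\theta_s$. With your $\theta$ the key identity
\[
W^i_s=\tilde\Gamma^i_{ks}e^k=\tfrac{1}{2}\,g^{il}\big(\mathcal{L}_eg_{ls}+d\theta_{sl}\big),
\]
which produces the right-hand side of \eqref{Ch_sym}, fails, because it rests on $\partial_sg_{kl}e^k-\partial_lg_{ks}e^k=d\theta_{sl}$ with $\theta_l=g_{kl}e^k$. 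Relatedly, \eqref{DN1} for a single $V$ is only the symmetry $M_{ij}=M_{ji}$; it is not the three-index hypothesis \eqref{cond3} of the lemma, so your claim that \eqref{DN1} alone supplies the lemma's hypothesis is not correct as stated.

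Second, the step where "$D^l_{ij}$ is forced to be proportional to $c^m_{ij}$" cannot follow from associativity and commutativity of $c$ alone, as you suggest. One first obtains $(\tilde\Gamma^i_{ks}-\Gamma^i_{ks})c^s_{jt}X^t_{(l)}=(\tilde\Gamma^i_{js}-\Gamma^i_{js})c^s_{kt}X^t_{(l)}$ for each solution, and stripping off $X^t$ to get a tensorial identity on which Lemma \ref{MToeplitz_implied} can act is precisely where the hypothesis of $n$ independent solutions is consumed; your sketch glosses over this. Once these two points are repaired, the remainder of your plan (lowering with $g_{hk}$, using $\nabla e=0$ and the symmetry of $\nabla c$ to generate the terms $c^r_{ij}c^s_{hr}\nabla_e\theta_s$ and $-\theta_r\nabla_hc^r_{ij}$) is essentially the paper's derivation of \eqref{eq_for_g_bis} in canonical coordinates; you should also state the converse implications explicitly, which the paper settles by showing that \eqref{start-id} implies \eqref{Ch_sym}.
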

\begin{proof}
	Let us first prove the equivalence between condition \eqref{Ch_sym} and condition \eqref{DN2}. The fact that \eqref{Ch_sym} implies \eqref{DN2} is straightforward. Indeed,
	\begin{eqnarray*}
		\tilde\nabla_kV^i_j&=&\nabla_kV^i_j-\frac{1}{2}g^{ls}c^m_{jk}(d\theta_{ms}+\mathcal{L}_eg_{ms})V^i_l+\frac{1}{2}g^{is}c^m_{kt}(d\theta_{ms}+\mathcal{L}_eg_{ms})V^t_j\\
		&=&\nabla_jV^i_k-\frac{1}{2}g^{ls}c^m_{kj}(d\theta_{ms}+\mathcal{L}_eg_{ms})V^i_l+\frac{1}{2}g^{is}c^m_{jt}(d\theta_{ms}+\mathcal{L}_eg_{ms})V^t_k=\tilde\nabla_jV^i_k.
	\end{eqnarray*}
	Let us now prove that \eqref{DN2} implies \eqref{Ch_sym}. Let $X$ be one of the solutions $X_{(1)} , \dots, X_{(n)}$. From
	\[\tilde\nabla_kV^i_j=\tilde\nabla_jV^i_k,\qquad\nabla_kV^i_j=\nabla_jV^i_k,\]
	it follows that
	\[(\tilde\nabla_k-\nabla_k)V^i_j=(\tilde\nabla_j-\nabla_j)V^i_k.\]
	In other words, we have the identity 
	\[(\tilde\Gamma^i_{ks}-\Gamma^i_{ks})V^s_j=(\tilde\Gamma^i_{ks}-\Gamma^i_{ks})c^s_{jt}X^t=(\tilde\Gamma^i_{js}-\Gamma^i_{js})c^s_{kt}X^t=
	(\tilde\Gamma^i_{js}-\Gamma^i_{js})V^s_{k}\]
	for any solution $X_{(1)} , \dots, X_{(n)}$, and thus
	\[(\tilde\Gamma^i_{ks}-\Gamma^i_{ks})c^s_{jt}=(\tilde\Gamma^i_{js}-\Gamma^i_{js})c^s_{kt}.\]
	Using Lemma \ref{MToeplitz_implied} we have that 
	\[\tilde\Gamma^i_{ks}-\Gamma^i_{ks}=c^t_{ks}W^i_t,\]
	for some  $(1,1)$-tensor field $W$.  In order to determine $W$ we observe that in  David-Hertling  coordinates:
	\[W^i_s=(\tilde\Gamma^i_{ks}-\Gamma^i_{ks})e^k=\tilde\Gamma^i_{ks}e^k=
	\f{1}{2}g^{il}(\partial_s g_{kl}+\partial_k g_{ls}-\partial_l g_{ks})e^k=\f{1}{2}g^{il}(\mathcal{L}_e g_{ls}+d\theta_{sl}).\]
	Summarising, we have
	\[\tilde\Gamma^i_{ks}-\Gamma^i_{ks}=\f{1}{2}c^t_{ks}g^{il}(\mathcal{L}_e g_{lt}+d\theta_{tl}).\]
	Moreover, it is immediate to check that
	\[\frac{1}{2}g^{lk}c^m_{ij}(d\theta_{mk}+\mathcal{L}_eg_{mk})=-g^{lk}c^m_{ij}(\tilde{\nabla}_k\theta_m-\mathcal{L}_eg_{km}).\]
	This  proves \eqref{Ch_sym}. From \eqref{Ch_sym} it follows that 
	\[e^h\tilde\Gamma^l_{hm}=-g^{lk}(\tilde\nabla_k\theta_m-\mathcal{L}_eg_{mk}).\]
	We have
	\[\tilde\Gamma^l_{ij}-\Gamma^l_{ij}=c^m_{ij}e^h\tilde\Gamma^l_{hm},\]
	that is
	\begin{equation}\label{start-id}
		\tilde\Gamma^k_{ij}-c^s_{ij}e^t\tilde\Gamma^k_{ts}=\Gamma^k_{ij}.
	\end{equation}
	The left-hand side of the above  system can be now written in terms of the  metric as
	\begin{eqnarray*}
		& &\frac{1}{2}g^{km}(\partial_ig_{mj}+\partial_jg_{im}-\partial_mg_{ij})-\frac{1}{2}c^s_{ij}e^tg^{km}(\partial_tg_{ms}+\partial_sg_{tm}-\partial_mg_{ts})\\
		&=&\frac{1}{2}g^{km}(c^r_{mj}\partial_i\theta_r+c^r_{im}\partial_j\theta_r-c^r_{ij}\partial_m\theta_r)-\frac{1}{2}c^s_{ij}e^tg^{km}(c^r_{ms}\partial_t\theta_r+c^r_{tm}\partial_s\theta_r-c^r_{ts}\partial_m\theta_r)\\
		&=&\frac{1}{2}g^{km}(c^r_{mj}\partial_i\theta_r+c^r_{im}\partial_j\theta_r)-\frac{1}{2}c^s_{ij}e^tg^{km}(c^r_{ms}\partial_t\theta_r)
		-\frac{1}{2}c^s_{ij}g^{km}\partial_s\theta_m.
	\end{eqnarray*}
	Using the last expression for  the  left-hand side and multiplying both sides  for $g_{hk}$, we obtain
	\begin{equation}\label{eq_for_g_bis}
		c^r_{hj}\partial_i\theta_r+c^r_{ih}\partial_j\theta_r-c^s_{ij}\partial_s\theta_h=2c^r_{hs}\Gamma^s_{ij}\theta_r+c^s_{ij}c^r_{hs}e^t\partial_t\theta_r.
	\end{equation}
	In canonical coordinates the product and the unit vector field are constant and   \eqref{eq_for_g} reduces to \eqref{eq_for_g_bis}. It remains to prove that 
	\eqref{start-id}  implies \eqref{Ch_sym}. From \eqref{start-id} we have that
	\begin{eqnarray*}
		\tilde\Gamma^k_{ij}-\Gamma^k_{ij}=c^s_{ij}e^t\tilde\Gamma^k_{ts}=
		\frac{1}{2}c^s_{ij}e^tg^{km}(\partial_tg_{ms}+\partial_sg_{tm}-\partial_mg_{ts}),
	\end{eqnarray*}
	and
	\[e^t(\partial_tg_{ms}+\partial_sg_{tm}-\partial_mg_{ts})=
	\mathcal{L}_eg_{ms}+d\theta_{sm}.\]
\end{proof}

\begin{remark}
In the  semisimple case the system \eqref{eq_for_g} reduces to the system \eqref{sys_met}. Indeed, for $i\ne j\ne h\ne i$ the system is trivially satisfied. For $i=j\ne h$ it reduces to \eqref{eq_for_g}:
\begin{equation*}
-\partial_i\theta_h=2\Gamma^h_{ii}\theta_h=-2\Gamma^h_{ih}\theta_h.
\end{equation*}
The cases $i=h\ne j$ and $j=h\ne i$ can be treated in a similar way. Finally, for $i=j=k$ we have
\begin{equation*}
\partial_i\theta_i=2\Gamma^i_{ii}\theta_i+e^t\partial_t\theta_i=-2\sum_{t\ne i}\Gamma^i_{it}\theta_i+e^t\partial_t\theta_i=-\sum_{t\ne i}\partial_t \theta_i+e^t\partial_t\theta_i
.
\end{equation*}
\end{remark}

\begin{remark}
Multiplying  both sides of \eqref{eq_for_g_bis} by $e^h$ and taking the sum over $h$, we get
\begin{equation}\label{red_eq_for_g}
(\partial_i\theta_j+\partial_j\theta_i)-c^s_{ij}e^h(\partial_h\theta_s+\partial_s\theta_h)=2\Gamma^k_{ij}\theta_{k}.
\end{equation}
It is easy to check  that the system \eqref{red_eq_for_g} provides only necessary conditions. 
\end{remark}

\subsection{Hamiltonian formalism in Darboux-Tsarev case}
In this subsection we study the Dubrovin-Novikov-Tsarev system in the Darboux-Tsarev case.
\begin{proposition}\label{DT_system}
In Darboux-Tsarev case the system \eqref{eq_for_g_bis} is equivalent to the system
\begin{subequations}
	\begin{equation}\label{eq:completeness_thetazeros}
		\partial_{j(\beta)}\theta_{i(\alpha)}= 0, \qquad \text{for } i+j \geq m_{\alpha} + \begin{cases}
			2 & \text{ for } \beta \neq \alpha,\\
			3 & \text{ for } \beta = \alpha,
		\end{cases}
	\end{equation}
    \begin{equation}
        \partial_{2(\alpha)}\theta_{m_{\alpha}(\alpha)} = \dfrac{2}{m_{\alpha}}\left(\sum_{k=2}^{m_{\alpha}}\Gamma^{k(\alpha)}_{2(\alpha) k(\alpha)}\right)\theta_{m_{\alpha}(\alpha)},
        \label{eq:multisyst_top2sameblock}
    \end{equation}
    \begin{equation}
        \begin{aligned}\partial_{j(\alpha)}\theta_{i(\alpha)} = \, & \, \partial_{(j-1)(\alpha)}\theta_{(i+1)(\alpha)} + \partial_{2(\alpha)} \theta_{(i+j-2)(\alpha)} - \partial_{1(\alpha)} \theta_{(i+j-1)(\alpha)} \\ \, & \, + 2 \sum_{s=j-1}^{m_{\alpha}-i+1} \left(\Gamma^{(s-j+1)(\alpha)}_{1(\alpha)1(\alpha)} - \Gamma^{s(\alpha)}_{2(\alpha) (j-1)(\alpha)}\right)\theta_{(i+s-1)(\alpha)},\\&\qquad\qquad\qquad\qquad 3\leq j\leq m_\alpha-i+2,\quad i\leq m_\alpha-1,
    \end{aligned}
    \label{eq:multisyst_sameblock}
    \end{equation}
    \begin{equation}
        \partial_{j(\beta)} \theta_{i(\alpha)} = 2 \sum_{s=j}^{m_{\alpha}-i+1} \Gamma^{s(\alpha)}_{j(\beta)1(\alpha)} \theta_{(s+i-1)(\alpha)}, \qquad j\leq m_\alpha-i+1,\quad \alpha \neq \beta.
        \label{eq:multisyst_diffblock}
    \end{equation}
    \label{eq:multisyst}
\end{subequations}
\end{proposition}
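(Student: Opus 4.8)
The plan is to carry out the entire computation in David--Hertling canonical coordinates, where the source equation \eqref{eq_for_g_bis} becomes fully explicit. First I would substitute the canonical structure constants $c^{i(\alpha)}_{j(\beta)k(\gamma)}=\delta^\alpha_\beta\delta^\alpha_\gamma\delta^i_{j+k-1}$ and the flat unit $e^{i(\alpha)}=\delta^i_1$ into \eqref{eq_for_g_bis}. Since the product is block diagonal and acts within a block as an index shift, each factor $c^r_{hj}$, $c^r_{ih}$, $c^s_{ij}$ collapses to a single Kronecker delta, and the unit contraction $e^t\partial_t$ becomes $\sum_\gamma\partial_{1(\gamma)}$. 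This turns the covariant tensor equation into a family of scalar equations indexed by a free index $h(\rho)$ and the symmetric pair $(i(\alpha),j(\beta))$, whose content depends only on the block-incidence pattern of $(\rho,\alpha,\beta)$. The strategy is then to read off each of the four subsystems by specialising this pattern, using the unit directions $h=1(\rho)$ to obtain the cleanest forms, and to check reversibility at the end.

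For the cross-block equations \eqref{eq:multisyst_diffblock} I would set $h=1(\alpha)$ and $\beta\neq\alpha$. The terms carrying $c^s_{ij}$ drop because $i$ and $j$ sit in different blocks, the left-hand side reduces to $\partial_{j(\beta)}\theta_{i(\alpha)}$, and the right-hand side becomes $2\sum_s\Gamma^{s(\alpha)}_{i(\alpha)j(\beta)}\theta_{s(\alpha)}$. Rewriting $\Gamma^{s(\alpha)}_{i(\alpha)j(\beta)}=\Gamma^{(s-i+1)(\alpha)}_{j(\beta)1(\alpha)}$ through the torsion-free property together with \eqref{eq:lemmaold1diag}, reindexing $s\mapsto s+i-1$, and fixing the summation range by Lemma \ref{lemma:old4} (lower cut-off) and $s(\alpha)\le m_\alpha$ (upper cut-off) yields \eqref{eq:multisyst_diffblock} directly.

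The same-block sector is the heart of the argument. Placing all of $h,i,j$ in block $\alpha$ produces the master identity
\[\partial_{i(\alpha)}\theta_{(h+j-1)(\alpha)}+\partial_{j(\alpha)}\theta_{(i+h-1)(\alpha)}-\partial_{(i+j-1)(\alpha)}\theta_{h(\alpha)}=2\sum_s\Gamma^{s(\alpha)}_{i(\alpha)j(\alpha)}\theta_{(h+s-1)(\alpha)}+\sum_\gamma\partial_{1(\gamma)}\theta_{(h+i+j-2)(\alpha)},\]
in which every index shift preserves the total weight $h+i+j$. Comparing this identity across the admissible values of the free index $h$ produces the recursion \eqref{eq:multisyst_sameblock}; the spurious cross-block contributions generated by the unit sum $\sum_\gamma$ are removed using the already-established relations \eqref{eq:multisyst_diffblock} and the vanishing Lemma \ref{lemma:newold}. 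The vanishing relations \eqref{eq:completeness_thetazeros} emerge in parallel: whenever a putative $\theta$-component carries an index exceeding $m_\alpha$, or the only surviving Christoffel coefficients fall into the ranges killed by Remark \ref{rmk:old} and Lemmas \ref{lemma:old4}, \ref{lemma:newold}, \ref{lemma:diffsameblockzero}, \ref{lemma:diffdiffblocks}, the corresponding derivative must vanish.

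The main obstacle I anticipate is the terminal top-component equation \eqref{eq:multisyst_top2sameblock}, and in particular the normalisation $\tfrac{2}{m_\alpha}$. This should be obtained by summing the recursion \eqref{eq:multisyst_sameblock} along the chain $i+j=m_\alpha+1$ down to the top component: all higher derivatives of $\theta_{m_\alpha(\alpha)}$ vanish by \eqref{eq:completeness_thetazeros}, so the telescoped sum collapses to a single relation in which $\theta_{m_\alpha(\alpha)}$ is produced $m_\alpha$ times, accounting for the factor $\tfrac1{m_\alpha}$, while the constancy Lemma \ref{lemma:1multidiff} guarantees the coefficient $\sum_{k=2}^{m_\alpha}\Gamma^{k(\alpha)}_{2(\alpha)k(\alpha)}$ is well behaved. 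Keeping the summation bounds correct through all these reductions --- and showing the steps are reversible, so that the full set \eqref{eq:multisyst} reconstructs every instance of \eqref{eq_for_g_bis} via the structure constants --- will require sustained bookkeeping with the complete battery of vanishing and differentiation lemmas (Remark \ref{rmk:old}, Lemmas \ref{lemma:old4}, \ref{lemma:old5}, \ref{lemma:newold}, \ref{lemma:diffsameblockzero}, \ref{lemma:diffdiffblocks}, \ref{lemma:newnew}, \ref{lemma:diffswapgen}), which is where the real labour lies.
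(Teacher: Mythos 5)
Your forward direction is essentially the paper's: work in David--Hertling coordinates, specialise the scalar form of \eqref{eq_for_g_bis} to the unit slot and to $j=2$, and telescope along a fixed-weight chain to produce the factor $2/m_\alpha$ in \eqref{eq:multisyst_top2sameblock}. Two details there need repair. First, the correct chain has derivative index plus $\theta$-index equal to $m_\alpha+2$, not $m_\alpha+1$, and summing only instances of \eqref{eq:multisyst_sameblock} leaves the stray term $\partial_{m_\alpha(\alpha)}\theta_{2(\alpha)}$, which is \emph{not} killed by \eqref{eq:completeness_thetazeros}; you must close the telescope with the additional instance $h=1$, $i=2$, $j=m_\alpha$ of \eqref{eq_for_g_bis}, which supplies $\partial_{2(\alpha)}\theta_{m_\alpha(\alpha)}+\partial_{m_\alpha(\alpha)}\theta_{2(\alpha)}=2\Gamma^{m_\alpha(\alpha)}_{2(\alpha)m_\alpha(\alpha)}\theta_{m_\alpha(\alpha)}$ (this is exactly how the induction in Lemma \ref{LemmaxEqsys_partial2malpha} terminates). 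Second, the same-block half of \eqref{eq:completeness_thetazeros} does not simply ``emerge in parallel'': the paper derives it from the dictionary \eqref{bfromGamma_multiblock} between the Christoffel symbols of $\nabla$ and the $\theta$-derivatives (Lemmas \ref{Prop_Jr_vanishingChr_theta_1} and \ref{Prop_Jr_vanishingChr_theta_2}), which in turn requires the block-Hankel form \eqref{blockHankel} of $g$ and the triangular structure of its inverse, machinery your plan never sets up. A direct route from the master identity is possible (a weight-by-weight induction starting from $\partial_{3(\alpha)}\theta_{m_\alpha(\alpha)}=0$, obtained from the instance $h=m_\alpha$, $i=j=2$ together with $\Gamma^{1(\alpha)}_{2(\alpha)2(\alpha)}=0$), but it has to be argued, not asserted.

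The genuine gap is the converse implication, which is half of the stated equivalence and a substantial part of the paper's proof. Your specialisations only exploit the instances of \eqref{eq_for_g_bis} in which one lower index equals $1$ or $2$, so recovering the general instance \eqref{eqforg_multiblock_alpha} from the subsystem \eqref{eq:multisyst} is not mere ``reversibility'' or bookkeeping. The paper does it by a downward induction on the free index $h$: one subtracts the inductive instance at $h+1$ (with $j-1$ in place of $j$) from the target at $h$, rewrites the surviving derivatives through \eqref{eq:multisyst_sameblock}, and is left with a residual combination of Christoffel symbols that vanishes only because of the symmetry of $\nabla c$, i.e.\ \eqref{SymmNablac}/\eqref{eq:nablacsymmulti}. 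That symmetry is an extra structural input coming from the compatible connection which your proposal never invokes; without it the converse direction does not close.
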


\begin{proof}
In order to prove that the Dubrovin-Novikov-Tsarev  system (\ref{DN1},\ref{DN2}) implies system \eqref{eq:multisyst}, we  will exploit the equivalence between (\ref{DN1},\ref{DN2}), \eqref{Ch_sym} and \eqref{eq_for_g_bis} (Theorem \ref{EquivalentDN}). Using double-index notation, we can write \eqref{Ch_sym} as
\begin{equation}\label{Ch_sym_bis}
\Gamma^{i(\alpha)}_{j(\beta)k(\gamma)}=\tilde{\Gamma}^{i(\alpha)}_{j(\beta)k(\gamma)}-\frac{1}{2}g^{i(\alpha)s(\sigma)}c^{m(\mu)}_{j(\beta)k(\gamma)}\big((d\theta)_{m(\mu)s(\sigma)}+e^{t(\tau)}\partial_{t(\tau)}g_{s(\sigma)m(\mu)}\big).
\end{equation}
From Lemma \ref{MToeplitz_implied}  it follows that, in the regular case and in canonical coordinates,
	\begin{equation}\label{blockHankel}
		g_{i(\alpha)j(\beta)}=\delta_{\alpha\beta}\,\theta_{(i+j-1)(\alpha)},\qquad i\in\{1,\dots,m_\alpha\},\,j\in\{1,\dots,m_\beta\},\,\alpha,\beta\in\{1,\dots,r\}.
	\end{equation}
	 Moreover, the inverse of a matrix with entries \eqref{blockHankel} has entries
	\begin{equation}\label{blockHankelinverse}
		g^{i(\alpha)j(\beta)}=\delta_{\alpha\beta}\,\overline{g}^{(i+j-m_\alpha)(\alpha)},\qquad i\in\{1,\dots,m_\alpha\},\,j\in\{1,\dots,m_\beta\},\,\alpha,\beta\in\{1,\dots,r\},
	\end{equation}
	where the functions $\overline{g}^{i(\alpha)}$ are the unique solutions of the linear system
\[V^{i(\alpha)}_{j(\beta)}\overline{g}^{j(\beta)}=e^{i(\alpha)},\]
with	 $V$ being the invertible matrix defined by
\[V^{i(\alpha)}_{j(\beta)}=\delta_{\alpha\beta}\delta^{i+k-m_{\alpha}}_j\theta_{k(\alpha)}.\]
Using \eqref{blockHankel} and  \eqref{blockHankelinverse}, the Christoffel symbols of the Levi-Civita connection of $g$ read
\begin{equation*}
	\tilde{\Gamma}^{i(\alpha)}_{j(\beta)k(\gamma)}=\frac{1}{2}\overline{g}^{(i+s-m_\alpha)(\alpha)}\big(\delta_{\alpha\gamma}\partial_{j(\beta)}\theta_{(s+k-1)(\alpha)}+\delta_{\alpha\beta}\partial_{k(\gamma)}\theta_{(j+s-1)(\alpha)}-\delta_{\beta\gamma}\partial_{s(\alpha)}\theta_{(j+k-1)(\beta)}\big).
\end{equation*}
Then, using \eqref{Ch_sym_bis}, the Christoffel symbols of $\nabla$ are 
\begin{align}\notag
	\Gamma^{i(\alpha)}_{j(\beta)k(\gamma)}&=\frac{1}{2}\overline{g}^{(i+s-m_\alpha)(\alpha)}\big(\delta_{\alpha\gamma}\partial_{j(\beta)}\theta_{(s+k-1)(\alpha)}+\delta_{\alpha\beta}\partial_{k(\gamma)}\theta_{(j+s-1)(\alpha)}-\delta_{\beta\gamma}\partial_{s(\alpha)}\theta_{(j+k-1)(\beta)}\notag\\
	&\quad- \delta_{\beta \gamma} \partial_{(j+k-1)(\gamma) }\theta_{s(\alpha)} + \delta_{\gamma \beta}\partial_{s(\alpha)} \theta_{(j+k-1)(\gamma)} - \sum_{\tau = 1}^r \delta_{\alpha \beta} \delta_{\alpha \gamma}\partial_{1(\tau)}\theta_{(s+j+k-2)(\alpha)}\big).
	\label{bfromGamma_multiblock}
\end{align}

\hspace{-1em}System \eqref{eq:multisyst} follows from the following Lemmas.
\begin{lemma}\label{Prop_Jr_vanishingChr_theta_1}
	The following conditions are equivalent:
	\begin{itemize}
		\item[a.] $\partial_{i(\alpha)}\theta_{j(\alpha)}=0$ for $i,j\in\{1,\dots,m_\alpha\}$ such that $i+j\geq m_\alpha+3$;
		\item[b.] $\Gamma^{i(\alpha)}_{j(\alpha)k(\alpha)}=0$ for $i,j,k\in\{1,\dots,m_\alpha\}$ such that $i-j-k\leq-3$, $j, k \geq2$.
	\end{itemize}		
\end{lemma}
\begin{proof}
	Let us first assume that $\partial_{i(\alpha)}\theta_{j(\alpha)}=0$ when $i+j\geq m_\alpha+3$. Then, by considering $\alpha=\beta=\gamma$ in \eqref{bfromGamma_multiblock}, we have
	\begin{align*}
		\Gamma^{i(\alpha)}_{j(\alpha)k(\alpha)}
		&=\frac{1}{2}\overset{m_\alpha}{\underset{s=m_\alpha-i+1}{\sum}}\overline{g}^{(i+s-m_\alpha)(\alpha)}\big(\partial_{j(\alpha)}\theta_{(s+k-1)(\alpha)}+\partial_{k(\alpha)}\theta_{(j+s-1)(\alpha)}\notag\\
		&-\partial_{(j+k-1)(\alpha)}\theta_{s(\alpha)}-\sum_{\tau=1}^{r}\partial_{1(\tau)}\theta_{(s+j+k-2)(\alpha)}\big).
	\end{align*}
	For $i-j-k\leq-3$, $j, k \geq2$, we get
	\begin{align*}
		\Gamma^{i(\alpha)}_{j(\alpha)k(\alpha)}&=\frac{1}{2}\overset{m_\alpha}{\underset{s=m_\alpha-i+1}{\sum}}\overline{g}^{(i+s-m_\alpha)(\alpha)}\big(\partial_{j(\alpha)}\theta_{(s+k-1)(\alpha)}+\partial_{k(\alpha)}\theta_{(j+s-1)(\alpha)}-\partial_{(j+k-1)(\alpha)}\theta_{s(\alpha)}\big),
	\end{align*}
	where each of the three instances of partial derivatives vanishes, being of the form $\partial_{a(\alpha)}\theta_{b(\alpha)}$ with $a+b=s+j+k-1\geq m_\alpha-i+j+k\geq m_\alpha+3$.
	\newline
    \newline
	Let us now assume that $\Gamma^{i(\alpha)}_{j(\alpha)k(\alpha)}=0$ when $i-j-k\leq-3$, $j, k \geq2$. Then, for such indices, we have
	\begin{align}\label{Prop_Jr_vanishingChr_theta_1_temp1}
		\overset{m_\alpha}{\underset{s=m_\alpha-i+1}{\sum}}\overline{g}^{(i+s-m_\alpha)(\alpha)}\big(\partial_{j(\alpha)}\theta_{(s+k-1)(\alpha)}+\partial_{k(\alpha)}\theta_{(j+s-1)(\alpha)}-\partial_{(j+k-1)(\alpha)}\theta_{s(\alpha)}\big)=0.
	\end{align}
	By choosing $i=1$ in \eqref{Prop_Jr_vanishingChr_theta_1_temp1}, we get $\partial_{(j+k-1)(\alpha)}\theta_{m_\alpha(\alpha)}=0$ (since $\bar{g}^{1(\alpha)} \neq 0$, by nondegeneracy of the metric), proving
	\begin{align*}
		\partial_{l(\alpha)}\theta_{m_\alpha(\alpha)}=0,\qquad l\geq3.
	\end{align*}
	Let us fix $N\in\{m_\alpha-j-k+4,\dots,m_\alpha-1\}$, and inductively assume that
	\begin{align*}
		\partial_{l(\alpha)}\theta_{t(\alpha)}=0,\qquad l+t\geq m_\alpha+3,
	\end{align*}
	for each $t>N$. By choosing $i=m_\alpha-N+1$ in \eqref{Prop_Jr_vanishingChr_theta_1_temp1}, we get
	\begin{align*}
		0&=\overset{m_\alpha}{\underset{s=N}{\sum}}\overline{g}^{(s-N+1)(\alpha)}\big(\partial_{j(\alpha)}\theta_{(s+k-1)(\alpha)}+\partial_{k(\alpha)}\theta_{(j+s-1)(\alpha)}-\partial_{(j+k-1)(\alpha)}\theta_{s(\alpha)}\big)\\
		&=-\overline{g}^{1(\alpha)}\partial_{(j+k-1)(\alpha)}\theta_{N(\alpha)},
	\end{align*}
	yielding $\partial_{(j+k-1)(\alpha)}\theta_{N(\alpha)}=0$ and thus proving
	\begin{align*}
		\partial_{l(\alpha)}\theta_{N(\alpha)}=0,\qquad l\geq m_\alpha-N+3.
	\end{align*}
\end{proof}
\begin{lemma}\label{Prop_Jr_vanishingChr_theta_2}
	Let $\alpha\neq\beta$. The following conditions are equivalent:
	\begin{itemize}
		\item[a.] $\partial_{j(\beta)}\theta_{i(\alpha)}=0$ for $i\in\{1,\dots,m_\alpha\}$, $j\in\{1,\dots,m_\beta\}$ such that $i+j\geq m_\alpha+2$;
		\item[b.] $\Gamma^{i(\alpha)}_{j(\beta)1(\alpha)}=0$ for $i\in\{1,\dots,m_\alpha\}$, $j\in\{1,\dots,m_\beta\}$ such that $i<j$.
	\end{itemize}
\end{lemma}
\begin{proof}
	Let us first assume that $\Gamma^{i(\alpha)}_{j(\beta)1(\alpha)}=0$ when $i<j$. Then, by considering ${\alpha=\gamma\neq\beta}$ and $k=1$ in \eqref{bfromGamma_multiblock}, we have
	\begin{align}\label{Prop_Jr_vanishingChr_theta_2_temp2}
		\Gamma^{i(\alpha)}_{j(\beta)1(\alpha)}&=\frac{1}{2}\overset{m_\alpha}{\underset{s=m_\alpha-i+1}{\sum}}\overline{g}^{(i+s-m_\alpha)(\alpha)}\partial_{j(\beta)}\theta_{s(\alpha)}.
	\end{align}
	By choosing $i=1$ in \eqref{Prop_Jr_vanishingChr_theta_2_temp2}, we get $\partial_{j(\beta)}\theta_{m_\alpha(\alpha)}=0$ for each $j\geq2$. Let us fix $N< m_\alpha$, and inductively assume that
	\begin{align*}
		\partial_{j(\beta)}\theta_{t(\alpha)}=0,\qquad j\geq m_\alpha-t+2,
	\end{align*}
	for each $t\geq N+1$. By choosing $i=m_\alpha-N+1$ in \eqref{Prop_Jr_vanishingChr_theta_2_temp2}, for each $j\geq m_\alpha-N+2$ we get
	\begin{align*}
		0=\Gamma^{(m_\alpha-N+1)(\alpha)}_{j(\beta)1(\alpha)}&=\frac{1}{2}\overset{m_\alpha}{\underset{s=N}{\sum}}\overline{g}^{(s-N+1)(\alpha)}\partial_{j(\beta)}\theta_{s(\alpha)}=\frac{1}{2}\overline{g}^{1(\alpha)}\partial_{j(\beta)}\theta_{N(\alpha)},
	\end{align*}
	yielding $\partial_{j(\beta)}\theta_{N(\alpha)}=0$ and thus proving point a.
	\newline
    \newline
	Let us now assume that $\partial_{j(\beta)}\theta_{i(\alpha)}=0$ when $i+j\geq m_\alpha+2$. By means of \eqref{Prop_Jr_vanishingChr_theta_2_temp2}, we immediately deduce that that b holds.
\end{proof}
\begin{corollary}\label{lemma:completeness_thetazeros}
		If the system is of Darboux-Tsarev type then \eqref{eq:completeness_thetazeros} holds for each $\alpha,\beta\in\{1,\dots,r\}$.
	\end{corollary}
\begin{proof}
	Corollary  \ref{lemma:completeness_thetazeros} follows from Lemmas \ref{lemma:old4}, \ref{Prop_Jr_vanishingChr_theta_1}, and  \ref{Prop_Jr_vanishingChr_theta_2}.
\end{proof}
\begin{lemma}\label{LemmaxEqsys_partial2malpha}
	\eqref{eq_for_g_bis} implies \eqref{eq:multisyst_top2sameblock}.
\end{lemma}
\begin{proof}
	By considering \eqref{eq_for_g_bis} with all indices referring to the same block ($\alpha$) and $h=m_\alpha-1$, $i=j=2$, we get
	\begin{align*}
		2\partial_{2(\alpha)}\theta_{m_\alpha(\alpha)}=\partial_{3(\alpha)}\theta_{(m_\alpha-1)(\alpha)}+2\theta_{m_\alpha(\alpha)}\Gamma^{2(\alpha)}_{2(\alpha)2(\alpha)},
	\end{align*}
    by using Lemma \ref{lemma:old4}
	Let us fix $N\in\{3,\dots,m_\alpha\}$, and inductively assume that
	\begin{align}\label{LemmaxEqsys_partial2malpha_indstep}
		s\partial_{2(\alpha)}\theta_{m_\alpha(\alpha)}=\partial_{(s+1)(\alpha)}\theta_{(m_\alpha-s+1)(\alpha)}+2\theta_{m_\alpha(\alpha)}\sum_{k=2}^{s}\Gamma^{k(\alpha)}_{2(\alpha)k(\alpha)}
	\end{align}
	for each $s=N-1$. By considering \eqref{eq_for_g_bis} with all indices referring to the same block ($\alpha$) and $h=m_\alpha-N+1$, $i=N$, $j=2$, we get
	\begin{align}\label{LemmaxEqsys_partial2malpha_auxiliary}
		\partial_{2(\alpha)}\theta_{m_\alpha(\alpha)}=\partial_{(N+1)(\alpha)}\theta_{(m_\alpha-N+1)(\alpha)}-\partial_{N(\alpha)}\theta_{(m_\alpha-N+2)(\alpha)}+2\theta_{m_\alpha(\alpha)}\Gamma^{N(\alpha)}_{2(\alpha)N(\alpha)}.
	\end{align}
	Using \eqref{LemmaxEqsys_partial2malpha_indstep} with $s=N-1$, and Lemma \ref{lemma:old4}, gives
	\begin{equation*}
N\partial_{2(\alpha)}\theta_{m_\alpha(\alpha)} =\partial_{(N+1)(\alpha)}\theta_{(m_\alpha-N+1)(\alpha)}+2\theta_{m_\alpha(\alpha)}\sum_{k=2}^{N}\Gamma^{k(\alpha)}_{2(\alpha)k(\alpha)},
	\end{equation*}
	proving \eqref{LemmaxEqsys_partial2malpha_indstep} for any $s\in\{2,\dots,m_\alpha\}$. In particular, for $s=m_\alpha$ we retrieve \eqref{eq:multisyst_top2sameblock}.	
\end{proof}
\begin{lemma}\label{LemmaxEqsys_partialdiffblock}
	\eqref{eq_for_g_bis} implies \eqref{eq:multisyst_diffblock}.
\end{lemma}
\begin{proof}
	Considering \eqref{eq_for_g_bis} with $(h,i,j) \mapsto (h(\beta), i(\alpha), 1(\alpha))$, and $\alpha\neq\beta$ gives
	\begin{align*}
	\partial_{i(\alpha)}\theta_{h(\beta)}&=-2\sum_{s=i}^{m_\beta-h+1}\Gamma^{s(\beta)}_{i(\alpha)1(\alpha)}\theta_{(h+s-1)(\beta)}=2\sum_{s=i}^{m_\beta-h+1}\Gamma^{s(\beta)}_{i(\alpha)1(\beta)}\theta_{(h+s-1)(\beta)},
	\end{align*}
    where we have used \eqref{Chr_nablae_1} 
 and Lemma \ref{lemma:old4}.
\end{proof}
\begin{lemma}\label{LemmaxEqsys_partialsameblockgeneral}
	\eqref{eq_for_g_bis} implies \eqref{eq:multisyst_sameblock}.
\end{lemma}
\begin{proof}
	By considering \eqref{eq_for_g_bis} with all indices referring to the same block ($\alpha$) with $j=2$, and relabelling $i=k-1$ for $k\geq3$, we get
	\begin{align*}
		\partial_{k(\alpha)}\theta_{h(\alpha)}&=\partial_{(k-1)(\alpha)}\theta_{(h+1)(\alpha)}+\partial_{2(\alpha)}\theta_{(h+k-2)(\alpha)}-\partial_{1(\alpha)}\theta_{(h+k-1)(\alpha)}\\
		&-2\sum_{s=k-1}^{m_\alpha-h+1}\Gamma^{s(\alpha)}_{2(\alpha)(k-1)(\alpha)}\theta_{(h+s-1)(\alpha)}-\sum_{\beta\neq\alpha}\partial_{1(\beta)}\theta_{(h+k-1)(\alpha)},
	\end{align*}
	where, by \eqref{eq:multisyst_diffblock}, 
	\begin{align*}
		-\sum_{\beta\neq\alpha}\partial_{1(\beta)}\theta_{(h+k-1)(\alpha)}&=-2\sum_{\beta\neq\alpha}\sum_{s=1}^{m_{\alpha}-h-k+2} \Gamma^{s(\alpha)}_{1(\beta)1(\alpha)} \theta_{(s+h+k-2)(\alpha)}\\&\underset{\eqref{Chr_nablae_2}}{=}2\sum_{s=1}^{m_{\alpha}-h-k+2} \Gamma^{s(\alpha)}_{1(\alpha)1(\alpha)} \theta_{(s+h+k-2)(\alpha)}=2\sum_{t=k}^{m_{\alpha}-h+1} \Gamma^{(t-k+1)(\alpha)}_{1(\alpha)1(\alpha)} \theta_{(t+h-1)(\alpha)}.
	\end{align*}
	Thus,
	\begin{align*}
		\partial_{k(\alpha)}\theta_{h(\alpha)}&=\partial_{(k-1)(\alpha)}\theta_{(h+1)(\alpha)}+\partial_{2(\alpha)}\theta_{(h+k-2)(\alpha)}-\partial_{1(\alpha)}\theta_{(h+k-1)(\alpha)}\\
		&-2\sum_{s=k-1}^{m_\alpha-h+1}\bigg(\Gamma^{s(\alpha)}_{2(\alpha)(k-1)(\alpha)}- \Gamma^{(s-k+1)(\alpha)}_{1(\alpha)1(\alpha)}\bigg)\theta_{(h+s-1)(\alpha)}.
	\end{align*}
\end{proof}

\hspace{-1em}Hence, we have retrieved system \eqref{eq:multisyst} from \eqref{eq_for_g_bis}. We are now going to show that  system \eqref{eq:multisyst} contains the whole information of \eqref{eq_for_g_bis}, that is it implies \eqref{eq_for_g_bis}. In  double-index notation, \eqref{eq_for_g_bis} reads
	\begin{align}\label{eqforg_multiblock}
		&\delta_{\beta\gamma}\partial_{i(\alpha)}\theta_{(h+j-1)(\beta)}+\delta_{\alpha\gamma}\partial_{j(\beta)}\theta_{(h+i-1)(\alpha)}-\delta_{\alpha\beta}\partial_{(i+j-1)(\alpha)}\theta_{h(\gamma)}\\&-\delta_{\alpha\beta}\delta_{\alpha\gamma}\sum_{\tau=1}^{r}\partial_{1(\tau)}\theta_{(h+i+j-2)(\alpha)}-2\Gamma^{s(\gamma)}_{i(\alpha)j(\beta)}\theta_{(h+s-1)(\gamma)}=0\notag.
	\end{align}
	If $\alpha\neq\beta\neq\gamma\neq\alpha$ then \eqref{eqforg_multiblock} holds trivially. Moreover, in each of the cases $\alpha=\beta\neq\gamma$, $\alpha=\gamma\neq\beta$, $\alpha\neq\beta=\gamma$, \eqref{eqforg_multiblock} easily follows from \eqref{eq:completeness_thetazeros} and \eqref{eq:multisyst_diffblock} by using Lemma \ref{lemma:old1} and the flatness of the unit vector field. We are then left with proving the case where $\alpha=\beta=\gamma$, where \eqref{eqforg_multiblock} reads
	\begin{align}\label{eqforg_multiblock_alpha}
		&\partial_{i(\alpha)}\theta_{(h+j-1)(\alpha)}+\partial_{j(\alpha)}\theta_{(h+i-1)(\alpha)}-\partial_{(i+j-1)(\alpha)}\theta_{h(\alpha)}\\&-\sum_{\tau=1}^{r}\partial_{1(\tau)}\theta_{(h+i+j-2)(\alpha)}-2\Gamma^{s(\alpha)}_{i(\alpha)j(\alpha)}\theta_{(h+s-1)(\alpha)}=0\notag.
	\end{align}
	Since \eqref{eqforg_multiblock_alpha} holds trivially for $i=1$ or $j=1$ by means of \eqref{eq:multisyst_diffblock}, Lemma \ref{lemma:old1} and flatness of the unit, let us consider $i,j\geq2$. Moreover, by \eqref{eq:completeness_thetazeros}, \eqref{eqforg_multiblock_alpha} trivially holds for $h=m_\alpha$ as well. Let us fix $H\in\{1,\dots,m_\alpha-1\}$ and inductively assume that \eqref{eqforg_multiblock_alpha} holds for each ${h\in\{H+1,\dots,m_\alpha\}}$. By \eqref{eq:completeness_thetazeros}, we consider the left-hand side of \eqref{eqforg_multiblock_alpha} for $h=H$ with $i+j\leq m_\alpha-H+3$:
	\begin{align*}
		&\partial_{i(\alpha)}\theta_{(H+j-1)(\alpha)}+\partial_{j(\alpha)}\theta_{(H+i-1)(\alpha)}-\partial_{(i+j-1)(\alpha)}\theta_{H(\alpha)}\\&-\sum_{\tau=1}^{r}\partial_{1(\tau)}\theta_{(H+i+j-2)(\alpha)}-2\Gamma^{s(\alpha)}_{i(\alpha)j(\alpha)}\theta_{(H+s-1)(\alpha)},
	\end{align*}
	which reduces to
	\begin{align}\label{eqforg_multiblock_alpha_mustvanishLHS}
		&\partial_{j(\alpha)}\theta_{(H+i-1)(\alpha)}-\partial_{(i+j-1)(\alpha)}\theta_{H(\alpha)}-2\Gamma^{s(\alpha)}_{i(\alpha)j(\alpha)}\theta_{(H+s-1)(\alpha)}\\&-\partial_{(j-1)(\alpha)}\theta_{(H+i)(\alpha)}+\partial_{(i+j-2)(\alpha)}\theta_{(H+1)(\alpha)}+2\Gamma^{s(\alpha)}_{i(\alpha)(j-1)(\alpha)}\theta_{(H+s)(\alpha)}\notag,
	\end{align}
	as
	\begin{align*}
		\partial_{i(\alpha)}\theta_{(H+j-1)(\alpha)}-\sum_{\tau=1}^{r}\partial_{1(\tau)}\theta_{(H+i+j-2)(\alpha)}=&-\partial_{(j-1)(\alpha)}\theta_{(H+i)(\alpha)}+\partial_{(i+j-2)(\alpha)}\theta_{(H+1)(\alpha)}\\&+2\Gamma^{s(\alpha)}_{i(\alpha)(j-1)(\alpha)}\theta_{(H+s)(\alpha)}
	\end{align*}
	by the inductive assumption for $h=H+1$ (with $j-1$ playing the role of $j$). By employing \eqref{eq:multisyst_sameblock} on the first (with $H+i-1$ playing the role of $i$) and on the second (with $H$, $i+j-1$ playing the roles of $i$, $j$ respectively) elements in each row of \eqref{eqforg_multiblock_alpha_mustvanishLHS}, we get
	\begin{align*}
		&2\sum_{s=i+j-2}^{m_\alpha-h+1}\big(-\Gamma^{(s-i+1)(\alpha)}_{2(\alpha)(j-1)(\alpha)}+\Gamma^{s(\alpha)}_{2(\alpha)(i+j-2)(\alpha)}-\Gamma^{s(\alpha)}_{i(\alpha)j(\alpha)}+\Gamma^{(s-1)(\alpha)}_{i(\alpha)(j-1)(\alpha)}\big)\theta_{(h+s-1)(\alpha)}
	\end{align*}
	which vanishes by \eqref{SymmNablac}.
\end{proof}

\subsection{Compatibility}
This subsection is devoted to proving the following proposition. 
\begin{proposition}
    The system \eqref{eq:multisyst} is compatible in the sense that  
    \begin{equation}
		\big(\partial_{j(\beta)}\partial_{k(\gamma)}-\partial_{k(\gamma)}\partial_{j(\beta)}\big)\theta_{i(\alpha)}=0,
        \label{eq:multicomprel}
	\end{equation}
    for $\theta_i(\alpha)$ satisfying the system \eqref{eq:multisyst}, with $i \in \{1, \cdots, m_{\alpha}\}$, and all $j, k$ as present in the system \eqref{eq:multisyst}. 
    \label{prop:comp}
\end{proposition}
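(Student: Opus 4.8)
The plan is to verify \eqref{eq:multicomprel} directly in David--Hertling coordinates, organising the computation as a case analysis on the blocks $\beta,\gamma$ to which the two differentiation indices belong relative to $\alpha$, and exploiting the explicit right-hand sides of \eqref{eq:multisyst} together with the vanishing and symmetry properties of the Christoffel symbols collected in Lemmas \ref{lemma:old1}, \ref{lemma:old4}, \ref{lemma:old5}, \ref{lemma:newold} and Remark \ref{rmk:old}, and the differentiation formulas of Lemmas \ref{lemma:1multidiff}, \ref{lemma:diffsameblockzero}, \ref{lemma:diffdiffblocks}, \ref{lemma:diffswapgen}, \ref{lemma:newnew}. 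Conceptually, since \eqref{eq:multisyst} is equivalent to \eqref{eq_for_g_bis} by Proposition \ref{DT_system}, the residual obstruction to commutativity should be the canonical-coordinate incarnation of the curvature expression appearing in Proposition \ref{compsysdcl}, which vanishes on an F-manifold with compatible connection by \eqref{shc-intro}; the task is to make this cancellation explicit case by case.

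First I would record that the free data of the system are precisely the derivatives $\partial_{1(\alpha)}\theta_{i(\alpha)}$ (all $i$) and $\partial_{2(\alpha)}\theta_{i(\alpha)}$ (for $i<m_\alpha$), since the recursion \eqref{eq:multisyst_sameblock} expresses each same-block derivative $\partial_{j(\alpha)}\theta_{i(\alpha)}$ with $j\geq 3$ in terms of these, of lower-$j$ derivatives, and of the $\theta_{s(\alpha)}$, while the cross-block derivatives are given in closed form by \eqref{eq:multisyst_diffblock}, and $\partial_{2(\alpha)}\theta_{m_\alpha(\alpha)}$ by \eqref{eq:multisyst_top2sameblock}. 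It therefore suffices to check \eqref{eq:multicomprel} for the derivatives that the system actually prescribes, and I would split the verification into the cases: (i) $\beta\neq\alpha$, $\gamma\neq\alpha$ with $\beta\neq\gamma$; (ii) $\gamma=\alpha\neq\beta$ (and its symmetric counterpart); (iii) $\beta=\gamma\neq\alpha$; and (iv) $\beta=\gamma=\alpha$.

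In each case I would substitute the relevant equation of \eqref{eq:multisyst} for the inner derivative, apply the outer derivative via the Leibniz rule, and re-use the system to rewrite every first derivative of $\theta$ that appears, reordering any mixed derivatives in the free directions $1(\alpha),2(\alpha)$ so that the computation closes. The vanishing conditions \eqref{eq:completeness_thetazeros}, together with Lemmas \ref{lemma:old4}, \ref{lemma:newold} and Remark \ref{rmk:old}, truncate the sums, while Lemmas \ref{lemma:1multidiff}, \ref{lemma:diffsameblockzero}, \ref{lemma:diffdiffblocks}, \ref{lemma:diffswapgen}, \ref{lemma:newnew} dispose of the derivatives of the Christoffel symbols that are generated. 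After these reductions the antisymmetrised expression should collapse to a contraction of curvature-type combinations of Christoffel symbols with the $\theta_{s(\alpha)}$, which vanishes by \eqref{eq:3RCmultiinR} together with the symmetry \eqref{eq:nablacsymmulti} (equivalently \eqref{SymmNablac}) of $\nabla c$, restricted to the relevant ranges of indices.

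The hard part will be the purely same-block case (iv), where the recursive definition \eqref{eq:multisyst_sameblock} forces an induction on $i$ (equivalently on $m_\alpha-i$), in the spirit of the inductive arguments of Lemmas \ref{Prop_Jr_vanishingChr_theta_1} and \ref{LemmaxEqsys_partial2malpha}. One must verify that applying the outer derivative does not introduce derivatives in directions absent from the system, carefully track which derivatives of the Christoffel symbols survive, and show that the leftover terms cancel by \eqref{SymmNablac}. This index bookkeeping, needed precisely to bring the system into the form required by Darboux's theorem, is the delicate core of the argument.
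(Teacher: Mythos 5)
Your proposal follows essentially the same route as the paper's proof: a case split on which blocks the two differentiation indices belong to relative to $\alpha$, a downward induction on $i$ starting from $i=m_\alpha$, truncation of the sums via the vanishing lemmas, and a final cancellation of the residual $\partial\Gamma$- and $\Gamma\cdot\Gamma$-terms through the curvature condition \eqref{eq:3RCmultiinR} and the symmetry of $\nabla c$ \eqref{eq:nablacsymmulti}. Two points to tighten: the inductive hypothesis (commutativity for all $\theta_{I(\alpha)}$ with $I>i$, in \emph{all} directions) must be set up globally rather than only for the same-block case, since reordering mixed second derivatives of the higher-index $\theta$'s is already needed when one of the directions lies in a different block; and the explicit index bookkeeping showing that the residual terms (the paper's $F$, $G$ and $H$) actually cancel constitutes the bulk of the argument and remains to be carried out.
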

\begin{proof}
We shall prove Proposition \ref{prop:comp} by induction over $i$ starting with $i = m_{\alpha}$. We consider separately the three cases:
\begin{enumerate}
    \item $\beta = \gamma = \alpha$;
    \item $\beta = \alpha \neq \gamma$ (which, by symmetry, includes the case $\gamma = \alpha \neq \beta$);
    \item $\beta \neq \alpha \neq \gamma$.
\end{enumerate}
Let $i=m_{\alpha}$.
\vspace{-0.8em}
\newline
\newline
(1) Let $\beta = \gamma = \alpha$. Note that if $j=k$, or $j, k \geq 3$ the compatibility relation holds trivially by \eqref{eq:completeness_thetazeros}. Thus, without loss of generality, let $j = 2$ and $k \geq 3$. Then the left-hand side of \eqref{eq:multicomprel} reads
    \begin{align*}
        - \partial_{k(\alpha)}\partial_{2(\alpha)} \theta_{m_{\alpha}(\alpha)} = \, & \,  \dfrac{2}{m_{\alpha}}\partial_{k(\alpha)}\left(\sum_{s=2}^{m_{\alpha}} \Gamma^{s(\alpha)}_{2(\alpha) s(\alpha)}\theta_{m_{\alpha}(\alpha)} \right)  = \dfrac{2}{m_{\alpha}}\left(\sum_{s=2}^{m_{\alpha}}\partial_{k(\alpha)}\Gamma^{s(\alpha)}_{2(\alpha)s(\alpha)}\right)\theta_{m_{\alpha}(\alpha)},
  \end{align*}
    where we have used \eqref{eq:multisyst_top2sameblock} and  \eqref{eq:completeness_thetazeros}. However, by Lemma \ref{lemma:diffsameblockzero}, $\partial_{k(\alpha)} \Gamma^{s(\alpha)}_{2(\alpha) s(\alpha)} = 0$ for $k \geq 3$, which concludes the proof. 
    \newline
    \newline
    (2) Let $\beta = \alpha \neq \gamma$. Suppose $j=2$. Using  \eqref{eq:multisyst}, the left-hand side of \eqref{eq:multicomprel} reads
    \begin{align*}
      \,  & \,    2 \partial_{2(\alpha)} \left(\delta_1^k \Gamma^{1(\alpha)}_{k(\gamma)1(\alpha)}\theta_{m_{\alpha}(\alpha)} \right) - \partial_{k(\gamma)}\left(\dfrac{2}{m_{\alpha}}\left(\sum_{s=2}^{m_{\alpha}} \Gamma^{s(\alpha)}_{2(\alpha) s(\alpha)}\right)\theta_{m_{\alpha}(\alpha)} \right)\\
     \underset{\underset{ \eqref{eq:multisyst_top2sameblock}, \eqref{eq:multisyst_diffblock}}{\text{Leibniz,}}}{=} \, & \, 2\delta^k_{1}  \underset{= \,  0 \text{ by Lemma \ref{lemma:old5}}}{\underbrace{\left(\partial_{2(\alpha)} \Gamma^{1(\alpha)}_{k(\gamma)1(\alpha)}\right)}}\theta_{m_{\alpha}(\alpha)} + 2\delta^k_1 \Gamma^{1(\alpha)}_{k(\gamma)1(\alpha)}\partial_{2(\alpha)}\theta_{m_{\alpha} (\alpha)}\\  & \, - \dfrac{2}{m_{\alpha}}\left( \sum_{s=2}^{m_{\alpha}}\underset{ = \,  0 \text{ by Lemma \ref{lemma:1multidiff}}}{\underbrace{\partial_{k(\gamma)}\Gamma^{s(\alpha)}_{2(\alpha)s(\alpha)}}}\right)\theta_{m_{\alpha}(\alpha)} - \dfrac{2}{m_{\alpha}}\left(\sum_{s=2}^{m_{\alpha}} \Gamma^{s(\alpha)}_{2(\alpha)s(\alpha)}\right)\partial_{k(\gamma)}\theta_{m_{\alpha}(\alpha)}\\
     \underset{\eqref{eq:multisyst_top2sameblock}, \eqref{eq:multisyst_diffblock}}{=}  \, & \,  \dfrac{4}{m_{\alpha}}\delta^k_1 \Gamma^{1(\alpha)}_{k(\gamma)1(\alpha)}\left(\sum_{s=2}^{m_{\alpha}}\Gamma^{s(\alpha)}_{2(\alpha) s(\alpha)}\right)\theta_{m_{\alpha} (\alpha)} - \dfrac{4}{m_{\alpha}}\delta^k_1 \Gamma^{1(\alpha)}_{k(\gamma)1(\alpha)}\left(\sum_{s=2}^{m_{\alpha}}\Gamma^{s(\alpha)}_{2(\alpha) s(\alpha)}\right)\theta_{m_{\alpha} (\alpha)} = 0.
    \end{align*}
   Now, let $j \geq  3$. Using \eqref{eq:completeness_thetazeros} and \eqref{eq:multisyst_diffblock},  the left-hand side of  \eqref{eq:multicomprel} reads
    \begin{equation*}
        2 \delta^k_1 \partial_{j(\alpha)} \left(\Gamma^{1(\alpha)}_{k(\gamma)1(\alpha)}\theta_{m_{\alpha}(\alpha)}\right) \underset{\underset{\text{ \eqref{eq:completeness_thetazeros}}}{\text{Leibniz},}}{=} 2 \partial_{j(\alpha)}\left(\Gamma^{1(\alpha)}_{k(\gamma) 1(\alpha)}\right)\theta_{m_{\alpha}(\alpha)} \underset{\text{Lemma } \ref{lemma:1multidiff}}{=} 0.  
    \end{equation*}
    (3) Let $\beta \neq \alpha \neq \gamma$. Using \eqref{eq:multisyst_diffblock}, the left-hand side \eqref{eq:multicomprel} reads
    \begin{align*}
\, & \,         \delta^k_1 \partial_{j(\beta)}\left(\Gamma^{1(\alpha)}_{k(\gamma)1(\alpha)}\theta_{m_{\alpha}(\alpha)} \right) - \delta^j_1 \partial_{k(\gamma)} \left(\Gamma^{1(\alpha)}_{j(\beta) 1(\alpha)} \theta_{m_{\alpha}(\alpha)} \right)\\ \,  & \, \underset{\underset{\eqref{eq:multisyst_diffblock}}{\text{Leibniz}}}{=} \delta^k_1 \delta^j_1\left(\partial_{j(\beta)}\Gamma^{1(\alpha)}_{k(\gamma)1(\alpha)}  -\partial_{k(\gamma)}\Gamma^{1(\alpha)}_{j(\beta) 1(\alpha)}\right)\theta_{m_{\alpha}(\alpha)} \underset{\text{Lemma } \ref{lemma:diffswapgen}}{=} 0.
    \end{align*}
    \vspace{-1.5em}
    \newline
    \newline
 Let us fix $i\in\{1,\dots,m_\alpha-1\}$ and inductively assume that
	\begin{align*}
		\big(\partial_{j(\beta)}\partial_{k(\gamma)}-\partial_{k(\gamma)}\partial_{j(\beta)}\big)\theta_{I(\alpha)}=0
	\end{align*}
	for every $j(\beta)$, $k(\gamma)$, when  $I> i$.
\vspace{-0.5em}
\newline
\newline
(1) Let $\beta = \gamma = \alpha$. Without loss of generality we may assume that $j < k$, which implies $3 \leq j \leq m_{\alpha} - i + 2$, as otherwise the proposition holds trivially by \eqref{eq:completeness_thetazeros}. Using \eqref{eq:multisyst}, the left-hand side of \eqref{eq:multicomprel} reads
    \begin{align*}
   \qquad \quad    \, & \,   \partial_{j(\alpha)}\partial_{k(\alpha)} \theta_{i(\alpha)} - \partial_{(j-1)(\alpha)} \partial_{k(\alpha)} \theta_{(i+1)(\alpha)} - \partial_{2(\alpha)} \partial_{k(\alpha)} \theta_{(i+j-2)(\alpha)} + \partial_{1(\alpha)}\partial_{k(\alpha)} \theta_{(i+j-1)(\alpha)} \\ \, & \,  - 2 \sum_{s=j-1}^{m_{\alpha}-i+1}\left(\Gamma^{(s-j+1)(\alpha)}_{1(\alpha)1(\alpha)} - \Gamma^{s(\alpha)}_{2(\alpha)(j-1)(\alpha)}\right)\partial_{k(\alpha)}\theta_{(s+i-1)(\alpha)} \\ \, & \, 
 - 2 \sum_{s=j-1}^{m_{\alpha}-i+1}\partial_{k(\alpha)}\left(\Gamma^{(s-j+1)(\alpha)}_{1(\alpha)1(\alpha)} - \Gamma^{s(\alpha)}_{2(\alpha)(j-1)(\alpha)}\right) \theta_{(s+i-1)(\alpha)},
    \end{align*}
    where we have used the inductive hypothesis to swap partial derivatives. 
Note that if $k \geq m_{\alpha}-i+3$ then, by \eqref{eq:completeness_thetazeros}, only the final $s$-sum remains. However, by Lemma \ref{lemma:old5} this vanishes as well. 
\newline
\newline
    Thus, we may assume that $k \leq m_{\alpha} -i +2$. Expanding further, relabelling the indices to collect $\theta$'s and  using  \eqref{eq:lemmaold1diagsameblock}  we get
    \begin{equation}
        F + G,
    \end{equation}
    where,
    \begin{align*}
        F = \, & \,   4 \sum_{s=k-1}^{m_{\alpha}-i+1}\sum_{t=j-1}^{m_{\alpha} - i - s +2}\left(\Gamma^{(s-k+1)(\alpha)}_{1(\alpha)1(\alpha)} - \Gamma^{s(\alpha)}_{2(\alpha)(k-1)(\alpha)}\right)\left(\Gamma^{(t-j+1)(\alpha)}_{1(\alpha)1(\alpha)} - \Gamma^{t(\alpha)}_{2(\alpha)(j-1)(\alpha)}\right)\theta_{(t+s+i-2)(\alpha)} \\ \, & \, - 4\sum_{s=j-1}^{m_{\alpha}-i+1}\sum_{t=k-1}^{m_{\alpha} - i - s +2}\left(\Gamma^{(s-j+1)(\alpha)}_{1(\alpha)1(\alpha)} - \Gamma^{s(\alpha)}_{2(\alpha)(j-1)(\alpha)}\right)\left(\Gamma^{(t-k+1)(\alpha)}_{1(\alpha)1(\alpha)} - \Gamma^{t(\alpha)}_{2(\alpha)(k-1)(\alpha)}\right)\theta_{(t+s+i-2)(\alpha)},
    \end{align*}
    \begin{align*}
        G =  2 \sum_{s=i+j+k-3}^{m_{\alpha}}\Bigg[\Bigg( \, & \, \partial_{(k-1)(\alpha)}(\Gamma^{(s-i-j+1)(\alpha)}_{1(\alpha)1(\alpha)} - \Gamma^{(s-i)(\alpha)}_{2(\alpha)(j-1)(\alpha)}) + \partial_{j(\alpha)}(\Gamma^{(s-i-k+2)(\alpha)}_{1(\alpha)1(\alpha)} - \Gamma^{(s-i+1)(\alpha)}_{2(\alpha)(k-1)(\alpha)})\\ \, & \,  -\partial_{2(\alpha)}\Gamma^{(s-i-k+3)(\alpha)}_{2(\alpha)(j-1)(\alpha)}  + \partial_{1(\alpha)}\Gamma^{(s-i-k+2)(\alpha)}_{2(\alpha)(j-1)(\alpha)}\Bigg)  - (j \leftrightarrow k) \Bigg]\theta_{s(\alpha)},
    \end{align*}
To obtain $G$ we have truncated the sums to the range for which $\theta$ is defined, and used Remark \ref{rmk:old} and Lemma \ref{lemma:diffsameblockzero}  to obtain the lower limit of the sum. We will show that $F$ and $G$ vanish separately. 
\newline
\newline
\underline{F.}\\
This is zero simply by swapping the sums as follows. Swapping the sums in the first line gives
\begin{equation}
    4 \sum_{t=j-1}^{m_{\alpha} - i - k+3} \sum_{s=k-1}^{m_{\alpha} - i - t + 2}\left(\Gamma^{(s-k+1)(\alpha)}_{1(\alpha)1(\alpha)} - \Gamma^{s(\alpha)}_{2(\alpha)(k-1)(\alpha)}\right)\left(\Gamma^{(t-j+1)(\alpha)}_{1(\alpha)1(\alpha)} - \Gamma^{t(\alpha)}_{2(\alpha)(j-1)(\alpha)}\right)\theta_{(t+s+i-2)(\alpha)}.
    \label{eq:temp1iF}
\end{equation}
Notice that for the second line in the definition of $F$, if $s>m_{\alpha} - i - k +3$, then $t+s-i-2> m_{\alpha} +1$, using the smallest possible value for $t$; $k-1$. Thus $\theta_{(t+s-i-2)(\alpha)}$ is undefined  and we may set the upper $s$-limit to $m_{\alpha}-i-k+3$.
Hence, relabelling $s \leftrightarrow t$ in \eqref{eq:temp1iF} gives precisely the negative of the second line, as required. 
\newline
\newline
\underline{G.}\\
The condition, \eqref{eq:3RCmultiinR}, with all indices belonging to the same block (which we denote by $\alpha$), $k=1$ and \begin{itemize}
    \item $l=2$, $m \rightarrow k-1$, $i \rightarrow j-1$, $j \rightarrow s-i$ gives
    \begin{align}
        \, & \,  \label{eq:comptempG1}   \left( \partial_{(k-1)(\alpha)}\left(\Gamma^{(s-i-j+2)(\alpha)}_{1(\alpha)2(\alpha)} - \Gamma^{(s-i)(\alpha)}_{2(\alpha)(j-1)(\alpha)} \right) + \partial_{1(\alpha)}\Gamma^{(s-i-k+2)(\alpha)}_{2(\alpha)(j-1)(\alpha)}\right) - \left(j \leftrightarrow k \right) =  \\ \notag
         \, & \, \left(\Gamma^{(s-i)(\alpha)}_{(k-1)(\alpha)t(\alpha)} \Gamma^{t(\alpha)}_{2(\alpha)(j-1)(\alpha)}  + \Gamma^{(s-i-k+2)(\alpha)}_{(j-1)(\alpha)t(\alpha)} \Gamma^{t(\alpha)}_{1(\alpha)2(\alpha)}  - \Gamma^{(s-i-k+2)(\alpha)}_{1(\alpha)t(\alpha)} \Gamma^{t(\alpha)}_{2(\alpha)(j-1)(\alpha)}\right) - \left(j \leftrightarrow k\right) ,
    \end{align}
    \item $i=2$, $j \rightarrow s-i+1$, $l \rightarrow k-1$, $m \rightarrow j$ gives
    \begin{align}
    \, & \,    \label{eq:comptempG2} \partial_{j(\alpha)}\left(\Gamma^{(s-i)(\alpha)}_{1(\alpha)(k-1)(\alpha)} - \Gamma^{(s-i+1)(\alpha)}_{2(\alpha)(k-1)(\alpha)}\right) = \\ \notag
    \, & \, \partial_{2(\alpha)}\left(\Gamma^{(s-i-j+2)(\alpha)}_{1(\alpha)(k-1)(\alpha)}-\Gamma^{(s-i+1)(\alpha)}_{j(\alpha)(k-1)(\alpha)}\right) + \partial_{1(\alpha)}\left(\Gamma^{(s-i)(\alpha)}_{j(\alpha)(k-1)(\alpha)} - \Gamma^{(s-i-j+2)(\alpha)}_{2(\alpha)(k-1)(\alpha)}\right)\\ \notag
 \, & \,  +  \Gamma^{(s-i+1)(\alpha)}_{j(\alpha)t(\alpha)}\Gamma^{t(\alpha)}_{2(\alpha)(k-1)(\alpha)} - \Gamma^{(s-i+1)(\alpha)}_{2(\alpha)t(\alpha)}\Gamma^{t(\alpha)}_{j(\alpha)(k-1)(\alpha)} + \Gamma^{(s-i-j+2)(\alpha)}_{2(\alpha)t(\alpha)}\Gamma^{t(\alpha)}_{1(\alpha)(k-1)(\alpha)}\\ \notag
  \, & \, - \Gamma^{(s-i-j+2)(\alpha)}_{1(\alpha)t(\alpha)}\Gamma^{t(\alpha)}_{2(\alpha)(k-1)(\alpha)} + \Gamma^{(s-i)(\alpha)}_{1(\alpha)t(\alpha)}\Gamma^{t(\alpha)}_{j(\alpha)(k-1)(\alpha)} - \Gamma^{(s-i)(\alpha)}_{j(\alpha)t(\alpha)}\Gamma^{t(\alpha)}_{1(\alpha)(k-1)(\alpha)}.
    \end{align}
\end{itemize} 
 In the above we have used Lemma \ref{lemma:old1} together with \eqref{Chr_a} to show that the non-differentiated part of \eqref{eq:3RCmultiinR} cancel with each other if $\sigma = \beta$. Putting \eqref{eq:comptempG1}, \eqref{eq:comptempG2}, and \eqref{eq:comptempG2} with $j \leftrightarrow k$ into $G$ gives that $G$ is a sum of the terms
\begin{align}\label{wtemphy}
     & \, \partial_{1(\alpha)}\left(\Gamma^{(s-i)(\alpha)}_{j(\alpha) (k-1)(\alpha)}-\Gamma^{(s-i-j+2)(\alpha)}_{2(\alpha)(k-1)(\alpha)}-\Gamma^{(s-i)(\alpha)}_{k(\alpha)(j-1)(\alpha)}+\Gamma^{(s-i-k+2)(\alpha)}_{2(\alpha)(j-1)(\alpha)}\right)\\
     \, & \, - \partial_{2(\alpha)}\left(\Gamma^{(s-i+1)(\alpha)}_{j(\alpha) (k-1)(\alpha)}-\Gamma^{(s-i-j+3)(\alpha)}_{2(\alpha)(k-1)(\alpha)}-\Gamma^{(s-i+1)(\alpha)}_{k(\alpha)(j-1)(\alpha)}+\Gamma^{(s-i-k+3)(\alpha)}_{2(\alpha)(j-1)(\alpha)}\right)\notag\\
       \, & \, + \Big(\Gamma^{(s-i)(\alpha)}_{(k-1)(\alpha)t(\alpha)} \Gamma^{t(\alpha)}_{2(\alpha)(j-1)(\alpha)}  + \Gamma^{(s-i-k+2)(\alpha)}_{(j-1)(\alpha)t(\alpha)} \Gamma^{t(\alpha)}_{1(\alpha)2(\alpha)}  - \Gamma^{(s-i-k+2)(\alpha)}_{1(\alpha)t(\alpha)} \Gamma^{t(\alpha)}_{2(\alpha)(j-1)(\alpha)}\notag\\
        \, & \, +  \Gamma^{(s-i+1)(\alpha)}_{j(\alpha)t(\alpha)}\Gamma^{t(\alpha)}_{2(\alpha)(k-1)(\alpha)} - \Gamma^{(s-i+1)(\alpha)}_{2(\alpha)t(\alpha)}\Gamma^{t(\alpha)}_{j(\alpha)(k-1)(\alpha)} + \Gamma^{(s-i-j+2)(\alpha)}_{2(\alpha)t(\alpha)}\Gamma^{t(\alpha)}_{1(\alpha)(k-1)(\alpha)}\notag\\ \notag
  \, & \, - \Gamma^{(s-i-j+2)(\alpha)}_{1(\alpha)t(\alpha)}\Gamma^{t(\alpha)}_{2(\alpha)(k-1)(\alpha)} + \Gamma^{(s-i)(\alpha)}_{1(\alpha)t(\alpha)}\Gamma^{t(\alpha)}_{j(\alpha)(k-1)(\alpha)} - \Gamma^{(s-i)(\alpha)}_{j(\alpha)t(\alpha)}\Gamma^{t(\alpha)}_{1(\alpha)(k-1)(\alpha)}\Big) \notag\\ \, & \, - \left(j \leftrightarrow k\right).\notag
\end{align}
The non-differentiated terms in \eqref{wtemphy} can be reorganised into
\begin{align*}
	&\Gamma^{(s-i)(\alpha)}_{(k-1)(\alpha)t(\alpha)} \Gamma^{t(\alpha)}_{2(\alpha)(j-1)(\alpha)}+  \Gamma^{(s-i+1)(\alpha)}_{j(\alpha)t(\alpha)}\Gamma^{t(\alpha)}_{2(\alpha)(k-1)(\alpha)} - \Gamma^{(s-i+1)(\alpha)}_{2(\alpha)t(\alpha)}\Gamma^{t(\alpha)}_{j(\alpha)(k-1)(\alpha)}-\left(j \leftrightarrow k\right)\\
	&=\Gamma^{t(\alpha)}_{2(\alpha)(j-1)(\alpha)}\big(
	\Gamma^{(s-i)(\alpha)}_{(k-1)(\alpha)t(\alpha)}-\Gamma^{(s-i+1)(\alpha)}_{k(\alpha)t(\alpha)}\big)-\Gamma^{t(\alpha)}_{2(\alpha)(k-1)(\alpha)}\big(
	\Gamma^{(s-i)(\alpha)}_{(j-1)(\alpha)t(\alpha)}-\Gamma^{(s-i+1)(\alpha)}_{j(\alpha)t(\alpha)}\big)\\
	&-\Gamma^{(s-i+1)(\alpha)}_{2(\alpha)t(\alpha)}\big(\Gamma^{t(\alpha)}_{j(\alpha)(k-1)(\alpha)}-\Gamma^{t(\alpha)}_{(j-1)(\alpha)k(\alpha)}\big)\\
	&\overset{\eqref{SymmNablac}}{=}\Gamma^{t(\alpha)}_{2(\alpha)(j-1)(\alpha)}\big(
	\Gamma^{(s-i-t+2)(\alpha)}_{2(\alpha)(k-1)(\alpha)}-\Gamma^{(s-i+1)(\alpha)}_{2(\alpha)(t+k-2)(\alpha)}\big)-\Gamma^{t(\alpha)}_{2(\alpha)(k-1)(\alpha)}\big(
	\Gamma^{(s-i-t+2)(\alpha)}_{2(\alpha)(j-1)(\alpha)}-\Gamma^{(s-i+1)(\alpha)}_{2(\alpha)(t+j-2)(\alpha)}\big)\\
	&-\Gamma^{(s-i+1)(\alpha)}_{2(\alpha)t(\alpha)}\big(\Gamma^{(t-j+2)(\alpha)}_{2(\alpha)(k-1)(\alpha)}-\Gamma^{(t-k+2)(\alpha)}_{2(\alpha)(j-1)(\alpha)}\big),
\end{align*}
which vanishes after a relabelling of the indices in half of the sums, and
\begin{align*}
	&\Gamma^{t(\alpha)}_{1(\alpha)1(\alpha)}\big(
	\Gamma^{(s-i-k+2)(\alpha)}_{(j-1)(\alpha)(t+1)(\alpha)}- \Gamma^{(s-i-k-t+3)(\alpha)}_{2(\alpha)(j-1)(\alpha)}+ \Gamma^{(s-i-j+2)(\alpha)}_{2(\alpha)(t+k-2)(\alpha)}\\
	& \, - \Gamma^{(s-i-j-t+3)(\alpha)}_{2(\alpha)(k-1)(\alpha)} + \Gamma^{(s-i-t+1)(\alpha)}_{j(\alpha)(k-1)(\alpha)} - \Gamma^{(s-i)(\alpha)}_{j(\alpha)(t+k-2)(\alpha)}
	\big)-\left(j \leftrightarrow k\right)\\
	&=\Gamma^{t(\alpha)}_{1(\alpha)1(\alpha)}\big(
	\Gamma^{(s-i-k+2)(\alpha)}_{(j-1)(\alpha)(t+1)(\alpha)}+ \Gamma^{(s-i-j+2)(\alpha)}_{2(\alpha)(t+k-2)(\alpha)}+\Gamma^{(s-i-t+1)(\alpha)}_{j(\alpha)(k-1)(\alpha)} - \Gamma^{(s-i)(\alpha)}_{j(\alpha)(t+k-2)(\alpha)}\\
	&-\Gamma^{(s-i-j+2)(\alpha)}_{(k-1)(\alpha)(t+1)(\alpha)}- \Gamma^{(s-i-k+2)(\alpha)}_{2(\alpha)(t+j-2)(\alpha)}-\Gamma^{(s-i-t+1)(\alpha)}_{(j-1)(\alpha)k(\alpha)} + \Gamma^{(s-i)(\alpha)}_{k(\alpha)(t+j-2)(\alpha)}
	\big)\\
	&\overset{\eqref{SymmNablac}}{=}\Gamma^{t(\alpha)}_{1(\alpha)1(\alpha)}\big(
	\Gamma^{(s-i)(\alpha)}_{(k-1)(\alpha)(t+j-1)(\alpha)}-\Gamma^{(s-i)(\alpha)}_{(j-1)(\alpha)(t+k-1)(\alpha)}+\Gamma^{(s-i-k+2)(\alpha)}_{(j-1)(\alpha)(t+1)(\alpha)}-\Gamma^{(s-i-j+2)(\alpha)}_{(k-1)(\alpha)(t+1)(\alpha)}
	\big)\\
	&\overset{\eqref{SymmNablac}}{=}0.
\end{align*}
Thus, $G$ vanishes by the symmetry of $\nabla c$ (which can be seen by letting $(i, l, j, k) \rightarrow (2, j-1, k-1, s-i)$ and $(2, j-1, k-1, s-i+1)$ in \eqref{eq:nablacsymmulti} for the partial derivative with respect to $u^1$ and $u^2$, respectively).
\newline
\newline (2) Let $\beta = \alpha \neq \gamma$, and relabel $\gamma \rightarrow \beta$.  Let us first assume that $3\leq j \leq m_{\alpha} - i+2$, and $k \leq m_{\alpha} - i + 1$. After successive use of \eqref{eq:multisyst}, the left-hand side of \eqref{eq:multicomprel} reads 
\begin{equation}
    \eqref{eq:multicomprel} = G + H, 
\end{equation}
where, 
\begin{align*}
    G = 2 \sum_{s=i+j+k-2}^{m_{\alpha}}\Bigg[\, & \, \partial_{j(\alpha)}\Gamma^{(s-i+1)(\alpha)}_{k(\beta)1(\alpha)}-\partial_{(j-1)(\alpha)}\Gamma^{(s-i)(\alpha)}_{k(\beta)1(\alpha)} -  \partial_{2(\alpha)}\Gamma^{(s-i-j+3)(\alpha)}_{k(\beta)1(\alpha)} +\partial_{1(\alpha)}\Gamma^{(s-i-j+2)(\alpha)}_{k(\beta)1(\alpha)} \\ \, & \, - \partial_{k(\beta)}\left(\Gamma^{(s-i-j+2)(\alpha)}_{1(\alpha)1(\alpha)} - \Gamma^{(s-i+1)(\alpha)}_{2(\alpha)(j-1)(\alpha)}\right) \Bigg]\theta_{s(\alpha)},
\end{align*}
\begin{align*}
    H = 
 & \, 2 \underset{ = \, 0 \text{ by Lemma \ref{lemma:newnew}}}{\underbrace{\left(\partial_{j(\alpha)}\Gamma^{k(\alpha)}_{k(\beta)1(\alpha)}\right)\theta_{(k+i-1)(\alpha)}}} + 2 \underset{a)}{\underbrace{\sum_{s=k+i}^{i+j+k-3}\left(\partial_{j(\alpha)}\Gamma^{(s-i+1)(\alpha)}_{k(\beta)1(\alpha)} - \partial_{(j-1)(\alpha)}\Gamma^{(s-i)(\alpha)}_{k(\beta)1(\alpha)}\right) \theta_{s(\alpha)}}}  \\ & \, -2  \underset{b)}{\underbrace{\sum_{s=i+j-2}^{i+j+k-3}\partial_{k(\beta)}\left(\Gamma^{(s-i-j+2)(\alpha)}_{1(\alpha)1(\alpha)}-\Gamma^{(s-i+1)(\alpha)}_{2(\alpha)(j-1)(\alpha)}\right)\theta_{s(\alpha)}}} - 2  \left(\partial_{2(\alpha)}\Gamma^{k(\alpha)}_{k(\beta) 1(\alpha)}\right)\theta_{(i+j+k-3)(\alpha)},
\end{align*}
    where we have truncated the upper limits of the sums to the range for which $\theta$ is defined, and performed a relabelling of the indices, (analogously to that of term F in the previous case), in order to cancel the double sums.  We will show that G and H vanish separately.
    \vspace{-0.5em}
    \newline
    \newline
\underline{G.}\\
Considering \eqref{eq:3RCmultiinR} with $\tau = \rho = \gamma = \alpha \neq \beta$, and $i=2$, $l=1$, $(k, j, m) \rightarrow (j, s-i+2, k)$, together with \eqref{eq:lemmaold1diagsameblock}, gives
\begin{align}
    \partial_{j(\alpha)}\Gamma^{(s-i+1)(\alpha)}_{1(\alpha)k(\beta)} - \partial_{2(\alpha)} \Gamma^{(s-i-j+3)(\alpha)}_{1(\alpha)k(\beta)} = \, & \, \notag   \Gamma^{(s-i-j+3)(\alpha)}_{2(\alpha)t(\sigma)} \Gamma^{t(\sigma)}_{k(\beta)1(\alpha)} - \Gamma^{(s-i-j+3)(\alpha)}_{k(\beta)t(\sigma)} \Gamma^{t(\sigma)}_{1(\alpha)2(\alpha)} \\ \, & \,  + \Gamma^{(s-i+1)(\alpha)}_{t(\sigma) k(\beta)} \Gamma^{t(\sigma)}_{j(\alpha)1(\alpha)} - \Gamma^{(s-i+1)(\alpha)}_{t(\sigma)j(\alpha)} \Gamma^{t(\sigma)}_{1(\alpha)k(\beta)}.
         \label{eq:3RCGa}
    \end{align}
Letting $l=2$, $i=1$ and $(k, j, m) \rightarrow (j-1, s-i+1, k)$ instead in \eqref{eq:3RCmultiinR}  with $\tau = \rho = \gamma = \alpha \neq \beta$  gives
\begin{align}
    \partial_{k(\beta)}\left(\Gamma^{(s-i-j+3)(\alpha)}_{2(\alpha)1(\alpha)} - \Gamma^{(s-i+1)(\alpha)}_{2(\alpha) (j-1)(\alpha)} \right) = \, & \, \notag \partial_{1(\alpha)}\Gamma^{(s-i-j+3)(\alpha)}_{2(\alpha)k(\beta)} - \partial_{(j-1)(\alpha)}\Gamma^{(s-i+1)(\alpha)}_{2(\alpha)k(\beta)} \\ \, & \, \notag  + \Gamma^{(s-i-j+3)(\alpha)}_{t(\sigma)1(\alpha)} \Gamma^{t(\sigma)}_{k(\beta)2(\alpha)} - \Gamma^{(s-i-j+3)(\alpha)}_{t(\sigma)k(\beta)} \Gamma^{t(\sigma)}_{2(\alpha)1(\alpha)}  \\ \, & \,  + \Gamma^{(s-i+1)(\alpha)}_{t(\sigma)k(\beta)} \Gamma^{t(\sigma)}_{2(\alpha) (j-1)(\alpha)}  - \Gamma^{(s-i+1)(\alpha)}_{(j-1)(\alpha) t(\sigma)} \Gamma^{t(\sigma)}_{2(\alpha) k(\beta)}.
    \label{eq:3RCGb}
\end{align}
Putting \eqref{eq:3RCGa} and \eqref{eq:3RCGb} into G yields
\begin{align}
 \sum_{s=i+j+k-2}^{m_{\alpha}}\Bigg( \, & \, \notag \Gamma^{(s-i-j+3)(\alpha)}_{2(\alpha)t(\sigma)} \Gamma^{t(\sigma)}_{k(\beta)1 (\alpha)} + \Gamma^{(s-i+1)(\alpha)}_{k(\beta) t(\sigma)} \Gamma^{t(\sigma)}_{1(\alpha) j(\alpha)} - \Gamma^{(s-i+1)(\alpha)}_{j(\alpha) t(\sigma)} \Gamma^{t(\sigma)}_{1(\alpha)k(\beta)} \\ \, &  \, - \Gamma^{(s-i-j+3)(\alpha)}_{1(\alpha) t(\sigma)} \Gamma^{t(\sigma)}_{k(\beta) 2(\alpha)}  + \Gamma^{(s-i+1)(\alpha)}_{(j-1)(\alpha) t(\sigma)} \Gamma^{t(\sigma)}_{2(\alpha) k(\beta)} - \Gamma^{(s-i+1)(\alpha)}_{k(\beta) t(\sigma)} \Gamma^{t(\sigma)}_{2(\alpha) (j-1)(\alpha)}\Bigg),
    \label{eq:Gtemp2} 
\end{align}
with the use of \eqref{eq:lemmaold1diagsameblock}. Notice that if $\sigma \neq \alpha$, then \eqref{eq:Gtemp2} vanishes immediately by Lemma \ref{lemma:old1}. Thus,  after performing a relabelling of the index $\tilde{t}(t)$, together with  \eqref{eq:lemmaold1diagsameblock}, we obtain
\begin{equation*}
   G =  2\sum_{s=i+j+k-2}^{m_{\alpha}} \Gamma^{\tilde{t}}_{k(\beta) 1(\alpha)}\left(\Gamma^{(s-i+1)(\alpha)}_{(j-1)(\alpha) (\tilde{t}+1)(\alpha)} - \Gamma^{(s-i-\tilde{t}+2)(\alpha)}_{2(\alpha) (j-1)(\alpha)} + \Gamma^{(s-i-j+3)(\alpha)}_{2(\alpha) \tilde{t}(\alpha)} - \Gamma^{(s-i+1)(\alpha)}_{j(\alpha) \tilde{t}(\alpha)} \right)\theta_{s(\alpha)},
\end{equation*}
 which vanishes by the symmetry of $\nabla c$ (this can be easily seen by taking $\alpha = \beta = \gamma = \rho$ in \eqref{eq:nablacsymmulti} and letting $(k, l, i, j) \rightarrow (s-i+1, j-1, 1, \tilde{t}+1)$. 
 \vspace{-0.5em}
\newline
\newline
\underline{H.}
We will now simplify a), and b)  as given in the definition  of H. 
\newline
\newline
a) Consider  \eqref{eq:3RCmultiinR}, with $\beta \neq \rho = \sigma = \gamma  = \alpha$,   $i=k-1$, $l=1$,  
 and relabelling $(k, m, j) \rightarrow (j, k, s-i+m-1)$. Using Lemmas \ref{lemma:old1},  \ref{lemma:old4}, \ref{lemma:newold} and  \eqref{eq:Rmk:old} (together with the fact that $s \leq i+j+k-3$) we get
\begin{equation}
    \partial_{j(\alpha)}\Gamma^{(s-i+1)(\alpha)}_{k(\beta)1(\alpha)} - \partial_{(j-1)(\alpha)}\Gamma^{(s-i)(\alpha)}_{k(\beta)1(\alpha)} = \Gamma^{k(\alpha)}_{1(\alpha)k(\beta)} \left(\Gamma^{(s-i)(\alpha)}_{(j-1)(\alpha) k(\gamma)} - \Gamma^{(s-i+1)(\alpha)}_{j(\alpha) k(\alpha)} \right). 
    \label{eq:tempcompmultiga}
\end{equation}
Notice that the right-hand side of \eqref{eq:tempcompmultiga} is zero, by  \eqref{eq:Rmk:old}, when $s \leq i+j+k-4$.   Thus,  a) yields
\begin{equation*}
2 \, \Gamma^{k(\alpha)}_{1(\alpha)k(\beta)}\left(\Gamma^{(j+k-3)(\alpha)}_{(j-1)(\alpha)k(\alpha)} - \Gamma^{(j+k-2)(\alpha)}_{j(\alpha)k(\alpha)}\right)\theta_{(i+j+k-3)(\alpha)}.
\end{equation*}

b) Consider \eqref{eq:3RCmultiinR}, with $\beta \neq \rho = \sigma = \gamma  = \alpha$,  $i=1$, $k=2$, and the relabelling: $(m, j, l) \rightarrow (k, s-i+1, j-1)$. Using \eqref{Chr_a} and Remark \ref{rmk:old} together with Lemmas  \ref{lemma:old1}, \ref{lemma:old4}, and \ref{lemma:newold}   (and with the fact that $s \leq i+j+k-3$) we get 
\begin{align}
   \partial_{k(\beta)} \left(\Gamma^{(s-i-j+2)(\alpha)}_{1(\alpha)1(\alpha)} - \Gamma^{(s-i+1)(\alpha)}_{(j-1)(\alpha)2(\alpha)}\right) \notag = \, & \,  - \partial_{2(\alpha)}\Gamma^{(s-i-j+3)(\alpha)}_{k(\beta)1(\alpha)} + \Gamma^{(s-i-j+3)(\alpha)}_{1(\alpha)k(\beta)} \Gamma^{(j-1)(\alpha)}_{(j-1)(\alpha)2(\alpha)} \\ \, & \, - \Gamma^{(s-i+1)(\alpha)}_{2(\alpha)(j+k-2)(\alpha)} \Gamma^{k(\alpha)}_{1(\alpha)k(\beta)}.
   \label{eq:tempcompmultigb}
\end{align}
Notice that the right-hand side of \eqref{eq:tempcompmultigb} is zero, by Lemma \ref{lemma:old1}, when $s \leq i+j+k-3$ and thus b) is equivalent to
\begin{equation*}
2\,\left(\partial_{2(\alpha)}\Gamma^{k(\alpha)}_{1(\alpha) k(\beta)} -  \, \Gamma^{k(\alpha)}_{1(\alpha)k(\beta)} \left( \Gamma^{(j-1)(\alpha)}_{2(\alpha)(j-1)(\alpha)} - \Gamma^{(j+k-2)(\alpha)}_{2(\alpha) (j+k-2)(\alpha)}\right)\right)\theta_{(i+j+k-3)(\alpha)}.
\end{equation*}
Hence $H$, and consequently   the left-hand side of  \eqref{eq:multicomprel}, becomes
\begin{equation*}
    2 \, \Gamma^{k(\alpha)}_{1(\alpha)k(\beta)}\left(\Gamma^{(j+k-3)(\alpha)}_{(j-1)(\alpha)k(\alpha)} - \Gamma^{(j+k-2)(\alpha)}_{j(\alpha)k(\alpha)} - \Gamma^{(j-1)(\alpha)}_{2(\alpha)(j-1)(\alpha)} + \Gamma^{(j+k-2)(\alpha)}_{2(\alpha) (j+k-2)(\alpha)}\right)\theta_{(i+j+k-3)(\alpha)},
\end{equation*}
which vanishes by the symmetry of $\nabla c$ (which can be seen by taking $\alpha = \beta=\gamma=\rho$ in \eqref{eq:nablacsymmulti} and letting $(i, j, k, l) \rightarrow (j-1, k, j+k-2, 2)$).    
\newline
\newline
In the above we assumed that $3 \leq j \leq m_{\alpha}-i +2$, and $k \leq m_{\alpha} -i+1$.  Note that if both $j>m_{\alpha}-i+2$, and $k>m_{\alpha}-i+1$, then the proposition holds trivially by Corollary   \ref{lemma:completeness_thetazeros}. Suppose $j>m_{\alpha} - i +2$, $k \leq m_{\alpha}-i+1$. Then the left-hand side of  \eqref{eq:multicomprel} reads
\begin{equation}
    \partial_{j(\alpha)}\partial_{k(\beta)}\theta_{i(\alpha)} = 2 \sum_{s=k}^{m_{\alpha}-i+1} \partial_{j(\alpha)} \left(\Gamma^{s(\alpha)}_{1(\alpha)k(\beta)}\right) \theta_{(s+i-1)(\alpha)},
\end{equation}
after successive use of \eqref{eq:multisyst}, together with the induction hypothesis. Each term in the sum vanishes, by Lemma  \ref{lemma:old5}, since $j>m_{\alpha}-i+1\geq s$. 
Now suppose $j\leq m_{\alpha} - i +2$, $k > m_{\alpha}-i+1$. Then the left-hand side of \eqref{eq:multicomprel} reads
\begin{equation}
    - \partial_{k(\beta)} \partial_{j(\alpha)} \theta_{i(\alpha)} = 2 \sum_{s=j-1}^{m_{\alpha}-i+1}\partial_{k(\beta)}\left(\Gamma^{(s-j+1)(\alpha)}_{1(\alpha)1(\alpha)} - \Gamma^{s(\alpha)}_{2(\alpha)(j-1)(\alpha)}\right)\theta_{(s+i-1)(\alpha)}.
\end{equation}
Again, this vanishes due to Lemma \ref{lemma:old5} since $k > m_{\alpha}-i+1 \geq s > s-j+1$.
\vspace{-0.5em}
\newline
\newline 
(3) Let $\beta \neq \alpha \neq \gamma$. Without loss of generality, we may assume that $j \leq m_{\alpha}-i+1$, as otherwise the proposition holds trivially, by \eqref{eq:completeness_thetazeros}. First, let's assume that $k \geq m_{\alpha} - i +2$. Then, the left-hand side of \eqref{eq:multicomprel} becomes
    \begin{equation}
        -2\sum_{s=k}^{m_{\alpha}-i+1}\partial_{k(\gamma)}\left(\Gamma^{s(\alpha)}_{1(\alpha)j(\beta)}\right)\theta_{(i+s-1)(\alpha)},
    \end{equation}
    where we have truncated the upper limit of the sum to the range for which $\theta_{(i+s-1)(\alpha)}$ is defined. However, the above is zero by Lemma \ref{lemma:diffdiffblocks} since $s-1-k \leq m_{\alpha}-i-k \leq -2 <-1$. Thus we are free to assume that $3 \leq j < k \leq m_{\alpha}-i+1$.  Expanding the left-hand side of \eqref{eq:multicomprel}, with successive use of \eqref{eq:multisyst}
    \begin{align*}
         \, & \,  2 \sum_{s=k}^{m_{\alpha}-i+1} \left(\partial_{j(\beta)}\Gamma^{s(\alpha)}_{1(\alpha)k(\gamma)} - \partial_{k(\gamma)}\Gamma^{s(\alpha)}_{1(\alpha)j(\beta)}\right)\theta_{(s+i-1)(\alpha)} \underset{\text{by Lemma \ref{lemma:diffswapgen}}}{=} 0,
    \end{align*}
     after cancelling the two double sums by performing a relabelling of the indices analogously to the previous cases. 
     \vspace{-0.5em}
\newline
\newline
    In the above expansion we assumed that $k \leq m_{\alpha}-s-i+2$. If this is not the case we must consider $j \leq m_{\alpha}-s-i+2$ as we otherwise get zero trivially by \eqref{eq:completeness_thetazeros}. Thus, after the use of Lemma \ref{lemma:diffswapgen}, the left-hand side of \eqref{eq:multicomprel} reads
    \begin{equation}
         4\sum_{s=k}^{m_{\alpha}-i+1}\sum_{t=j}^{m_{\alpha}-s-i+2} \Gamma^{s(\alpha)}_{1(\alpha)k(\gamma)} \Gamma^{t(\alpha)}_{1(\alpha)j(\beta)}\theta_{(s+t+i-2)(\alpha)},
    \end{equation}
     If $t-1-j < -1$ then $\Gamma^{t(\alpha)}_{1(\alpha)j(\beta)}$ is zero, by Lemma \ref{lemma:newold}. Moreover $t-j-1 \leq m_{\alpha}-s-i+1-j$ is strictly less than $-1$ if $m_{\alpha}-s-i+2<j$. Hence, since $j<k$ the expression vanishes by Lemma \ref{lemma:newold}.
\newline
\end{proof}

\hspace{-1em}As a consequence of the above result  we have the following theorem.

\begin{theorem}\label{DT_sols}
The general  solution of the system \eqref{eq:multisyst}  depends  on $r$ arbitrary functions of a single variable
\[f_{1(1)}(u^{1(\alpha)}), \dots, f_{1(r)}(u^{1(r)})\]
and $n-r$ functions of two variables 
\[f_{2(1)}(u^{1(1)},u^{2(1)}) ,..., f_{m_1(1)}(u^{1(1)},u^{2(1)}) , ..., f_{2(r)}(u^{1(r)},u^{2(r)}),..., f_{m_r(r)}(u^{1(r)},u^{2(r)}).\]
\end{theorem}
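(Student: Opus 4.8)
The plan is to combine the compatibility proven in Proposition~\ref{prop:comp} with the triangular structure of \eqref{eq:multisyst}, reading off the number of arbitrary functions as the number of coordinate directions left unconstrained for each unknown $\theta_{i(\alpha)}$. First I would catalogue which first derivatives the system actually prescribes: inspecting \eqref{eq:completeness_thetazeros}, \eqref{eq:multisyst_top2sameblock}, \eqref{eq:multisyst_sameblock} and \eqref{eq:multisyst_diffblock}, every derivative $\partial_{j(\beta)}\theta_{i(\alpha)}$ is fixed (set to zero, or expressed through known Christoffel symbols and other $\theta$'s) with the two exceptions of $\partial_{1(\alpha)}\theta_{i(\alpha)}$, which is never prescribed, and $\partial_{2(\alpha)}\theta_{i(\alpha)}$, which is prescribed only for $i=m_\alpha$ through the ordinary differential equation \eqref{eq:multisyst_top2sameblock} and is free for $i<m_\alpha$. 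A second structural remark is that the right-hand sides of \eqref{eq:multisyst_sameblock} and \eqref{eq:multisyst_diffblock} for $\theta_{i(\alpha)}$ involve only components $\theta_{l(\alpha)}$ of the \emph{same} block with $l\ge i$; hence the blocks decouple, and within each block the equations form a closed descending hierarchy which I would solve starting from the top index $i=m_\alpha$ and decreasing, precisely as was done for the densities of conservation laws.

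For the base of the induction, $i=m_\alpha$, the subsystem governing $\theta_{m_\alpha(\alpha)}$ is closed and prescribes every derivative except $\partial_{1(\alpha)}\theta_{m_\alpha(\alpha)}$: all same-block derivatives of order $\ge 3$ vanish by \eqref{eq:completeness_thetazeros}, the cross-block derivatives are given by \eqref{eq:multisyst_diffblock}, and $\partial_{2(\alpha)}\theta_{m_\alpha(\alpha)}$ obeys the linear ODE \eqref{eq:multisyst_top2sameblock}. Since $u^{1(\alpha)}$ is the only free direction, Darboux's theorem, whose compatibility hypotheses are supplied by Proposition~\ref{prop:comp}, yields a solution determined by a single arbitrary function of $u^{1(\alpha)}$; this is the function I would label $f_{1(\alpha)}$. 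In the inductive step $i<m_\alpha$, once all $\theta_{l(\alpha)}$ with $l>i$ are known, the subsystem for $\theta_{i(\alpha)}$ prescribes $\partial_{j(\beta)}\theta_{i(\alpha)}$ for every $(j,\beta)$ except $(1,\alpha)$ and $(2,\alpha)$, the right-hand sides being known functions depending at most linearly on $\theta_{i(\alpha)}$ itself. Applying Darboux's theorem with the two free directions $u^{1(\alpha)},u^{2(\alpha)}$ then gives a solution determined by one arbitrary function of these two variables. Tallying the contributions, each block $\alpha$ produces one function of one variable (from the top) and $m_\alpha-1$ functions of two variables, and summing over the $r$ blocks yields $r$ functions of one variable and $\sum_{\alpha=1}^{r}(m_\alpha-1)=n-r$ functions of two variables, as asserted.

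The main obstacle I anticipate is making the reduction to Darboux's form rigorous, rather than the bookkeeping of the count. At each level one must verify that the closed subsystem genuinely satisfies Darboux's hypotheses: that neither a prescribed derivative of $\theta_{i(\alpha)}$ nor the computation of the compatibility conditions reintroduces the free derivative $\partial_{2(\alpha)}\theta_{i(\alpha)}$ for $i<m_\alpha$, and that the compatibility conditions required by Darboux are exactly those established in Proposition~\ref{prop:comp}. A further delicate point is the cross-block propagation: the free data for $\theta_{i(\alpha)}$ is prescribed only on a slice internal to block $\alpha$, while the derivatives $\partial_{j(\beta)}\theta_{i(\alpha)}$ with $\beta\neq\alpha$ must consistently carry it to the whole domain; I would check, again via Proposition~\ref{prop:comp}, that this introduces no additional constraint capable of lowering the count.
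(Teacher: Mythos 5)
Your proposal is correct and follows essentially the same route as the paper: for each block one descends from $\theta_{m_\alpha(\alpha)}$, whose closed subsystem leaves only $\partial_{1(\alpha)}$ free (hence one function of $u^{1(\alpha)}$), to the lower components, whose subsystems leave $\partial_{1(\alpha)}$ and $\partial_{2(\alpha)}$ free (hence one function of $u^{1(\alpha)},u^{2(\alpha)}$ each), invoking Darboux's theorem with compatibility supplied by Proposition \ref{prop:comp}. The bookkeeping of which derivatives are prescribed, the triangular/decoupled structure across blocks, and the final count all match the paper's argument.
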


\begin{proof}
For any block $\alpha$ let us start from the subsystem for $\theta_{m_{\alpha}(\alpha)}$. It is a closed compatible subsystem in  Darboux's form containing all partial derivatives  of the unknown function apart from $\partial_{1(\alpha)}\theta_{m_{\alpha}(\alpha)}$. From Darboux's theorem its general solution depends on an arbitrary  function $f_{1(\alpha)}(u^{1(\alpha)})$. At the next steps one has to consider the subsystems for $\theta_{(m_{\alpha}-i)(\alpha)}$ where $i=1,...,m_{\alpha}-1$. Besides  the unknown, the right-hand sides of each subsystem  also contain  the functions $\theta_{m_{\alpha}(\alpha)},...,\theta_{(m_{\alpha}-i+1)(\alpha)}$ obtained at the previous steps. They are closed compatible subsystems in  Darboux's form containing all partial derivatives  of the unknown function apart from $\partial_{1(\alpha)}\theta_{(m_{\alpha}-i)(\alpha)}$ and $\partial_{2(\alpha)}\theta_{(m_{\alpha}-i)(\alpha)}$. From Darboux's theorem the general solution of the subsystem for $\theta_{(m_{\alpha}-i)(\alpha)}$ depends on an arbitrary  function $f_{(1+i)(\alpha)}(u^{1(\alpha)},u^{2(\alpha)})$. Repeating this procedure for each block we get the result. 
\end{proof}

\subsubsection{An example}
Let us consider the case of  a single Jordan block of size $3$. In this case
\[V=X\circ=\begin{bmatrix}
		X^{1} & 0  & 0\cr
		X^{2} & X^{1} & 0\cr
		X^{3} & X^{2} & X^{1}
	\end{bmatrix},\qquad  g=\begin{bmatrix}
		\theta_{1} & \theta_2  & \theta_3\cr
		\theta_{2} & \theta_{3} & 0\cr
		\theta_3 & 0 & 0.
	\end{bmatrix}.
\]
The associated torsionless connection $\nabla $ is defined by the Christoffel symbols
\begin{eqnarray*}
\Gamma^1_{22} &=& -\frac{1}{(X^2)^2}\left(\frac{\partial X^1}{\partial u^2}X^2-\frac{\partial X^1}{\partial u^3}X^3\right),\\ 
\Gamma^1_{23} &=& \Gamma^1_{32}\,\,=\,\,\Gamma^2_{33}\,\,=\,\,-\frac{1}{X^2}\frac{\partial X^1}{\partial u^3},\\
\Gamma^2_{22} &=& \frac{1}{(X^2)^3}\left((X^2)^2\frac{\partial X^1}{\partial u^1}-(X^2)^2\frac{\partial X^2}{\partial u^2}+X^2X^3\frac{\partial X^2}{\partial u^3}-(X^3)^2\frac{\partial X^1}{\partial u^3}\right),\\
\Gamma^2_{23} &=& -\frac{1}{(X^2)^2}\left(\frac{\partial X^2}{\partial u^3}X^2-\frac{\partial X^1}{\partial u^3}X^3\right),\\ 
\Gamma^3_{22} &=& \frac{1}{(X^2)^3}\left((X^2)^2\frac{\partial X^2}{\partial u^1}-(X^2)^2\frac{\partial X^3}{\partial u^2}-X^2X^3\frac{\partial X^1}{\partial u^1}+X^2X^3\frac{\partial X^3}{\partial u^3}\right.\\
&&\left.+(X^3)^2\frac{\partial X^1}{\partial u^2}-(X^3)^2\frac{\partial X^2}{\partial u^3}\right),\\
\Gamma^3_{23} &=& \Gamma^3_{32}\,\,=\,\,\frac{1}{(X^2)^2}\left(X^2\frac{\partial X^1}{\partial u^1}-X^2\frac{\partial X^3}{\partial u^3}-X^3\frac{\partial X^1}{\partial u^2}+X^3\frac{\partial X^2}{\partial u^3}\right),\\  
\Gamma^3_{33} &=& \frac{1}{X^2}\left(\frac{\partial X^1}{\partial u^2}-\frac{\partial X^2}{\partial u^3}\right).
\end{eqnarray*}
The Dubrovin-Novikov-Tsarev system for the metric $g$ contains the condition
\[\frac{\partial  X^1}{\partial u^3}\theta_3=0,\]
which implies that $X^1$ does not  depend on $u^3$ as otherwise $g$ would be degenerate (as already observed in \cite{VF}). Taking into account this fact, the remaining system of equations contains the equations
\begin{eqnarray*}
\frac{\partial\theta_3}{\partial u^3}X^2-2\frac{\partial  X^1}{\partial u^2}\theta_3=0,\qquad\frac{\partial\theta_3}{\partial u^3}X^2+2\frac{\partial  X^2}{\partial u^3}\theta_3=0,
\end{eqnarray*}
which provide the condition
 \[\frac{\partial  X^2}{\partial u^3}=-\frac{\partial  X^1}{\partial u^2}.\]
 \newline
 \newline
 We now focus on the case  $X=E-a_0\,e$, where $a_0=\sum_{i=1}^3\epsilon_iu^i$, i.e.
 \[X^1=u^1-a_0,\quad X^2=u^2,\quad X^3=u^2.\]
  This belongs to the class of systems studied in \cite{LPVG}. For general values of the constants $\epsilon_i$, the Dubrovin-Novikov-Tsarev system does not admit  a solution. The necessary conditions $\frac{\partial  X^1}{\partial u^3}=0$ and $\frac{\partial  X^2}{\partial u^3}=-\frac{\partial  X^1}{\partial u^2}$ imply that $\epsilon_2=\epsilon_3=0$, that is $a_0=\epsilon_1u^1$. With this choice, the system belongs to  the Darboux-Tsarev class with the non-vanishing Christoffel symbols of the associated connection given by
\begin{eqnarray*}
\Gamma^2_{22} &=& \frac{1}{X^2}\left(\frac{\partial X^1}{\partial u^1}-\frac{\partial X^2}{\partial u^2}\right)=-\frac{\epsilon_1}{u_2},\\
\Gamma^3_{22} &=& \frac{X^3}{(X^2)^2}\left(-\frac{\partial X^1}{\partial u^1}+\frac{\partial X^3}{\partial u^3}\right)=\epsilon_1\frac{u_3}{u_2^2},\\
\Gamma^3_{23} &=& \Gamma^3_{32}\,\,=\,\,\frac{1}{X^2}\left(\frac{\partial X^1}{\partial u^1}-\frac{\partial X^3}{\partial u^3}\right)=-\frac{\epsilon_1}{u_2},
\end{eqnarray*}  
and the Dubrovin-Novikov-Tsarev system reads
\begin{eqnarray*}
&&\frac{\partial\theta_3}{\partial u_3}=0,\qquad u_2\left(\frac{\partial\theta_3}{\partial u_2}+\frac{\partial\theta_2}{\partial u_3}\right)+2\epsilon_1\theta_3=0,\\
&&u_2^2\left(2\frac{\partial\theta_2}{\partial u_2}-\frac{\partial\theta_1}{\partial u_3}-
\frac{\partial\theta_3}{\partial u_1}\right)+2\epsilon_1(u_2\theta_2-u_3\theta_3)=0.
\end{eqnarray*} 
As expected, the general solution depends on two arbitrary functions of two variables, $F_2(u_1, u_2)$ and $F_3(u_1, u_2)$, and one arbitrary function of a single variable, $F_1(u_1)$:
\begin{eqnarray*}
 \theta_1&=& \frac{\epsilon_1}{9}(2\epsilon_1-3)F_1u_2^{-2-\f{4}{3}\epsilon_1}u_3^2+\left(-F'_1u_2^{-\f{4}{3}\epsilon_1}+2(\epsilon_1+1)\frac{\partial F_2}{\partial u_2}\right)u_3+F_3,\\
 \theta_2&=& -\frac{2}{3}\epsilon_1F_1u_2^{-1-\f{4}{3}\epsilon_1}u_3+F_2,\\
\theta_3&=&F_1u_2^{-\f{4}{3}\epsilon_1}.
\end{eqnarray*} 
Let us study flat solutions. There  are only two values of $\epsilon_1$ such that the metric
 might be flat. Indeed, for instance, we have 
\[R^1_{212}=\frac{1}{9}\frac{\epsilon_1(\epsilon_1-3)}{u_2^2},\]
 which implies $\epsilon_1=3$ or $\epsilon_1=0$. Focusing on the first option, we have
 \[R^1_{112} = \frac{1}{2u_2F_1}\left(2u_2^5\frac{\partial^2 F_2}{\partial u_2^2}
 +14\frac{\partial F_2}{\partial u_2}u_2^4+16F_2u_2^3-F'_1\right),\]
which vanishes if and only if
\[F_2=\frac{F_4}{u_2^2}+\frac{F_5}{u_2^4}-\frac{1}{2}\frac{F'_1}{u_2^3},\]
where $F_4$  and $F_5$ are arbitrary functions of  $u_1$.  
All the remaining components of the Riemann tensor vanish if and only if
\[F_3=\frac{F_7}{u_2^2}+\frac{F_6}{u_2}+\frac{F_4^2}{F_1}-\frac{2F'_5}{u_2^3}+\frac{3F_5F'_1}{F_1u_2^3}+\frac{F_5^2}{F_1u_2^4},\]
where $F_6$  and $F_7$ are arbitrary functions of  $u_1$.  
\newline
\newline
Let us also compute the  general solution  of the  linear system for densities of  conservation laws:
\begin{eqnarray}
\label{subs1}
&&\frac{\partial\omega_3}{\partial u_2}=-\frac{\epsilon_1}{u_2}\omega_3,\qquad\frac{\partial\omega_3}{\partial u_3}=0,\\
\label{subs2}
&&\frac{\partial\omega_2}{\partial u_2}=-\frac{\epsilon_1}{u_2}\omega_2+\frac{\epsilon_1u_3}{u_2^2}\omega_3+\frac{\partial\omega_3}{\partial u_1},\qquad\frac{\partial\omega_2}{\partial u_3}=-\frac{\epsilon_1}{u_2}\omega_3,\\
\label{subs3}
&&\frac{\partial\omega_1}{\partial u_2}=\frac{\partial\omega_2}{\partial u_1},\qquad\frac{\partial\omega_1}{\partial u_3}=\frac{\partial\omega_3}{\partial u_1}.
\end{eqnarray}
The general  solution of the subsystem \eqref{subs1} is
\[\omega_3=F_1(u_1)u_2^{-\epsilon_1},\]
where $F_1$ is an arbitrary function. Substituting  such $\omega_3$ in the subsystem \eqref{subs2} we get
\[\omega_2=-\epsilon_1F_1(u_1)u_2^{-(1+\epsilon_1)}u_3+(F'_1(u_1)u_2+F_2(u_1))u_2^{-\epsilon_1},\]
where $F_2$ is a new arbitrary function. Substituting  such $\omega_2$ and $\omega_3$ in the subsystem \eqref{subs3} we get
\[\omega_1=F'_1(u_1)u_2^{-\epsilon_1}u_3-\frac{F'_2(u_1)u_2^{1-\epsilon_1}}{\epsilon_1-1}-\frac{F''_1(u_1)u_2^{2-\epsilon_1}}{\epsilon_1-2}+F_3(u_1),\]
where $F_3$ is a new arbitrary function.  Finally, it is immediate to see that $\omega =dh$
 with
\[h=F_1(u_1)u_2^{-\epsilon_1}u_3-\frac{F'_1(u_1)u_2^{2-\epsilon_1}}{\epsilon_1-2}-\frac{F_2(u_1)u_2^{1-\epsilon_1}}{\epsilon_1-1}+\tilde{F}_3(u_1),\]
where $\tilde{F}'_3=F_3$.
 
\section{Conclusions}
\hspace{-1em}In the present paper we have studied the system (Dubrovin-Novikov-Tsarev system) of first  order PDEs for the pseudo-Riemannian metrics governing the Hamiltonian formalism  for block-diagonal systems of hydrodynamic type defined by a block-diagonal matrix, whose generic block has a lower-triangular Toeplitz form. The diagonal/semisimple setting corresponds to the limiting case of $n$ blocks of  size  $1$. In this case the Dubrovin-Novikov-Tsarev system admits a family of solutions depending on $n$  arbitrary functions of a single variable. The existence of such a family follows  from
 Tsarev's integrability conditions and from a classical theorem of Darboux. Remarkably,
  Tsarev's integrability conditions provide simultaneously the compatibility  conditions for the linear system  for the symmetries and for the linear system  for the densities of conservation laws. 
\newline
\newline
In two previous papers (\cite{LPVG} and \cite{LPVGhodograph}), we started extending Tsarev's theory to the  non-diagonalisable setting. Following \cite{LPR}, we framed the theory in the context of regular F-manifolds, identifying the characteristic velocities of the system with the  components of a  vector field in canonical coordinates. Then we showed that, under mild  technical assumptions, the vector field defining the system selects a unique torsionless connection and the  integrability conditions translate  into a geometric condition relating the curvature of the connection and the structure functions of the product (extending previous results in the semisimple case in a non-trivial way). Moreover, exploiting the theory of  F-manifolds with compatible connection,
 we showed that any symmetry of the system defines a solution by means of the generalised hodograph formula (remarkably this latter result does not rely on the regularity assumption).
\newline
\newline
In the non-diagonalisable case some novelties arise: a first novelty, observed in \cite{LPVGhodograph}, is that the classical theorem of Darboux which ensures the existence of family of solutions and the solvability of the  Cauchy problem  by means of the generalised hodograph formula, cannot be applied in general.
 This requires further conditions leading to the notion of Darboux-Tsarev systems of  hydrodynamic type. A second  novelty is that, in general,  the Dubrovin-Novikov-Tsarev system does not admit solutions. For instance, using a different language, it was observed  in \cite{VF} that for a single Jordan block, the first component of the vector field defining the system should not depend on the ``last''  canonical coordinate. In this paper, studying the case of a single Jordan block of size $3$, we faced an additional condition and we expect that additional conditions appear in  general.
\newline
\newline
The main result of the paper is that, like in the semisimple/diagonal case, Darboux-Tsarev's conditions are not only sufficient for the completeness of the symmetries (as proved in  \cite{LPVG})  but also ensure the existence of family of solutions of the Dubrovin-Novikov-Tsarev system and of the linear system of densities of conservation laws, as  summarised in Table 1. In other words,  Darboux-Tsarev  systems are the non-diagonalisable counterpart, in the regular setting, of Tsarev's semi-Hamiltonian systems.
\begin{table}[h!]
\begin{center}{
\begin{tabular}{ |c|c|  }
\hline
Equations  & Functional parameters\\
\hline
\hline
$d_{\nabla}(X\circ)=0$  &  $n$ of  $1$ variable \\
\hline
$c^s_{kj}\nabla_sdh_i=c^s_{ij}\nabla_s dh_k$  & $n$ of  $1$ variable \\
\hline
$c^r_{hj}\nabla_i\theta_r+c^r_{hi}\nabla_j\theta_r-c^r_{ij}\nabla_h\theta_r=c^r_{ij}c^s_{hr}\nabla_e\theta_s-\theta_r\nabla_hc^r_{ij}$ & $r$ ($1$ var.) $+$ $n-r$ ($2$ vars.)\\
\hline
\end{tabular}
}
\end{center}
\caption{Symmetries, densities of conservation laws, and metrics for  Darboux-Tsarev systems with  $n$ components and  $r$ blocks}
\label{tab:Summary}
\end{table}

\appendix
\section{Symmetry of $\nabla c$}
\hspace{-1em}This appendix is devoted, once known that $(\nabla_jc^i_{ks}-\nabla_kc^i_{js})X^s=0$ for every $i,j,k$, to showing that $\nabla_jc^i_{ks}-\nabla_kc^i_{js}=0$ for every $i,j,k,s$. Let us denote
\begin{equation*}
	A^i_{jks}:=\nabla_jc^i_{ks}-\nabla_kc^i_{js}=\Gamma^i_{jt}c^t_{ks}-\Gamma^t_{js}c^i_{kt}-\Gamma^i_{kt}c^t_{js}+\Gamma^t_{ks}c^i_{jt}.
\end{equation*}	
According to the double-index notation, we have
\begin{align}\label{FormulaM_0}
	A^{i(\alpha)}_{j(\beta)k(\gamma)s(\sigma)}&=\nabla_{j(\beta)}c^{i(\alpha)}_{k(\gamma)s(\sigma)}-\nabla_{k(\gamma)}c^{i(\alpha)}_{j(\beta)s(\sigma)}\\
	&=\Gamma^{i(\alpha)}_{j(\beta)t(\tau)}c^{t(\tau)}_{k(\gamma)s(\sigma)}-\Gamma^{t(\tau)}_{j(\beta)s(\sigma)}c^{i(\alpha)}_{k(\gamma)t(\tau)}-\Gamma^{i(\alpha)}_{k(\gamma)t(\tau)}c^{t(\tau)}_{j(\beta)s(\sigma)}+\Gamma^{t(\tau)}_{k(\gamma)s(\sigma)}c^{i(\alpha)}_{j(\beta)t(\tau)}.\notag
\end{align}
Our assumption reads
\begin{align}\label{SymmNablacHPXs}
	\overset{r}{\underset{\sigma=1}{\sum}}\,\overset{m_\sigma}{\underset{s=1}{\sum}}\,A^{i(\alpha)}_{j(\beta)k(\gamma)s(\sigma)}\,X^{s(\sigma)}=0.
\end{align}
In order to deal with \eqref{FormulaM_0}, we first observe some useful properties.	
\begin{proposition}\label{Prop1}
	Let $\alpha\neq\beta$. Then $d_\nabla V=0$ implies
	\begin{align}\label{EqProp1}
		\Gamma^{i(\alpha)}_{j(\beta)k(\alpha)}=\Gamma^{(i-k+1)(\alpha)}_{j(\beta)1(\alpha)}\,\mathds{1}_{\{i\geq k\}}.
	\end{align}
\end{proposition}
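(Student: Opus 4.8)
The plan is to read the cross--block Christoffel symbols off the defining equation $d_\nabla V=0$ and to exploit the recursive structure it imposes. In David--Hertling coordinates one has $V^{i(\alpha)}_{j(\beta)}=\delta_{\alpha\beta}X^{(i-j+1)(\alpha)}$ (with the convention $X^{m(\alpha)}=0$ for $m\notin\{1,\dots,m_\alpha\}$), and the component of $\nabla_{l}V^{i}_{j}=\nabla_{j}V^{i}_{l}$ with upper index in block $\alpha$ and form indices $k(\alpha)$, $j(\beta)$, $\beta\neq\alpha$, reduces — after the two terms quadratic in $V$ inside block $\alpha$ cancel by torsionlessness — to exactly the recursion \eqref{Chr_b}. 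Writing $\Gamma^{i(\alpha)}_{k(\alpha)s(\beta)}=\Gamma^{i(\alpha)}_{s(\beta)k(\alpha)}$, this expresses $\Gamma^{i(\alpha)}_{j(\beta)k(\alpha)}$ (times the nonzero factor $X^{1(\alpha)}-X^{1(\beta)}$) in terms of $\partial_{j(\beta)}X^{(i-k+1)(\alpha)}$, of the symbols $\Gamma^{i(\alpha)}_{j(\beta)s(\alpha)}$ with $s>k$, and of the symbols $\Gamma^{i(\alpha)}_{s(\beta)k(\alpha)}$ with $s>j$. Thus everything is governed by a double downward recursion, outer on the same--block index $k$ (starting from $m_\alpha$) and inner on the other--block index $j$ (starting from $m_\beta$), and the whole proof is a verification against this recursion.

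First I would prove the vanishing half, $\Gamma^{i(\alpha)}_{j(\beta)k(\alpha)}=0$ for $i<k$, by this double induction. When $i<k$ the inhomogeneous term $\partial_{j(\beta)}X^{(i-k+1)(\alpha)}$ vanishes because $i-k+1\leq 0$; in the first sum every $\Gamma^{i(\alpha)}_{j(\beta)s(\alpha)}$ with $s>k>i$ vanishes by the outer inductive hypothesis; and in the second sum every $\Gamma^{i(\alpha)}_{s(\beta)k(\alpha)}$ with $s>j$ vanishes by the inner inductive hypothesis (same $k$, still $i<k$). Hence the whole right--hand side is zero and, dividing by $X^{1(\alpha)}-X^{1(\beta)}$, so is $\Gamma^{i(\alpha)}_{j(\beta)k(\alpha)}$.

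Next I would establish the single--step shift $\Gamma^{i(\alpha)}_{j(\beta)k(\alpha)}=\Gamma^{(i-1)(\alpha)}_{j(\beta)(k-1)(\alpha)}$ for $k\geq2$ and all $i$, again by the same double induction; iterating it down to $k=1$ and combining with the vanishing half then yields \eqref{EqProp1}. The idea is to compare the recursion for $\Gamma^{i(\alpha)}_{j(\beta)k(\alpha)}$ with the recursion for $\Gamma^{(i-1)(\alpha)}_{j(\beta)(k-1)(\alpha)}$. Their inhomogeneous terms coincide since $(i-1)-(k-1)+1=i-k+1$; the second sums coincide after applying the inner inductive hypothesis (the shift at level $k$ for the larger other--block indices $s>j$); and the first sum of the former, after applying the outer inductive hypothesis $\Gamma^{i(\alpha)}_{j(\beta)s(\alpha)}=\Gamma^{(i-1)(\alpha)}_{j(\beta)(s-1)(\alpha)}$ for $s>k$ and reindexing $s\mapsto s-1$, matches the first sum of the latter up to a single boundary term $\Gamma^{(i-1)(\alpha)}_{j(\beta)m_\alpha(\alpha)}X^{(m_\alpha-k+2)(\alpha)}$.

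The main obstacle is precisely this leftover boundary term: the two recursions are not term--by--term identical, and one must see that the discrepancy vanishes. Here I would use that $i\leq m_\alpha$, so the offending symbol $\Gamma^{(i-1)(\alpha)}_{j(\beta)m_\alpha(\alpha)}$ has upper index $i-1<m_\alpha$ strictly less than its same--block lower index $m_\alpha$, and is therefore zero by the already--proved vanishing half (applied at the top level $k=m_\alpha$). With this term removed the two right--hand sides agree, so dividing by $X^{1(\alpha)}-X^{1(\beta)}\neq0$ gives the single--step shift and closes the induction. I expect the bookkeeping of the index ranges in the sums, and the correct ordering ``vanishing first, shift second'', to be the only delicate points; everything else is mechanical substitution into \eqref{Chr_b}.
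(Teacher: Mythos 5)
Your proof is correct and follows essentially the same route as the paper's: both halves are read off the recursion \eqref{Chr_b}, with the vanishing for $i<k$ established first by the same double downward induction on $k$ and $j$, and the shift for $i\geq k$ then extracted from the same formula. The only real difference is cosmetic: you iterate a single-step shift $\Gamma^{i(\alpha)}_{j(\beta)k(\alpha)}=\Gamma^{(i-1)(\alpha)}_{j(\beta)(k-1)(\alpha)}$, killing the leftover boundary term $\Gamma^{(i-1)(\alpha)}_{j(\beta)m_\alpha(\alpha)}$ with the vanishing half (which is exactly the right move), whereas the paper compares $\Gamma^{i(\alpha)}_{j(\beta)k(\alpha)}$ directly with $\Gamma^{(i-k+1)(\alpha)}_{j(\beta)1(\alpha)}$, truncates both sums using the vanishing half, and inducts on $i-k$ increasing and $j$ decreasing.
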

\begin{proof}
	Let us first consider $i<k$. By Remark \ref{ChristoffelExplicit}, we have
	\begin{align*}
		\Gamma^{i(\alpha)}_{j(\beta)k(\alpha)}&=-\frac{1}{X^{1(\alpha)}-X^{1(\beta)}}\Bigg(\overset{m_\alpha}{\underset{s=k+1}{\sum}}\Gamma^{i(\alpha)}_{j(\beta)s(\alpha)}X^{(s-k+1)(\alpha)}-\overset{m_\beta}{\underset{s=j+1}{\sum}}\Gamma^{i(\alpha)}_{k(\alpha)s(\beta)}X^{(s-j+1)(\beta)}\Bigg)
	\end{align*}
	which, by an inductive argument over $k$ (decreasing, starting from $k=m_\alpha$) and $j$ (decreasing, starting from $j=m_\beta$), implies $\Gamma^{i(\alpha)}_{j(\beta)k(\alpha)}=0$ for all $j\in\{1,\dots,m_\beta\}$ and $i,k\in\{1,\dots,m_\alpha\}$ such that $i<k$. Let us now consider $i\geq k$. By Remark \ref{ChristoffelExplicit}, we have
	\begin{align*}
		\Gamma^{i(\alpha)}_{j(\beta)k(\alpha)}-\Gamma^{(i-k+1)(\alpha)}_{j(\beta)1(\alpha)}&=-\frac{1}{X^{1(\alpha)}-X^{1(\beta)}}\Bigg(\partial_{j(\beta)}X^{(i-k+1)(\alpha)}+\overset{m_\alpha}{\underset{s=k+1}{\sum}}\Gamma^{i(\alpha)}_{j(\beta)s(\alpha)}X^{(s-k+1)(\alpha)}\\&-\overset{m_\beta}{\underset{s=j+1}{\sum}}\Gamma^{i(\alpha)}_{k(\alpha)s(\beta)}X^{(s-j+1)(\beta)}-\partial_{j(\beta)}X^{(i-k+1)(\alpha)}\\&-\overset{m_\alpha}{\underset{s=2}{\sum}}\Gamma^{(i-k+1)(\alpha)}_{j(\beta)s(\alpha)}X^{s(\alpha)}+\overset{m_\beta}{\underset{s=j+1}{\sum}}\Gamma^{(i-k+1)(\alpha)}_{1(\alpha)s(\beta)}X^{(s-j+1)(\beta)}\Bigg)\\
		&=-\frac{1}{X^{1(\alpha)}-X^{1(\beta)}}\Bigg(\overset{i}{\underset{s=k+1}{\sum}}\Gamma^{i(\alpha)}_{j(\beta)s(\alpha)}X^{(s-k+1)(\alpha)}-\overset{m_\beta}{\underset{s=j+1}{\sum}}\Gamma^{i(\alpha)}_{k(\alpha)s(\beta)}X^{(s-j+1)(\beta)}\\&-\overset{i-k+1}{\underset{s=2}{\sum}}\Gamma^{(i-k+1)(\alpha)}_{j(\beta)s(\alpha)}X^{s(\alpha)}+\overset{m_\beta}{\underset{s=j+1}{\sum}}\Gamma^{(i-k+1)(\alpha)}_{1(\alpha)s(\beta)}X^{(s-j+1)(\beta)}\Bigg)\\
		&=-\frac{1}{X^{1(\alpha)}-X^{1(\beta)}}\Bigg(\overset{i-k+1}{\underset{s=2}{\sum}}\big(\Gamma^{i(\alpha)}_{j(\beta)(s+k-1)(\alpha)}-\Gamma^{(i-k+1)(\alpha)}_{j(\beta)s(\alpha)}\big)X^{s(\alpha)}\\&-\overset{m_\beta}{\underset{s=j+1}{\sum}}\big(\Gamma^{i(\alpha)}_{k(\alpha)s(\beta)}-\Gamma^{(i-k+1)(\alpha)}_{1(\alpha)s(\beta)}\big)X^{(s-j+1)(\beta)}\Bigg)
	\end{align*}
	which, by an inductive argument over $i-k$ (increasing, starting from $i-k=0$) and $j$ (decreasing, starting from $j=m_\beta$), implies $\Gamma^{i(\alpha)}_{j(\beta)k(\alpha)}-\Gamma^{(i-k+1)(\alpha)}_{j(\beta)1(\alpha)}=0$ for all $j\in\{1,\dots,m_\beta\}$ and $i,k\in\{1,\dots,m_\alpha\}$ such that $i\geq k$.
\end{proof}
\begin{proposition}\label{Prop2}
	Let $\alpha\neq\beta$. Then $d_\nabla V=0$ implies
	\begin{align}\label{EqProp2}
		\Gamma^{i(\alpha)}_{j(\beta)k(\beta)}=\Gamma^{i(\alpha)}_{(j+k-1)(\beta)1(\beta)}\,\mathds{1}_{\{j+k\leq m_\beta+1\}}.
	\end{align}
\end{proposition}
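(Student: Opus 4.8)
The plan is to mirror the proof of Proposition \ref{Prop1}, relying only on the explicit Christoffel formulas of Remark \ref{ChristoffelExplicit} (in particular the recursion \eqref{Chr_c} for $\Gamma^{i(\alpha)}_{k(\beta)j(\beta)}$ with $j\leq k$) together with the torsion-freeness of $\nabla$, which lets me symmetrise the two lower indices and hence assume throughout that the second lower index does not exceed the first. I would split the statement \eqref{EqProp2} into the vanishing regime $j+k>m_\beta+1$ and the genuine-identity regime $j+k\leq m_\beta+1$, exactly as for Proposition \ref{Prop1}, and treat them by induction.

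For the vanishing part I would induct downward on the index-sum $S=j+k$, the base case being $S=2m_\beta$, where $j=k=m_\beta$ makes both sums in \eqref{Chr_c} empty and the symbol is zero. In the recursion \eqref{Chr_c} the second sum ranges over $s\geq j+1$, so each of its symbols has lower-index sum $k+s\geq S+1$ and vanishes by the inductive hypothesis; the first sum ranges over $s\geq k+1$, so only the term $s=k+1$ sits at the same sum $S$, the rest again vanishing by induction. Since the factor $X^{(s-k+1)(\beta)}$ equals $X^{2(\beta)}$ at $s=k+1$, this collapses \eqref{Chr_c} to the single shift relation $\Gamma^{i(\alpha)}_{k(\beta)j(\beta)}=\Gamma^{i(\alpha)}_{(k+1)(\beta)(j-1)(\beta)}$ among symbols of sum $S$. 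Iterating the shift raises the large lower index; because $S-m_\beta\geq 2$, the chain reaches a symbol whose large index equals $m_\beta$, at which point the first sum of \eqref{Chr_c} is empty while the second sum again sits at sum $S+1$, forcing that symbol, and hence the whole chain, to vanish.

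For the identity \eqref{EqProp2} with $j+k\leq m_\beta+1$, the case of second lower index $1$ is immediate. For second index at least $2$ I would establish the one-step shift $\Gamma^{i(\alpha)}_{k(\beta)j(\beta)}=\Gamma^{i(\alpha)}_{(k+1)(\beta)(j-1)(\beta)}$ and iterate it until the second lower index reaches $1$, producing exactly $\Gamma^{i(\alpha)}_{(j+k-1)(\beta)1(\beta)}$ (the bound $j+k-1\leq m_\beta$ keeps the large index in range). To prove the shift I would write out \eqref{Chr_c} for both $\Gamma^{i(\alpha)}_{k(\beta)j(\beta)}$ and $\Gamma^{i(\alpha)}_{(k+1)(\beta)(j-1)(\beta)}$ (the latter expansion being needed only while its second index still exceeds $1$), truncate the $s$-sums using the vanishing results of the first part, and check that the surviving contributions cancel, the residual being a sum of differences of the same shifted form that vanishes under a secondary induction on the small index (decreasing) and on $k$ (decreasing, from $m_\beta$), in direct analogy with the second half of Proposition \ref{Prop1}.

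The main obstacle is this last shift identity: unlike in the vanishing regime, the two sums in \eqref{Chr_c} are genuinely nonzero, and the two expansions being compared carry different weights (the factors $X^{(s-k+1)(\beta)}$ versus $X^{(s-j+2)(\beta)}$), so matching terms across the reindexing is delicate. I expect that, after truncating via the first part and relabelling the summation variable, every term pairs off except for a remainder of differences of shifted symbols that is killed by the inductive hypothesis; making this cancellation explicit is the one computational step requiring care, while the remaining structure is a transcription of the argument for Proposition \ref{Prop1}.
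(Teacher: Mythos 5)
Your proposal matches the paper's proof in all essentials: the paper likewise works from the recursion \eqref{Chr_c} together with the symmetry of the lower indices, splits into the regime $j+k>m_\beta+1$ (vanishing, proved by induction using that every symbol appearing on the right of \eqref{Chr_c} either has larger index sum or is the single shift term $\Gamma^{i(\alpha)}_{(j-1)(\beta)(k+1)(\beta)}$) and the regime $j+k\leq m_\beta+1$ (where the $s=k+1$ term of the first sum is extracted to give the one-step shift, the two remaining truncated sums cancelling after reindexing once the inductive hypothesis at higher index sums is applied). The only cosmetic differences are that the paper organises the inductions lexicographically over $(k,j)$ and over $(j+k,j)$ rather than over the single index sum $j+k$, and that in the identity regime it expands only $\Gamma^{i(\alpha)}_{j(\beta)k(\beta)}$ instead of subtracting two expansions, which sidesteps the special treatment your chain of differences requires when the small index reaches $2$.
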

\begin{proof}
	Since \eqref{EqProp2} is trivial for $j=1$ or $k=1$, we assume $j,k\geq2$. Without loss of generality, $j\leq k$. Let us first consider $j+k>m_\beta+1$. By Remark \ref{ChristoffelExplicit}, we have
	\begin{align*}
		\Gamma^{i(\alpha)}_{j(\beta)k(\beta)}&=\frac{1}{X^{2(\beta)}}\Bigg(\overset{m_\beta}{\underset{s=k+1}{\sum}}\Gamma^{i(\alpha)}_{(j-1)(\beta)s(\beta)}X^{(s-k+1)(\beta)}-\overset{m_\beta}{\underset{s=j+1}{\sum}}\Gamma^{i(\alpha)}_{k(\beta)s(\beta)}X^{(s-j+2)(\beta)}\Bigg)
	\end{align*}
	which, by an inductive argument over $k$ (decreasing, starting from $k=m_\beta$) and $j$ (decreasing, starting from $j=k$), implies $\Gamma^{i(\alpha)}_{j(\beta)k(\beta)}=0$ for all $i\in\{1,\dots,m_\alpha\}$ and $j,k\in\{1,\dots,m_\beta\}$ such that $j+k>m_\beta+1$. Let us now consider $j+k\leq m_\beta+1$ (in particular, $2\leq j\leq k\leq m_\beta-1$). By Remark \ref{ChristoffelExplicit}, we have
	\begin{align*}
		\Gamma^{i(\alpha)}_{j(\beta)k(\beta)}&=\frac{1}{X^{2(\beta)}}\Bigg(\overset{m_\beta-j+2}{\underset{s=k+1}{\sum}}\Gamma^{i(\alpha)}_{(j-1)(\beta)s(\beta)}X^{(s-k+1)(\beta)}-\overset{m_\beta-k+1}{\underset{s=j+1}{\sum}}\Gamma^{i(\alpha)}_{k(\beta)s(\beta)}X^{(s-j+2)(\beta)}\Bigg)\\
		&=\Gamma^{i(\alpha)}_{(j-1)(\beta)(k+1)(\beta)}+\frac{1}{X^{2(\beta)}}\Bigg(\overset{m_\beta-j+2}{\underset{s=k+2}{\sum}}\Gamma^{i(\alpha)}_{(j-1)(\beta)s(\beta)}X^{(s-k+1)(\beta)}-\overset{m_\beta-k+1}{\underset{s=j+1}{\sum}}\Gamma^{i(\alpha)}_{k(\beta)s(\beta)}X^{(s-j+2)(\beta)}\Bigg)
	\end{align*}
	which, by an inductive argument over $j+k$ (decreasing, starting from ${j+k=m_\beta+1}$) and $j$ (increasing, starting from $j=2$), implies $\Gamma^{i(\alpha)}_{j(\beta)k(\beta)}=\Gamma^{i(\alpha)}_{(j+k-1)(\beta)1(\beta)}$ for all $i\in\{1,\dots,m_\alpha\}$ and ${j,k\in\{1,\dots,m_\beta\}}$ such that $j+k\leq m_\beta+1$.
\end{proof}
\begin{remark}\label{Remsgeq2}
	As a consequence of Propositions \ref{Prop1} and \ref{Prop2}, we have
	\begin{equation*}
		A^{i(\alpha)}_{j(\beta)k(\gamma)1(\sigma)}=0
	\end{equation*}
	for all suitable indices. In fact,
	\begin{align*}
		A^{i(\alpha)}_{j(\beta)k(\gamma)1(\sigma)}&=\Gamma^{i(\alpha)}_{j(\beta)t(\tau)}c^{t(\tau)}_{k(\gamma)1(\sigma)}-\Gamma^{t(\tau)}_{j(\beta)1(\sigma)}c^{i(\alpha)}_{k(\gamma)t(\tau)}-\Gamma^{i(\alpha)}_{k(\gamma)t(\tau)}c^{t(\tau)}_{j(\beta)1(\sigma)}+\Gamma^{t(\tau)}_{k(\gamma)1(\sigma)}c^{i(\alpha)}_{j(\beta)t(\tau)}\\
		&=\Gamma^{i(\alpha)}_{j(\beta)k(\gamma)}(\delta_{\gamma\sigma}-\delta_{\beta\sigma})-\Gamma^{(i-k+1)(\alpha)}_{j(\beta)1(\sigma)}\delta^\alpha_\gamma+\Gamma^{(i-j+1)(\alpha)}_{k(\gamma)1(\sigma)}\delta^\alpha_\beta
	\end{align*}
	which for $\beta=\gamma$ becomes
	\begin{align*}
		A^{i(\alpha)}_{j(\beta)k(\beta)1(\sigma)}&=\delta^\alpha_\beta\big(\Gamma^{(i-j+1)(\alpha)}_{k(\alpha)1(\sigma)}-\Gamma^{(i-k+1)(\alpha)}_{j(\alpha)1(\sigma)}\big),
	\end{align*}
	vanishing by \eqref{EqProp1} when $\sigma\neq\alpha$ and then by flatness of the unit when $\sigma=\alpha$, while for $\beta\neq\gamma$ becomes
	\begin{align*}
		A^{i(\alpha)}_{j(\beta)k(\gamma)1(\sigma)}&=\Gamma^{i(\alpha)}_{j(\beta)k(\gamma)}(\delta_{\gamma\sigma}-\delta_{\beta\sigma})-\Gamma^{(i-k+1)(\alpha)}_{j(\beta)1(\sigma)}\delta^\alpha_\gamma+\Gamma^{(i-j+1)(\alpha)}_{k(\gamma)1(\sigma)}\delta^\alpha_\beta,
	\end{align*}
	vanishing trivially when $\alpha,\beta,\gamma$ are pairwise distinct and reading, when $\alpha=\gamma\neq\beta$ (covering also the case where $\alpha=\beta\neq\gamma$), 
	\begin{align*}
		A^{i(\alpha)}_{j(\beta)k(\alpha)1(\sigma)}&=\Gamma^{i(\alpha)}_{j(\beta)k(\alpha)}(\delta_{\alpha\sigma}-\delta_{\beta\sigma})-\Gamma^{(i-k+1)(\alpha)}_{j(\beta)1(\sigma)},
	\end{align*}
	vanishing trivially when $\sigma\neq\alpha\neq\beta\neq\sigma$ and by \eqref{EqProp1} when either $\sigma=\alpha\neq\beta$ or $\sigma=\beta\neq\alpha$.		
\end{remark}
Now, we possess the tools to tackle \eqref{FormulaM_0}. In light of Remark \ref{Remsgeq2}, without loss of generality, we consider $s\geq2$. We have
\begin{align*}
	A^{i(\alpha)}_{j(\beta)k(\gamma)s(\sigma)}&=\Gamma^{i(\alpha)}_{j(\beta)t(\tau)}c^{t(\tau)}_{k(\gamma)s(\sigma)}-\Gamma^{t(\tau)}_{j(\beta)s(\sigma)}c^{i(\alpha)}_{k(\gamma)t(\tau)}-\Gamma^{i(\alpha)}_{k(\gamma)t(\tau)}c^{t(\tau)}_{j(\beta)s(\sigma)}+\Gamma^{t(\tau)}_{k(\gamma)s(\sigma)}c^{i(\alpha)}_{j(\beta)t(\tau)}
\end{align*}
which trivially vanishes when both $\alpha$ and $\sigma$ are different from both $j$ and $k$. We must then consider the following cases:
\begin{itemize}
	\item[a.] $\alpha=\beta=\gamma=\sigma$
	\item[b.] $\alpha=\sigma=\gamma\neq\beta$ (covering $\alpha=\sigma=\beta\neq\gamma$ as well)
	\item[c.] $\alpha\neq\beta=\gamma=\sigma$
	\item[d.] $\alpha=\beta=\gamma\neq\sigma$
	\item[e.] $\alpha=\beta\neq\gamma=\sigma$ (covers $\alpha=\gamma\neq\beta=\sigma$ as well)
	\item[f.] $\sigma\neq\alpha=\beta\neq\gamma\neq\sigma$ (covering $\sigma\neq\alpha=\gamma\neq\beta\neq\sigma$ as well)
	\item[g.] $\alpha\neq\sigma=\gamma\neq\beta\neq\alpha$ (covering $\alpha\neq\sigma=\beta\neq\gamma\neq\alpha$ as well).
\end{itemize}
\textbf{Case b}: $\alpha=\sigma=\gamma\neq\beta$.
\begin{align*}
	A^{i(\alpha)}_{j(\beta)k(\alpha)s(\alpha)}&=\Gamma^{i(\alpha)}_{j(\beta)t(\alpha)}c^{t(\alpha)}_{k(\alpha)s(\alpha)}-\Gamma^{t(\alpha)}_{j(\beta)s(\alpha)}c^{i(\alpha)}_{k(\alpha)t(\alpha)}=\Gamma^{i(\alpha)}_{j(\beta)(k+s-1)(\alpha)}-\Gamma^{(i-k+1)(\alpha)}_{j(\beta)s(\alpha)}\overset{\eqref{EqProp1}}{=}0.
\end{align*}
\textbf{Case c}: $\alpha\neq\beta=\gamma=\sigma$.
\begin{align*}
	A^{i(\alpha)}_{j(\beta)k(\beta)s(\beta)}&=\Gamma^{i(\alpha)}_{j(\beta)t(\beta)}c^{t(\beta)}_{k(\beta)s(\beta)}-\Gamma^{i(\alpha)}_{k(\beta)t(\beta)}c^{t(\beta)}_{j(\beta)s(\beta)}=\Gamma^{i(\alpha)}_{j(\beta)(k+s-1)(\beta)}-\Gamma^{i(\alpha)}_{k(\beta)(j+s-1)(\beta)}\overset{\eqref{EqProp2}}{=}0.
\end{align*}
\textbf{Case d}: $\alpha=\beta=\gamma\neq\sigma$.
\begin{align*}
	A^{i(\alpha)}_{j(\alpha)k(\alpha)s(\sigma)}&=-\Gamma^{t(\alpha)}_{j(\alpha)s(\sigma)}c^{i(\alpha)}_{k(\alpha)t(\alpha)}+\Gamma^{t(\alpha)}_{k(\alpha)s(\sigma)}c^{i(\alpha)}_{j(\alpha)t(\alpha)}=-\Gamma^{(i-k+1)(\alpha)}_{j(\alpha)s(\sigma)}+\Gamma^{(i-j+1)(\alpha)}_{k(\alpha)s(\sigma)}\overset{\eqref{EqProp1}}{=}0.
\end{align*}
\textbf{Case e}: $\alpha=\beta\neq\gamma=\sigma$.
\begin{align*}
	A^{i(\alpha)}_{j(\alpha)k(\gamma)s(\gamma)}&=\Gamma^{i(\alpha)}_{j(\alpha)t(\gamma)}c^{t(\gamma)}_{k(\gamma)s(\gamma)}+\Gamma^{t(\alpha)}_{k(\gamma)s(\gamma)}c^{i(\alpha)}_{j(\alpha)t(\alpha)}=\Gamma^{i(\alpha)}_{j(\alpha)(k+s-1)(\gamma)}+\Gamma^{(i-j+1)(\alpha)}_{k(\gamma)s(\gamma)}\\
	&\overset{\eqref{EqProp1}}{\underset{\eqref{EqProp2}}{=}}\Gamma^{(i-j+1)(\alpha)}_{1(\alpha)(k+s-1)(\gamma)}+\Gamma^{(i-j+1)(\alpha)}_{1(\gamma)(k+s-1)(\gamma)}
\end{align*}
which vanishes by means of flatness of the unit.
\\\textbf{Case f}: $\sigma\neq\alpha=\beta\neq\gamma\neq\sigma$.
\begin{align*}
	A^{i(\alpha)}_{j(\alpha)k(\gamma)s(\sigma)}&=\Gamma^{t(\alpha)}_{k(\gamma)s(\sigma)}c^{i(\alpha)}_{j(\alpha)t(\alpha)}
\end{align*}
which trivially vanishes as the blocks appearing in $\Gamma^{t(\alpha)}_{k(\gamma)s(\sigma)}$ are pairwise distinct.
\\\textbf{Case g}: $\alpha\neq\sigma=\gamma\neq\beta\neq\alpha$.
\begin{align*}
	A^{i(\alpha)}_{j(\beta)k(\gamma)s(\gamma)}&=\Gamma^{i(\alpha)}_{j(\beta)t(\gamma)}c^{t(\gamma)}_{k(\gamma)s(\gamma)}
\end{align*}
which trivially vanishes as the blocks appearing in $\Gamma^{i(\alpha)}_{j(\beta)t(\gamma)}$ are pairwise distinct.
\\In order to deal with \emph{case a}, we need additional results. By choosing $\alpha=\beta=\gamma$ in \eqref{SymmNablacHPXs}, we get
\begin{align*}
	&\overset{r}{\underset{\sigma=1}{\sum}}\,\overset{m_\sigma}{\underset{s=1}{\sum}}\,A^{i(\alpha)}_{j(\alpha)k(\alpha)s(\sigma)}\,X^{s(\sigma)}=0
\end{align*}
where, due to the above \emph{cases b-g} and Remark \ref{Remsgeq2}, only $\sigma=\alpha$ and $s\geq2$ survive, yielding
\begin{align}\label{SymmNablacHPXs_bis}
	&\overset{m_\alpha}{\underset{s=2}{\sum}}\,A^{i(\alpha)}_{j(\alpha)k(\alpha)s(\alpha)}\,X^{s(\alpha)}=0.
\end{align}

\begin{proposition}\label{Prop5}
	The condition \eqref{SymmNablacHPXs} implies
	\begin{align}\label{EqProp5}
		\Gamma^{i(\alpha)}_{j(\alpha)(k+1)(\alpha)}-\Gamma^{i(\alpha)}_{(j+1)(\alpha)k(\alpha)}=\Gamma^{(i-k+1)(\alpha)}_{2(\alpha)j(\alpha)}-\Gamma^{(i-j+1)(\alpha)}_{2(\alpha)k(\alpha)},\qquad j,k\in\{2,\dots,m_\alpha\}.
	\end{align}
\end{proposition}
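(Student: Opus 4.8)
The plan is to read \eqref{EqProp5} as the vanishing of one component of the tensor $A$ defined in \eqref{FormulaM_0}, and then to extract that component from the weighted hypothesis \eqref{SymmNablacHPXs_bis}. Writing all indices in a single block $\alpha$ and using that $\nabla$ is torsionless (so $\Gamma$ is symmetric in its two lower indices), one checks directly that
\[
A^{i(\alpha)}_{j(\alpha)k(\alpha)2(\alpha)}=\Gamma^{i(\alpha)}_{j(\alpha)(k+1)(\alpha)}-\Gamma^{i(\alpha)}_{(j+1)(\alpha)k(\alpha)}-\Gamma^{(i-k+1)(\alpha)}_{2(\alpha)j(\alpha)}+\Gamma^{(i-j+1)(\alpha)}_{2(\alpha)k(\alpha)},
\]
so that \eqref{EqProp5} is exactly the statement $A^{i(\alpha)}_{j(\alpha)k(\alpha)2(\alpha)}=0$ for $j,k\in\{2,\dots,m_\alpha\}$. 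Since $A$ is antisymmetric in $(j,k)$, the case $j=k$ is trivial and I may assume $j<k$; the whole problem is then to isolate the $s=2$ summand of the vanishing sum $\sum_{s=2}^{m_\alpha}A^{i(\alpha)}_{j(\alpha)k(\alpha)s(\alpha)}X^{s(\alpha)}=0$.

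First I would expand each $A^{i(\alpha)}_{j(\alpha)k(\alpha)s(\alpha)}$ with $c^{i(\alpha)}_{j(\alpha)k(\alpha)}=\delta^i_{j+k-1}$, getting (suppressing block labels) the four-term expression $\Gamma^i_{j,k+s-1}-\Gamma^i_{k,j+s-1}-\Gamma^{i-k+1}_{j,s}+\Gamma^{i-j+1}_{k,s}$. The key observation is that, after the shifts $s\mapsto k+s-1$ and $s\mapsto j+s-1$, the four $X$-weighted sums are precisely the telescoping sums that occur inside the explicit recursion \eqref{Chr_d} of Remark \ref{ChristoffelExplicit} for the same-block symbols $\Gamma^i_{k,j}$ with $j\le k$. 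Substituting \eqref{Chr_d} for $\Gamma^i_{k,j+1}$ and for the two boundary terms $\Gamma^{i-k+1}_{j,2}$, $\Gamma^{i-j+1}_{k,2}$ absorbs the factor $1/X^{2(\alpha)}$ (this is where $X^{2(\alpha)}\neq0$ enters) and forces the derivative contributions $\partial_\bullet X^\bullet$ to cancel pairwise. What remains after this cancellation is a closed identity among the connection coefficients alone, which I would then compare with the expansion of the target $A^{i(\alpha)}_{j(\alpha)k(\alpha)2(\alpha)}$ obtained by substituting \eqref{Chr_d} into each of its four Christoffel symbols: both are reduced to the same residual relation, which proves \eqref{EqProp5}. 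Throughout, the cross-block reductions \eqref{EqProp1} and \eqref{EqProp2} are invoked to discard every contribution that is not genuinely internal to block $\alpha$, and the flat-unit relations \eqref{Chr_nablae_1}, \eqref{Chr_nablae_2} are used to rewrite the terms in which a lower index drops to $1$.

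The delicate point, and the step I expect to be the main obstacle, is the bookkeeping that decouples the $s=2$ summand from the tail $s\ge3$: the hypothesis only gives the vanishing of the \emph{full} weighted sum, so the residual Christoffel identity cannot be read off term by term. I would handle this by a downward induction on $i$, starting at $i=m_\alpha$, where most symbols carrying a lower index exceeding $m_\alpha$ vanish and the sum collapses, and then descending, at each stage feeding in the relations already established for larger $i$. Keeping exact control of the two families of boundary terms — those whose lower index reaches $m_\alpha$ (so the shifted Christoffel symbol drops out) and those whose lower index reaches $1$ (reduced via $\nabla e=0$) — is what makes the telescoping close; once it does, dividing by $X^{2(\alpha)}\neq0$ yields \eqref{EqProp5} for all admissible $i$ and all $j,k\in\{2,\dots,m_\alpha\}$.
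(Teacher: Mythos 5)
You correctly reduce \eqref{EqProp5} to the single statement $A^{i(\alpha)}_{j(\alpha)k(\alpha)2(\alpha)}=0$ and correctly identify the crux: \eqref{SymmNablacHPXs_bis} only gives the vanishing of the full weighted sum $\sum_{s=2}^{m_\alpha}A^{i(\alpha)}_{j(\alpha)k(\alpha)s(\alpha)}X^{s(\alpha)}$, so the tail $s\ge 3$ must be killed before one can divide by $X^{2(\alpha)}\neq 0$. The gap is in how you kill it. Your downward induction on the upper index $i$, with base case $i=m_\alpha$, does not get off the ground: the tail summand
\[
\Gamma^{i(\alpha)}_{j(\alpha)(k+s-1)(\alpha)}-\Gamma^{i(\alpha)}_{k(\alpha)(j+s-1)(\alpha)}-\Gamma^{(i-k+1)(\alpha)}_{j(\alpha)s(\alpha)}+\Gamma^{(i-j+1)(\alpha)}_{k(\alpha)s(\alpha)}
\]
loses terms only when a \emph{lower} index $k+s-1$ or $j+s-1$ exceeds $m_\alpha$, a condition independent of $i$, so nothing special happens at $i=m_\alpha$ (take $m_\alpha=6$, $i=6$, $j=2$, $k=3$, $s=3$: all four symbols are a priori nonzero, and in the appendix setting no Darboux--Tsarev vanishing such as Lemma \ref{lemma:old4} is available). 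The descending step is also structurally misaligned: the tail at stage $i$ involves upper indices $i$, $i-k+1$, $i-j+1$, all $\le i$, whereas the instances of \eqref{EqProp5} established at earlier stages constrain symbols with upper index $>i$ (or with a lower index equal to $2$), so they give no leverage on these terms.

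The paper instead runs a decreasing induction on $j+k$: at $j+k=2m_\alpha$ one has $j=k=m_\alpha$ and everything vanishes by the antisymmetry of $A^{i(\alpha)}_{j(\alpha)k(\alpha)s(\alpha)}$ in $(j,k)$, while in the inductive step the tail summand with $s\ge 3$ is telescoped to zero (as in Corollary \ref{Cor5}) using instances of \eqref{EqProp5} attached to lower-index pairs such as $(j,k+s-2)$ and $(j+1,k)$ of strictly larger sum. This is the pivot your argument needs. Separately, the detour through the explicit recursion \eqref{Chr_d} is neither needed nor substantiated: that recursion is the definition of $\nabla$ and has already been fully exploited in Propositions \ref{Prop1} and \ref{Prop2}; reinjecting it here reintroduces the $\partial X$ terms without any mechanism forcing them to cancel, and the claim that both sides ``reduce to the same residual relation'' is exactly what would have to be proved.
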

\begin{proof}
	The condition \eqref{SymmNablacHPXs_bis} reads
	\begin{align*}
		0&=\overset{m_\alpha}{\underset{s=2}{\sum}}\,A^{i(\alpha)}_{j(\alpha)k(\alpha)s(\alpha)}\,X^{s(\alpha)}\\
		&=\overset{m_\alpha}{\underset{s=2}{\sum}}\,\big(\Gamma^{i(\alpha)}_{j(\alpha)(k+s-1)(\alpha)}-\Gamma^{(i-k+1)(\alpha)}_{j(\alpha)s(\alpha)}-\Gamma^{i(\alpha)}_{k(\alpha)(j+s-1)(\alpha)}+\Gamma^{(i-j+1)(\alpha)}_{k(\alpha)s(\alpha)}\big)\,X^{s(\alpha)}\\
		&=\big(\Gamma^{i(\alpha)}_{j(\alpha)(k+1)(\alpha)}-\Gamma^{(i-k+1)(\alpha)}_{j(\alpha)2(\alpha)}-\Gamma^{i(\alpha)}_{k(\alpha)(j+1)(\alpha)}+\Gamma^{(i-j+1)(\alpha)}_{k(\alpha)2(\alpha)}\big)\,X^{2(\alpha)}\\
		&+\overset{m_\alpha}{\underset{s=3}{\sum}}\,\big(\Gamma^{i(\alpha)}_{j(\alpha)(k+s-1)(\alpha)}-\Gamma^{(i-k+1)(\alpha)}_{j(\alpha)s(\alpha)}-\Gamma^{i(\alpha)}_{k(\alpha)(j+s-1)(\alpha)}+\Gamma^{(i-j+1)(\alpha)}_{k(\alpha)s(\alpha)}\big)\,X^{s(\alpha)}
	\end{align*}
	which yields $\Gamma^{i(\alpha)}_{j(\alpha)(k+1)(\alpha)}-\Gamma^{(i-k+1)(\alpha)}_{j(\alpha)2(\alpha)}-\Gamma^{i(\alpha)}_{k(\alpha)(j+1)(\alpha)}+\Gamma^{(i-j+1)(\alpha)}_{k(\alpha)2(\alpha)}=0$ for $j+k=2m_\alpha$, as $X^{2(\alpha)}\neq0$, and \eqref{EqProp5}, by an inductive argument over $j+k$.
\end{proof}
\begin{corollary}\label{Cor5}
	The condition \eqref{EqProp5} implies
	\begin{align}\label{EqCor5}
		\Gamma^{i(\alpha)}_{j(\alpha)(k+s-1)(\alpha)}-\Gamma^{i(\alpha)}_{(j+s-1)(\alpha)k(\alpha)}=\Gamma^{(i-k+1)(\alpha)}_{s(\alpha)j(\alpha)}-\Gamma^{(i-j+1)(\alpha)}_{s(\alpha)k(\alpha)},\qquad j,k,s\in\{2,\dots,m_\alpha\}.
	\end{align}
\end{corollary}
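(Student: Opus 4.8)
The plan is to prove \eqref{EqCor5} by induction on $s$, the base case $s=2$ being exactly \eqref{EqProp5}. Throughout I would work inside the single block $\alpha$ and abbreviate $\Gamma^i_{jk}:=\Gamma^{i(\alpha)}_{j(\alpha)k(\alpha)}$, so that \eqref{EqProp5} reads
\[\Gamma^i_{j,k+1}-\Gamma^i_{j+1,k}=\Gamma^{i-k+1}_{2,j}-\Gamma^{i-j+1}_{2,k}\qquad(\ast)\]
and the target \eqref{EqCor5} becomes $\Gamma^i_{j,k+s-1}-\Gamma^i_{j+s-1,k}=\Gamma^{i-k+1}_{s,j}-\Gamma^{i-j+1}_{s,k}$, which I call $(\ast\ast)_s$. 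The only inputs are $(\ast)$, the torsionlessness of $\nabla$ (so that $\Gamma^i_{jk}=\Gamma^i_{kj}$), and the convention that a Christoffel symbol with an index outside $\{1,\dots,m_\alpha\}$ vanishes.

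For the inductive step I would first rewrite the left-hand side of $(\ast\ast)_{s+1}$ by inserting a telescoping term,
\[\Gamma^i_{j,k+s}-\Gamma^i_{j+s,k}=\big(\Gamma^i_{j,k+s}-\Gamma^i_{j+1,k+s-1}\big)+\big(\Gamma^i_{j+1,k+s-1}-\Gamma^i_{j+s,k}\big),\]
where the first bracket is evaluated by $(\ast)$ (with $k\mapsto k+s-1$) and the second is precisely $(\ast\ast)_s$ with $j\mapsto j+1$, so the inductive hypothesis applies. Simultaneously I would expand the intended right-hand side $\Gamma^{i-k+1}_{s+1,j}-\Gamma^{i-j+1}_{s+1,k}$ by applying $(\ast)$ to each summand in order to lower the index $s+1$ to $s$; the two resulting $\Gamma_{2,s}$-type terms share the upper index $i-j-k+2$ and therefore cancel.

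Subtracting the two expansions and cancelling the common terms, the claim $(\ast\ast)_{s+1}$ reduces to the single scalar identity (writing $I=i-j$)
\[\Gamma^{I+1}_{s,k+1}-\Gamma^{I+1}_{2,k+s-1}-\Gamma^{I}_{s,k}+\Gamma^{I-s+2}_{2,k}=0\qquad(\dagger).\]
To dispose of $(\dagger)$ I would substitute $\Gamma^{I+1}_{2,k+s-1}=\Gamma^{I+1}_{s+1,k}+\Gamma^{I-k+2}_{s,2}-\Gamma^{I}_{s,k}$, obtained from $(\ast\ast)_s$ with $(j,k)\mapsto(2,k)$, and then $\Gamma^{I+1}_{s,k+1}-\Gamma^{I+1}_{s+1,k}=\Gamma^{I-k+2}_{2,s}-\Gamma^{I-s+2}_{2,k}$, obtained from $(\ast)$ with $(j,k)\mapsto(s,k)$. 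After these substitutions $(\dagger)$ collapses to $\Gamma^{I-k+2}_{2,s}-\Gamma^{I-k+2}_{s,2}$, which vanishes by the symmetry of $\Gamma$ in its two lower indices. This closes the induction and yields \eqref{EqCor5}.

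The delicate part --- and the main obstacle --- is the index bookkeeping: one must verify that every application of $(\ast)$ and of the inductive hypothesis keeps the lower indices in the admissible range $\{2,\dots,m_\alpha\}$, and that the numerous shifts line up so that all terms other than those of $(\dagger)$ cancel. Where an index leaves $\{1,\dots,m_\alpha\}$ the corresponding symbol vanishes, so I would separately check that these boundary instances of $(\ast\ast)_{s+1}$ remain consistent, with both sides reducing to sums of vanishing symbols; granting this, the purely formal manipulation above is valid.
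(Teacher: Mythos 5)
Your proposal is correct and follows essentially the same route as the paper: induction on $s$ with base case \eqref{EqProp5}, the same telescoping insertion of $\Gamma^{i(\alpha)}_{(j+1)(\alpha)(k+s-1)(\alpha)}$ so that one bracket is handled by \eqref{EqProp5} and the other by the inductive hypothesis with $j\mapsto j+1$, and a residual identity disposed of via the inductive hypothesis at $j=2$, another application of \eqref{EqProp5}, and the symmetry of the Christoffel symbols in their lower indices. The paper likewise treats out-of-range indices only implicitly (symbols with indices outside $\{1,\dots,m_\alpha\}$ vanish), so your deferred boundary check matches the level of detail of the original argument.
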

\begin{proof}
	For $s=2$, \eqref{EqCor5} trivially reduces to \eqref{EqProp5}. Let us fix $s\in\{2,\dots,m_\alpha\}$, and inductively assume that
	\begin{align}\label{EqCor5_indhp}
		\Gamma^{i(\alpha)}_{j(\alpha)(k+t-1)(\alpha)}-\Gamma^{i(\alpha)}_{(j+t-1)(\alpha)k(\alpha)}=\Gamma^{(i-k+1)(\alpha)}_{t(\alpha)j(\alpha)}-\Gamma^{(i-j+1)(\alpha)}_{t(\alpha)k(\alpha)},\, j,k\in\{2,\dots,m_\alpha\},\,t\in\{2,\dots,s-1\}.
	\end{align}
	By adding and subtracting the term $\Gamma^{i(\alpha)}_{(j+1)(\alpha)(k+s-2)(\alpha)}$, we have
	\begin{align*}
		\Gamma^{i(\alpha)}_{j(\alpha)(k+s-1)(\alpha)}-\Gamma^{i(\alpha)}_{(j+s-1)(\alpha)k(\alpha)}&=\underbrace{\Gamma^{i(\alpha)}_{j(\alpha)(k+s-1)(\alpha)}-\Gamma^{i(\alpha)}_{(j+1)(\alpha)(k+s-2)(\alpha)}}_{\text{Use \eqref{EqProp5} replacing $(i,j,k)$ with $(i,j,k+s-2)$}}\\&+\underbrace{\Gamma^{i(\alpha)}_{(j+1)(\alpha)(k+s-2)(\alpha)}-\Gamma^{i(\alpha)}_{(j+s-1)(\alpha)k(\alpha)}}_{\text{Use \eqref{EqCor5_indhp} replacing $(i,j,k,t)$ with $(i,j+1,k,s-1)$}}\\
		&=\Gamma^{(i-k-s+3)(\alpha)}_{2(\alpha)j(\alpha)}\underbrace{-\Gamma^{(i-j+1)(\alpha)}_{2(\alpha)(k+s-2)(\alpha)}}_{\substack{\text{Use \eqref{EqCor5_indhp} replacing $(i,j,k,t)$}\\\text{with $(i-j+1,2,k,s-1)$}}}\\
		&+\underbrace{\Gamma^{(i-k+1)(\alpha)}_{(s-1)(\alpha)(j+1)(\alpha)}}_{\substack{\text{Use \eqref{EqProp5} replacing $(i,j,k)$}\\\text{with $(i-k+1,j,s-1)$}}}-\Gamma^{(i-j)(\alpha)}_{(s-1)(\alpha)k(\alpha)}\\
		&=\Gamma^{(i-k-s+3)(\alpha)}_{2(\alpha)j(\alpha)}-\Gamma^{(i-j+1)(\alpha)}_{s(\alpha)k(\alpha)}-\Gamma^{(i-j-k+2)(\alpha)}_{(s-1)(\alpha)2(\alpha)}+\Gamma^{(i-j)(\alpha)}_{(s-1)(\alpha)k(\alpha)}\\
		&+\Gamma^{(i-k+1)(\alpha)}_{j(\alpha)s(\alpha)}-\Gamma^{(i-k-s+3)(\alpha)}_{2(\alpha)j(\alpha)}+\Gamma^{(i-k-j+2)(\alpha)}_{2(\alpha)(s-1)(\alpha)}-\Gamma^{(i-j)(\alpha)}_{(s-1)(\alpha)k(\alpha)}\\
		&=\Gamma^{(i-k+1)(\alpha)}_{j(\alpha)s(\alpha)}-\Gamma^{(i-j+1)(\alpha)}_{s(\alpha)k(\alpha)}.
	\end{align*}
\end{proof}
We are finally ready to tackle \emph{case a}.
\\\textbf{Case a}: $\alpha=\beta=\gamma=\sigma$.
\begin{align*}
	A^{i(\alpha)}_{j(\alpha)k(\alpha)s(\alpha)}&=\Gamma^{i(\alpha)}_{j(\alpha)t(\alpha)}c^{t(\alpha)}_{k(\alpha)s(\alpha)}-\Gamma^{t(\alpha)}_{j(\alpha)s(\alpha)}c^{i(\alpha)}_{k(\alpha)t(\alpha)}-\Gamma^{i(\alpha)}_{k(\alpha)t(\alpha)}c^{t(\alpha)}_{j(\alpha)s(\alpha)}+\Gamma^{t(\alpha)}_{k(\alpha)s(\alpha)}c^{i(\alpha)}_{j(\alpha)t(\alpha)}
\end{align*}
which, by flatness of the unit and \eqref{EqProp1}, trivially vanishes whenever $j=1$ or $k=1$. Let us then consider $j,k\geq2$. For each $s\geq2$, we have
\begin{align*}
	A^{i(\alpha)}_{j(\alpha)k(\alpha)s(\alpha)}&=\Gamma^{i(\alpha)}_{j(\alpha)(k+s-1)(\alpha)}-\Gamma^{i(\alpha)}_{k(\alpha)(j+s-1)(\alpha)}-\Gamma^{(i-k+1)(\alpha)}_{j(\alpha)s(\alpha)}+\Gamma^{(i-j+1)(\alpha)}_{k(\alpha)s(\alpha)}\overset{\eqref{EqCor5}}{=}0.
\end{align*}

\end{document}